\theoremstyle{definition}
\newtheorem{definition}{Definition}
\newtheorem{theorem}{Theorem}
\newtheorem{problem}{Problem}
\newtheorem{requirement}{Requirement}
\newtheorem{corollary}{Corollary}
\DeclareMathOperator{\spn}{span}
    \def\CT@@do@color{%
      \global\let\CT@do@color\relax
            \@tempdima\wd\z@
            \advance\@tempdima\@tempdimb
            \advance\@tempdima\@tempdimc
    \advance\@tempdimb\tabcolsep
    \advance\@tempdimc\tabcolsep
    \advance\@tempdima2\tabcolsep
            \kern-\@tempdimb
            \leaders\vrule
    %^^A                     \@height\p@\@depth\p@
                    \hskip\@tempdima\@plus  1fill
            \kern-\@tempdimc
            \hskip-\wd\z@ \@plus -1fill }
\begin{document}

\title{Requirements for building effective Hamiltonians using quantum-enhanced density matrix downfolding}
\author{S. Pathak}
\affiliation{Quantum Algorithms and Applications Collaboratory, Sandia National Laboratories, Albuquerque NM, USA}
\author{A. E. Russo}
\affiliation{Quantum Algorithms and Applications Collaboratory, Sandia National Laboratories, Albuquerque NM, USA}
\author{S. Seritan}
\affiliation{Quantum Algorithms and Applications Collaboratory, Sandia National Laboratories, Livermore CA, USA}
\author{A. B. Magann}
\affiliation{Quantum Algorithms and Applications Collaboratory, Sandia National Laboratories, Albuquerque NM, USA}
\author{E. Bobrow}
\affiliation{Quantum Algorithms and Applications Collaboratory, Sandia National Laboratories, Albuquerque NM, USA}
\author{A. J. Landahl}
\affiliation{Quantum Algorithms and Applications Collaboratory, Sandia National Laboratories, Albuquerque NM, USA}
\author{A. D. Baczewski}
\affiliation{Quantum Algorithms and Applications Collaboratory, Sandia National Laboratories, Albuquerque NM, USA}

\newcommand{\hilbert}{\mathcal{H}}
\newcommand{\ham}{H}
\newcommand{\hameff}{H^\textrm{eff}}
\newcommand{\hilbertset}{\mathbb{H}}
\newcommand{\functional}{\mathcal{F}}
\newcommand{\projector}{P}
\newcommand{\reflector}{R}

\newcommand{\shivesh}[1]{\textcolor{CadetBlue}{SP: #1}}
\newcommand{\andrew}[1]{\textcolor{PineGreen}{ADB: #1}}

\begin{abstract}
Density matrix downfolding (DMD) is a technique for regressing low-energy effective Hamiltonians from quantum many-body Hamiltonians.
One limiting factor in the accuracy of classical implementations of DMD is the presence of difficult-to-quantify systematic errors attendant to sampling the observables of quantum many-body systems on an approximate low-energy subspace.
We propose a hybrid quantum-classical protocol for circumventing this limitation, relying on the prospective ability of quantum computers to efficiently prepare and sample from states in well-defined low-energy subspaces with systematically improvable accuracy.
We introduce three requirements for when this is possible, including a notion of compressibility that quantifies features of Hamiltonians and low-energy subspaces thereof for which quantum DMD might be efficient. 
Assuming that these requirements are met, we analyze design choices for our protocol and provide resource estimates for implementing quantum-enhanced DMD on both the doped 2-D Fermi-Hubbard model and an \emph{ab initio} model of a cuprate superconductor.
\end{abstract}

\maketitle
\tableofcontents

\section{Introduction}

\subsection{Motivation and summary}
\label{sec:motivation_and_summary}

\textit{Ab-initio} Hamiltonians describe quantum many-body systems in terms of degrees of freedom that are both generic and irreducible.
For example, the Born-Oppenheimer Hamiltonian for electronic structure takes the same functional form for any molecule, liquid, or solid~\cite{born1927zur}, and it is expressed in terms of the bare Coulomb interaction among electrons and nuclei for which it is typically unnecessary to consider a more fundamental description.
It is often both possible and useful to derive effective Hamiltonians that efficiently describe the low-energy properties of instances of an \textit{ab-initio} Hamiltonian in terms of degrees of freedom that are neither generic nor irreducible.
The Fermi-Hubbard model exemplifies this~\cite{hubbard1964electron}, where the degrees of freedom are limited to only electrons occupying narrow bands near the chemical potential with effective masses (among other properties) that are renormalized to reflect their coupling to variables that are integrated out.
This approach to modeling facilitates interpretation and reduces the resource requirements for simulation, by merit of the fact that there are fewer variables and their dynamics are restricted to a smaller range of energies.

Thus, effective Hamiltonians are widely studied on classical and quantum computers in contexts ranging from chemistry and materials science~\cite{leblanc2015solutions,qin2022hubbard} to nuclear and high energy physics~\cite{brown1988status}.
In fact, because of their reduced resource requirements, many analog quantum simulation experiments~\cite{cheuk2016observation,mazurenko2017cold,kiczynski2022engineering} and early demonstrations of digital quantum simulation~\cite{salathe2015digital,arute2020observation,nguyen2022digital,kim2023evidence} have targeted effective Hamiltonians.
These demonstrations are of independent scientific interest, while also helping to develop the underlying technologies, and even offering prospective pathways to achieving quantum advantage.
Any pathway to achieving quantum advantage of this sort ultimately relies on the premise that a quantum simulation of an effective Hamiltonian can be carried out with greater accuracy than any feasible classical simulation.
This notion is bolstered by the broad literature analyzing the theoretical performance of quantum simulation algorithms.

That such a simulation could be useful in solving some challenging problem in a domain like materials science elevates quantum advantage to quantum utility.
However, the prospects for realizing quantum utility with simulations of effective Hamiltonians are challenged by the accuracy with which those Hamiltonians represent the physics of the system that they model.
The Fermi-Hubbard model is again exemplary in this regard. 
While its phase diagram is often touted as the key to understanding high-$T_c$ superconductivity, instances with a single orbital per site and nearest-neighbor hopping might not capture the most important phenomenology exhibited by real materials~\cite{jiang2019superconductivity}.
So a quantum computer might achieve an advantage in accurately estimating certain properties of the low-energy eigenstates of such a model, but the model might itself be an inaccurate representation of an actual high-$T_c$ superconductor.
This problem holds for any system described by an effective Hamiltonian-- it is difficult to assess \textit{a priori} whether the effective description captures all of the most interesting properties of the full \textit{ab-initio} Hamiltonian with sufficient accuracy, to say nothing of the physical system itself~\footnote{An \textit{ab-initio} description of a physical system and the physical system itself might disagree due to either the absence of certain fundamental effects (e.g., spin-orbit coupling in a non-relativistic Born-Oppenheimer Hamiltonian) or complexities of the physical system (e.g., disorder).}.

In this paper, we propose a hybrid quantum-classical protocol for constructing effective Hamiltonians with systematically improvable accuracy.
Our protocol is based on density matrix downfolding (DMD)~\cite{zheng2018, changlani2015}, an approach for deriving effective Hamiltonians that has primarily been developed in the context of classical algorithms for studying \textit{ab-initio} models of materials~\cite{zheng2018,busemeyer2019,pathak2022,chang2023}.
In DMD, a set of descriptors is sampled over a low-energy subspace of the \emph{ab-initio} Hamiltonian and used to regress an effective Hamiltonian over that same subspace (Section~\ref{sec:theoretical_background}).
Efficient classical methods for sampling arbitrary observables over low-energy subspaces are famously subject to difficult-to-quantify systematic errors, particularly due to the reduction in accuracy that is typical of excited state methods relative to ground state methods.
We will describe how these errors limit the utility of effective Hamiltonians that are so-derived (Section~\ref{sec:classical_approaches}) and the means by which quantum-enhanced DMD controls them (Section~\ref{sec:quantum_algorithm}).
Finally, we translate the quantum-enhanced DMD protocol into resource estimates for specific instances related to materials modeling (Section~\ref{sec:resource_estimates}).

\subsection{Relationship to other work}
\label{sec:relationship_to_other_work}

Classical methods for generating effective Hamiltonians have a long history and they can broadly be separated into more data-driven techniques that ultimately rely on regression~\cite{pavarini2001band,kent2008combined,li2020constructing,wang2021machine} and more constructive techniques based on systematic unitary transformations and projections~\cite{lowdin1951note,schrieffer1966relation,bravyi2011schrieffer}. 
The primary limitation of the former is that it requires an ansatz for the form of the effective Hamiltonian with difficult-to-verify accuracy~\cite{cui2020ground}, and while the latter systematizes this its primary limitation is the computational cost of applying these transformations to quantum many-body systems~\cite{yanai2006canonical}.
DMD has been described as ``the best of both worlds''~\cite{changlani2015}, though we will show that this approach is ultimately limited by difficult-to-quantify systematic errors attendant to the wide variety of approximations used to facilitate efficient classical simulation of quantum many-body systems.
However, this limitation is not specific to DMD and one should expect any classical algorithm for constructing effective Hamiltonians to be limited both practically by the accuracy of any given classical algorithm for quantum simulation and fundamentally by the limits of classical computers in simulating quantum systems.

Thus the prospect of achieving efficient systematically improvable accuracy using quantum computers~\cite{lloyd1996universal} is encouraging for DMD or any other approach to this problem.
In fact, other authors have proposed quantum simulation algorithms that reduce simulation costs through the use of effective Hamiltonians~\cite{kreshchuk2022quantum,clinton2024towards}.
However, a common feature of these proposals is that they will ultimately be practically limited by difficult-to-quantify systematic errors in the classical algorithms used to seed them.
Thus our work is aimed at constructing effective Hamiltonians in which all of the data are produced with systematically controllable accuracy via access to the full \emph{ab-initio} Hamiltonian. 
Accordingly, we are still bounded by the greater quantum computational costs of simulating or otherwise querying the properties of that much larger Hamiltonian, whereas these other approaches are not.
Of course, which limitation is to be preferred depends on the problem and computational resources at hand.

We also note that this work was inspired by two related and active areas of research: state preparation and resource estimates for simulations that might achieve quantum utility.
The problem of ground-state energy estimation is one for which quantum computers are expected to achieve certain advantages, provided that a state with reasonable overlap with the ground state can be efficiently prepared as input~\cite{pathak2023} for, e.g., quantum phase estimation (QPE).
But given the QMA-hardness of the generic $k$-local Hamiltonian problem~\cite{kitaev2002classical, regev2006qma, verstraete2007qma, nagaj2008local, love2010qma, nayak2010qma, andris2014} it is an open question whether some instances for which such states can be efficiently prepared on a quantum computer are also classically hard to approximate, and thus whether an exponential quantum advantage is viable for, e.g., any problems in chemical and materials simulation~\cite{fefferman2022qma,lee2023evaluating}.
At the same time, resource estimates for implementing QPE in the context of simply sampling from the Hamiltonian eigenspectrum with systematically controllable bias suggest that eigenvalue estimation is expensive even without accounting for state preparation costs~\cite{reiher2017elucidating}-- likely requiring at least thousands of logical qubits capable of implementing tens of billions of non-Clifford operations for seemingly difficult and impactful problems in chemical simulation~\cite{lee2021even,goings2022reliably}, if not more for other applications~\cite{rubin2023quantum}.
While there is hope that these resource requirements will continue to come down with improvements in algorithms and architectures for implementing them, current estimates suggest that utility-scale machines are likely to be big, slow, and scarce~\cite{babbush2021focus,hoefler2023disentangling}. 
Thus we should be very intentional about extracting maximal utility from such exquisite resources.

For Hamiltonians for which it is very expensive to prepare specific eigenstates, can we perhaps relax resource requirements by instead targeting computational tasks on a particular low-energy subspace?
And beyond estimating a few eigenenergies and local observables~\footnote{Spoiler alert: our method relies on estimating a few eigenenergies and local observables.}, how else can we make use of the output of a quantum computer? 
It probably comes as no surprise that we propose ``yes'' and ``quantum-enhanced DMD'' as the answers to these questions.

\subsection{Open problems}
\label{sec:open_problems}

Central to our work are three requirements for the conditions in which quantum-enhanced DMD will be efficient.
While the more formal statements of these requirements are given in Section~\ref{sec:requirements_for_efficient_quantum_DMD}, they are informally summarized as follows.
\begin{enumerate}
    \item The Hamiltonian is efficiently compressible on some low-energy subspace $\hilbert_\Lambda$.
    \item There is a set of descriptors that can be used to regress the Hamiltonian over $\hilbert_\Lambda$
    \item The restriction of those descriptors to $\hilbert_\Lambda$ can be sampled with systematically controllable error.
\end{enumerate}
While we propose and cost a quantum algorithm that addresses the third requirement, it implicitly assumes that the first two are already met.
In this work, we merely provide formalism that makes it possible to quantify whether they are met. 
We leave it to future work to make more concrete mathematical statements about the features of Hamiltonians for which quantum-enhanced DMD will be efficient.

\section{Theoretical background for DMD \label{sec:theoretical_background}}
\subsection{Hamiltonians and Hilbert spaces}
Consider a quantum system defined by a Hilbert space $\hilbert$ and a Hamiltonian $\ham$.
We label the eigenstates of $\ham$ as $|\psi_k\rangle$ and the eigenvalues $E_k$, with the ground state indexed by $k = 0$.
Without loss of generality we assume that $E_0 > 0$.
The aim of DMD is the construction of a low-energy Hamiltonian $\ham_\Lambda$ that is isospectral to $\ham$ over a low-energy Hilbert space
\begin{equation}
    \hilbert_\Lambda = \spn{(\{|\psi_j\rangle\ |\ E_j \leq \Lambda \})},
    \label{defn:le_hilbert}
\end{equation}
where
\begin{subequations}
    \begin{align}
    \ham_\Lambda|\psi_j\rangle = E_j|\psi_j\rangle \ \forall |\psi_j\rangle \in &\hilbert_\Lambda~\text{and} \label{defn:le_hamiltonian} \\
    %\frac{\langle \psi|\ham_\Lambda |\psi\rangle}{\langle \psi|\psi\rangle} > \Lambda & \textrm{ or } \ham_\Lambda|\psi\rangle = 0 \ \forall |\psi\rangle \in \hilbert_\Lambda^\perp.
    \frac{\langle \psi|\ham_\Lambda |\psi\rangle}{\langle \psi|\psi\rangle} > \Lambda ~ \forall |\psi\rangle \in &\hilbert_\Lambda^\perp. \label{defn:penalty_condition}
    \end{align}
\end{subequations}
This definition of $\ham_\Lambda$ ensures that it exactly reproduces the eigenvalues of $\ham$ up to energy $\Lambda$ while energetically separating states in $\hilbert_\Lambda^\perp$.
Since the action of $\ham_\Lambda$ on $\hilbert_\Lambda^\perp$ is not uniquely defined, $\ham_\Lambda$ is not a unique operator.
We note that the penalty condition in Eq.~\ref{defn:penalty_condition} retains a factor of $\langle \psi | \psi \rangle$, reflecting the origins of DMD in the literature on classical methods in which the normalization of $|\psi \rangle$ might not be guaranteed.

It is straightforward to construct a family of $\ham_\Lambda$ that satisfy Eq.~\ref{defn:le_hamiltonian} and~\ref{defn:penalty_condition} from any linearly independent set of states that span $\hilbert_\Lambda$. 
Given such a basis, an orthonormal basis $\{|\phi_i\rangle\}_{i \in [1, |\hilbert_\Lambda|]}$ for $\hilbert_\Lambda$ can be constructed using Gram-Schmidt orthogonalization, which can be used to construct a projector $\projector_\Lambda = \sum_{i=1}^{|\hilbert_\Lambda|}|\phi_i\rangle\langle \phi_i|$ onto $\hilbert_\Lambda$. 
Then $\ham_\Lambda = \projector_\Lambda \ham \projector_\Lambda + (I - \projector_\Lambda)(\Lambda + \sigma)(I - \projector_\Lambda)$ for any real $\sigma > 0$, which ensures that Eq.~\ref{defn:penalty_condition} is satisfied.
This procedure for constructing $\ham_\Lambda$ can be less demanding than directly computing the eigenstates of $\ham$ in $\hilbert_\Lambda$.
Consider any algorithm (quantum or classical) that can prepare a state with $\epsilon$ additive error relative to an eigenstate $|\psi_k\rangle \in \hilbert_\Lambda$ in time $T(1/\epsilon)$ where $T$ is non-decreasing in $1/\epsilon$.
The $\epsilon$ error arises from the state having weight on either $\hilbert_\Lambda^\perp$ or other eigenstates in $\hilbert_\Lambda$.
As such, the algorithm has generated a state in $\hilbert_\Lambda$ with error $\epsilon^\prime \leq \epsilon$ in time $T(1/\epsilon)$, or conversely it can generate a state in $\hilbert_\Lambda$ with $\epsilon$ error in time $\leq T(1/\epsilon)$.
In both statements equality occurs when the weight of the $\epsilon$ error is entirely due to overlap with other states in $\hilbert_\Lambda^\perp$.
Thus, generating a linearly independent set of states that span $\hilbert_\Lambda$ is \emph{at worst} as time consuming as generating all eigenstates in $\hilbert_\Lambda$.

Even though building $\ham_\Lambda$ is necessarily easier than computing the eigenstates in $\hilbert_\Lambda$, the orthogonalization process described above is practically cumbersome for two reasons.
First, constructing a linearly independent set of states in $\hilbert_\Lambda$ which spans $\hilbert_\Lambda$ will be computationally demanding due to the cardinality of the space.
For example, the dimension of the low-energy Hilbert space $|\hilbert_\Lambda|$ of spin excitations of the strong-coupling Fermi-Hubbard in Equation~\ref{eq:hamhub} is $2^N$ \cite{cleveland1976}.
This cardinality challenge is also shared by the eigenstate approach to building $\ham_\Lambda$.

Second, computing $\projector_\Lambda$ requires direct access to the orthonormal basis states $|\phi_i\rangle$, and thereby direct access to the linearly independent set of states drawn from $\hilbert_\Lambda$.
While direct access is feasible for some classical computational methods like Hartree-Fock, other classical methods like quantum Monte Carlo and most quantum algorithms for state preparation do not have efficient access to the entire state, but only observables on the state.
In the following sections we introduce a key method, DMD, which makes use of functional compression for addressing both of these challenges.

\subsection{Functional formalism}
%All the DMD stuff
The main objects of study in DMD are operator functionals of the form
\begin{equation}
\begin{split}
    \functional_A[\psi]: &\  \hilbert \rightarrow \mathbb{R} \\
    & |\psi\rangle \rightarrow \frac{\langle \psi |A |\psi\rangle}{\langle \psi|\psi \rangle },
    \label{def:ham_func}
\end{split}
\end{equation}
where $A: \hilbert \rightarrow \hilbert$ is Hermitian.
The core theorem of DMD relates $\ham$ and $\ham_\Lambda$ through $\mathcal{F}_{H}$ and $\mathcal{F}_{H_\Lambda}$.
\begin{theorem}[Theorem 2~\cite{zheng2018}]
Suppose you are given $\ham, \hilbert, \Lambda$ and another operator $\ham^\prime$ on $\hilbert$.
If $\functional_\ham[\psi] = \functional_{{\ham}^\prime}[\psi] + c$ with $c \in \mathbb{R}, \forall |\psi\rangle \in \hilbert_\Lambda$, then $(\ham^\prime + c)|\psi_j\rangle = E_j|\psi_j\rangle \ \ \forall |\psi_j\rangle \in \hilbert_\Lambda$.
% ADB: note edit moved forward c in R
% Here $c$ is a real constant. 
\label{thm:dmd}
\end{theorem}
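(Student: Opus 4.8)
The plan is to strip the normalization out of $\functional$, recognize the hypothesis as the vanishing of a quadratic form on $\hilbert_\Lambda$, promote that to an operator identity by polarization, and then substitute the eigenstates $|\psi_j\rangle$.

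First I would clear denominators: since $\langle\psi|\psi\rangle>0$ for every $|\psi\rangle\neq 0$, the assumption $\functional_\ham[\psi]=\functional_{\ham^\prime}[\psi]+c$ on $\hilbert_\Lambda$ is equivalent to $\langle\psi|(\ham-\ham^\prime-cI)|\psi\rangle=0$ for all $|\psi\rangle\in\hilbert_\Lambda$. Set $D:=\ham-\ham^\prime-cI$, which is Hermitian because $\ham$ and the argument of $\functional$ are. Next I would polarize: in a complex inner-product space, $\langle\phi|D|\chi\rangle$ is a fixed linear combination (the polarization identity) of the four numbers $\langle\phi+i^k\chi|D|\phi+i^k\chi\rangle$ with $k=0,1,2,3$, and when $|\phi\rangle,|\chi\rangle\in\hilbert_\Lambda$ each argument $|\phi\rangle+i^k|\chi\rangle$ still lies in $\hilbert_\Lambda$. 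Hence $\langle\phi|D|\chi\rangle=0$ for all $|\phi\rangle,|\chi\rangle\in\hilbert_\Lambda$, i.e. $\projector_\Lambda D\,\projector_\Lambda=0$, equivalently $\projector_\Lambda\ham\,\projector_\Lambda=\projector_\Lambda(\ham^\prime+cI)\projector_\Lambda$, where $\projector_\Lambda$ is the orthogonal projector onto $\hilbert_\Lambda$.

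Then I would substitute the eigenstates. For $|\psi_j\rangle\in\hilbert_\Lambda$ one has $\projector_\Lambda|\psi_j\rangle=|\psi_j\rangle$ and $\ham|\psi_j\rangle=E_j|\psi_j\rangle$, so $\projector_\Lambda(\ham^\prime+cI)|\psi_j\rangle=\projector_\Lambda(\ham^\prime+cI)\projector_\Lambda|\psi_j\rangle=\projector_\Lambda\ham\,\projector_\Lambda|\psi_j\rangle=E_j|\psi_j\rangle$, which is the asserted eigenvalue equation for the action of $\ham^\prime+c$ within $\hilbert_\Lambda$.

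The delicate point — what I would treat as the \emph{crux} — is upgrading this from the $\hilbert_\Lambda$ block of $\ham^\prime+c$ to the identity $(\ham^\prime+c)|\psi_j\rangle=E_j|\psi_j\rangle$ on all of $\hilbert$: polarization says nothing about the off-diagonal block $(I-\projector_\Lambda)D\,\projector_\Lambda$, so a priori $(\ham^\prime+c)|\psi_j\rangle$ could acquire a component in $\hilbert_\Lambda^\perp$. Because $\hilbert_\Lambda$ is spanned by eigenstates of $\ham$ we have $[\ham,\projector_\Lambda]=0$, so any such stray component comes from $\ham^\prime$ alone; the conclusion as written therefore requires (and in the DMD setting enjoys, $\ham^\prime$ being an operator on the model space $\hilbert_\Lambda$) that $\ham^\prime$ not couple $\hilbert_\Lambda$ to $\hilbert_\Lambda^\perp$, i.e. $[\ham^\prime,\projector_\Lambda]=0$. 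Granting this, $D=(I-\projector_\Lambda)D(I-\projector_\Lambda)$ annihilates $|\psi_j\rangle$, and $(\ham^\prime+c)|\psi_j\rangle=\ham|\psi_j\rangle=E_j|\psi_j\rangle$ follows exactly; I would make that hypothesis explicit rather than leave it tacit.
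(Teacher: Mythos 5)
Your argument is correct and reaches the result by a genuinely different route than the paper. The paper's proof is variational: it computes $\delta\functional_A[\psi]/\delta\langle\psi| = (A-\functional_A[\psi])|\psi\rangle/\langle\psi|\psi\rangle$, asserts that agreement of the functionals on $\hilbert_\Lambda$ forces agreement of these derivatives there, and equates numerators to conclude $\ham|\psi\rangle=(\ham^\prime+c)|\psi\rangle$ on $\hilbert_\Lambda$. You instead clear denominators and polarize the vanishing quadratic form of $D=\ham-\ham^\prime-cI$ to obtain $\projector_\Lambda D\projector_\Lambda=0$, which is more elementary and makes explicit, block by block, exactly what the hypothesis controls. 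That transparency matters, because the ``crux'' you flag is genuine and is tacit in the paper's proof as well: since the functionals are only assumed equal on the subspace $\hilbert_\Lambda$, only the components of their gradients along $\hilbert_\Lambda$ are forced to coincide, so equating the full numerators silently uses $(I-\projector_\Lambda)\ham^\prime\projector_\Lambda=0$ (for $\ham$ this off-diagonal block vanishes automatically because $\hilbert_\Lambda$ is $\ham$-invariant). A two-dimensional example shows the conclusion can fail without it: take $\ham=\mathrm{diag}(1,10)$, $\Lambda=5$, $c=0$, and $\ham^\prime$ the $2\times 2$ matrix with every entry equal to $1$; the hypothesis holds on $\hilbert_\Lambda=\mathrm{span}\{|\psi_0\rangle\}$, yet $\ham^\prime|\psi_0\rangle\neq E_0|\psi_0\rangle$. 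The intended DMD reading is that $\ham^\prime$ acts block-diagonally with respect to the model space, and making that hypothesis explicit, as you propose, is the right call. What the paper's variational route buys is a template that carries over directly to the approximate statement (Theorem~\ref{thm:dmd_approx}), where the bound on $\|\delta\epsilon[\psi]/\delta\langle\psi|\|$ does the work; what your polarization route buys is an elementary, assumption-transparent proof of the exact statement.
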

\begin{proof}
Since the functionals agree on $\hilbert_\Lambda$, their derivatives must also agree:
\begin{equation}
\begin{split}
    &\frac{\delta \functional_\ham[\psi]}{\delta \langle \psi|} = \frac{(\ham - \functional_\ham [\psi])|\psi\rangle}{\langle \psi |\psi \rangle} \\
    & \frac{(\ham^\prime - \functional_{\ham^\prime} [\psi])|\psi\rangle}{\langle \psi |\psi \rangle}  = \frac{\delta \functional_{\ham^\prime}[\psi]}{\delta \langle \psi|} \ \forall |\psi\rangle \in \hilbert_\Lambda.
\end{split}
\end{equation}
Equating the numerators we find
$\ham |\psi\rangle = (\ham^\prime + \functional_\ham[\psi] - \functional_{\ham^\prime}[\psi])  |\psi\rangle = (\ham^\prime + c)|\psi\rangle  \ \forall |\psi\rangle \in \hilbert_\Lambda.$
\end{proof}

Theorem~\ref{thm:dmd} casts Equation~\ref{defn:le_hamiltonian} as a functional matching problem over $\hilbert_\Lambda$.
We can also cast Equation~\ref{defn:penalty_condition} as a condition on $\functional_{\ham^\prime}$:
\begin{equation}
  \functional_{\ham^\prime}[\psi] > \Lambda \ \forall \psi \in \hilbert_\Lambda^\perp.
     \label{eq:new_condition}
\end{equation}
Any $\ham^\prime$ satisfying the operator functional conditions in Theorem~\ref{thm:dmd} and Equation~\ref{eq:new_condition} satisfies both Equations~\ref{defn:le_hamiltonian}~and~\ref{defn:penalty_condition} and is thereby a low-energy Hamiltonian.

The advantage of using the functional formalism rests in the case where error is introduced:
\begin{theorem}
Suppose you are given $\ham, \hilbert, \Lambda$ and another operator $\ham^\prime$ on $\hilbert$ such that
 $\functional_\ham[\psi] = \functional_{\ham^\prime}[\psi] + \epsilon[\psi] + c \ \ \forall |\psi\rangle \in \hilbert_\Lambda$ where $\epsilon[\psi]$ is an error functional satisfying the conditions $|\epsilon[\psi]| < \epsilon/2$ and $||\delta \epsilon[\psi] / \delta \langle\psi| || < \epsilon/2$ for $\epsilon \in \mathbb{R}^{+}$.
Then $||(\ham^\prime + c - E_j)|\psi_j\rangle|| < \epsilon \ \ \forall |\psi_j\rangle \in \hilbert_\Lambda$.
\label{thm:dmd_approx}
\end{theorem}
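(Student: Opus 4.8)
The plan is to repeat the functional-derivative argument used to prove Theorem~\ref{thm:dmd}, but now retaining the error functional $\epsilon[\psi]$ instead of discarding it, and to finish with a triangle inequality. First I would differentiate the hypothesis $\functional_\ham[\psi] = \functional_{\ham^\prime}[\psi] + \epsilon[\psi] + c$ (valid for all $|\psi\rangle \in \hilbert_\Lambda$) with respect to $\langle\psi|$. The additive constant $c$ is annihilated by the derivative, and using the identity $\delta\functional_A[\psi]/\delta\langle\psi| = (A - \functional_A[\psi])|\psi\rangle/\langle\psi|\psi\rangle$ established in the proof of Theorem~\ref{thm:dmd} for both $A = \ham$ and $A = \ham^\prime$, this yields
\begin{equation}
\frac{(\ham - \functional_\ham[\psi])|\psi\rangle}{\langle\psi|\psi\rangle} = \frac{(\ham^\prime - \functional_{\ham^\prime}[\psi])|\psi\rangle}{\langle\psi|\psi\rangle} + \frac{\delta\epsilon[\psi]}{\delta\langle\psi|} \quad \forall |\psi\rangle \in \hilbert_\Lambda .
\end{equation}

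Next I would clear the common denominator and substitute $\functional_\ham[\psi] - \functional_{\ham^\prime}[\psi] = \epsilon[\psi] + c$ to isolate the operator difference acting on $|\psi\rangle$:
\begin{equation}
(\ham - \ham^\prime - c)|\psi\rangle = \epsilon[\psi]\,|\psi\rangle + \langle\psi|\psi\rangle\,\frac{\delta\epsilon[\psi]}{\delta\langle\psi|} \quad \forall |\psi\rangle \in \hilbert_\Lambda .
\end{equation}
Specializing to a normalized eigenstate $|\psi_j\rangle \in \hilbert_\Lambda$, where $\ham|\psi_j\rangle = E_j|\psi_j\rangle$ and $\langle\psi_j|\psi_j\rangle = 1$, the left-hand side becomes $-(\ham^\prime + c - E_j)|\psi_j\rangle$, so taking norms and applying the triangle inequality together with the two hypotheses $|\epsilon[\psi_j]| < \epsilon/2$ and $\|\delta\epsilon[\psi]/\delta\langle\psi|\| < \epsilon/2$ gives
\begin{equation}
\big\|(\ham^\prime + c - E_j)|\psi_j\rangle\big\| \;\le\; |\epsilon[\psi_j]|\,\big\||\psi_j\rangle\big\| + \Big\|\tfrac{\delta\epsilon[\psi]}{\delta\langle\psi|}\Big\| \;<\; \tfrac{\epsilon}{2} + \tfrac{\epsilon}{2} \;=\; \epsilon ,
\end{equation}
which is the desired bound.

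I expect the obstacles here to be bookkeeping rather than conceptual. The point to be most careful about is normalization: the split into two $\epsilon/2$ contributions relies on $\langle\psi_j|\psi_j\rangle = 1$, and if unnormalized eigenstates are allowed the factor $\langle\psi_j|\psi_j\rangle$ reappears multiplying the derivative term --- this is actually consistent, since $\functional$ and hence $\epsilon$ are invariant under rescaling $|\psi\rangle$ while $\delta\epsilon[\psi]/\delta\langle\psi|$ scales inversely with the norm, so the right-hand side scales the same way as the left --- but one should either fix normalization up front or carry that factor explicitly. The other point is regularity: the theorem tacitly assumes $\epsilon[\psi]$ is differentiable on $\hilbert_\Lambda$ so that the bound $\|\delta\epsilon[\psi]/\delta\langle\psi|\| < \epsilon/2$ is meaningful and the differentiation step is licensed; I would simply inherit that smoothness assumption, exactly as the proof of Theorem~\ref{thm:dmd} implicitly does. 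Beyond these, the argument is line-for-line the Theorem~\ref{thm:dmd} computation with the error terms kept rather than dropped.
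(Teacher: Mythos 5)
Your proof is correct and follows essentially the same route as the paper's: differentiate the functional identity using $\delta\functional_A[\psi]/\delta\langle\psi| = (A-\functional_A[\psi])|\psi\rangle/\langle\psi|\psi\rangle$, substitute $\functional_\ham-\functional_{\ham^\prime}=\epsilon[\psi]+c$, and close with the triangle inequality using the two $\epsilon/2$ bounds. Your explicit remark about the normalization $\langle\psi_j|\psi_j\rangle=1$ is a point the paper's proof also relies on tacitly, so nothing further is needed.
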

\begin{proof}
Taking functional derivatives we find
\begin{equation}
\begin{split}
    &\frac{(\ham - \functional_{\ham} [\psi])|\psi\rangle}{\langle \psi |\psi \rangle} = \\
    & \frac{(\ham^\prime - \functional_{\ham^\prime} [\psi])|\psi\rangle}{\langle \psi |\psi \rangle} + \frac{\delta \epsilon[\psi]}{\delta \langle\psi|}  \ \forall |\psi\rangle \in \hilbert_\Lambda.
\end{split}
\end{equation}
As the gradient of $\epsilon[\psi]$ is bounded, we move everything else to the left hand side and take the norm
\begin{equation}
    ||(\ham - \ham^\prime + \functional_{\ham^\prime}[\psi] - \functional_\ham[\psi])|\psi\rangle || < \epsilon/2  \ \forall |\psi\rangle \in \hilbert_\Lambda
\end{equation}
Substituting in the difference between the functionals then using the triangle inequality to move the contribution from $\epsilon[\psi]$ to the right hand side yields
\begin{equation}
    ||(H^\prime + c - H)|\psi\rangle|| < \epsilon  \ \forall |\psi\rangle \in \hilbert_\Lambda.
\end{equation}
\end{proof}

Theorems~\ref{thm:dmd} and~\ref{thm:dmd_approx} and Equation~\ref{eq:new_condition} frame constructing $\ham_\Lambda$ as a functional matching problem to $\functional_\ham$.
If one has \textit{any} $\functional_{\ham^\prime}$ that satisfies the conditions of Theorem~\ref{thm:dmd_approx} and Equation~\ref{eq:new_condition}, then one has a low-energy Hamiltonian with additive accuracy $\epsilon$.

\subsection{Compressible functionals}
To develop a constructive protocol that addresses the challenges of cardinality and wave-function access, we introduce a structural assumption on $\functional_\ham$: compressibility.
\begin{definition} (Compressible functional)
The Hamiltonian functional $\functional_\ham$ is called $(\Lambda, \kappa, \epsilon)-$compressible for $k\geq 1$, $\Lambda \geq E_0 > 0$, $\kappa \in O(1)$ a positive integer, and $\epsilon > 0$, if there exists an operator
\begin{equation}
    \ham^\prime = \sum_{i = 1}^{\kappa} g_i d_i
\end{equation}
that satisfies 
\begin{equation}
    \functional_\ham[\psi] = \functional_{\ham^\prime}[\psi] + \epsilon[\psi] + c \ \forall |\psi\rangle \in \hilbert_\Lambda,
\end{equation}
with $|\epsilon[\psi]| < \epsilon/2$, $||\delta \epsilon[\psi] / \delta \psi^*|| < \epsilon/2$, $ c \in \mathbb{R},$
and 
\begin{equation}
    \functional_{\ham^\prime}[\psi] > \Lambda\ \forall |\psi\rangle \in \hilbert_\Lambda^\perp,
\end{equation}
where $g_i \in \mathbb{R}$ and $d_i$ are $k$-local Hermitian operators on $\mathcal{H}$ for which the Gram matrix is rank $\kappa$.
\label{def:compress}
\end{definition}

This definition guarantees that $H$ can be described by an effective Hamiltonian $H'$ that satisfies Theorem~\ref{thm:dmd_approx} and Equation~\ref{eq:new_condition} and the indices $(\Lambda,\kappa,\epsilon)$ quantify various properties of $H'$. 
Namely, the energy scale below which $H'$ describes $H$ ($\Lambda$), the number of descriptors that are linearly independent on $\hilbert_\Lambda$ ($\kappa$), and the error associated with representing $H$ in terms of $H'$ ($\epsilon$).
The requirement that $\kappa \in O(1)$ might seem overly restrictive, but it is satisfied for many well-studied effective Hamiltonians.
The canonical translation-invariant Fermi-Hubbard model is one such Hamiltonian, 
\begin{equation}
    H_{\textrm{Hub}} = -t \sum \limits_{\substack{<i,j>=1\\\sigma \in \uparrow,\downarrow}}^{N} \left(c^\dagger_{i,\sigma} c_{j,\sigma} + h.c.\right)+ U \sum \limits_{i=1}^{N} \hat{n}_{i,\uparrow} \hat{n}_{i,\downarrow} \label{eq:hamhub}   
\end{equation}
where there are two distinct descriptors describing the hopping terms ($t$) and on-site repulsion ($U$).
The original derivation of Equation~\ref{eq:hamhub} essentially relies on the compression of an \emph{ab-initio} Hamiltonian for a solid to such a form~\cite{hubbard1964electron}.
In fact, this model can itself be further compressed to the Heisenberg model in certain limits.
At half-filling in the limit of $U/t \gg 1$, a low-energy subspace of spin states emerges for $\Lambda \sim O(t)$ \cite{takahashi1977}, where each site has a fixed occupation $\langle n_{i} \rangle \sim 1$ but variable spin.
The dimension of the low-energy subspace is $|\hilbert_\Lambda| = 2^N \in O(e^N)$.

Formal arguments based on perturbation theory \cite{cleveland1976} and the Schrieffer-Wolff transformation \cite{Fazekas1999} illustrate that in this case, $\functional_{\ham}$ is compressible.
In particular, the Heisenberg model describes the excitations in $\hilbert_\Lambda$ accurately
\begin{equation}
    \ham_{\textrm{Heis}} = J\sum_{\langle i, j\rangle=1}^{N} \vec{S}_i \cdot \vec{S}_{j} + c.
\end{equation}
The correspondence to $\ham_{\textrm{Hub}}$ is $J = 4t^2/U + O(t^3/U^2)$ with $\vec{S}_i$ being the spin operators on site $i$.
All states in $\hilbert_\Lambda^\perp$ are in the null space of $\ham_{\textrm{Heis}}$.
Thus, the half-filled Fermi-Hubbard model with $U/t \gg 1$ has a $(\Lambda \sim O(t), 1, \epsilon \sim O(t^3/U^2))-$compressible low-energy functional in the Heisenberg model.

While this is an instructive example, compression from two constants $(t, U)$ to one $J$ might seem to be computationally insignificant.
We conclude with examples of more complex quantum systems that likely have compressible Hamiltonian functionals, based on prior investigations with DMD on classical computers.
It should be noted that some of these examples are starting with an \textit{ab initio} $\ham$ with large atomic bases and pseudopotentials, with complexity far beyond the Fermi-Hubbard model.
\begin{enumerate}
    \item \textit{Ab initio} hydrogen chain at large inter-atomic spacing has a low-energy spin subspace accurately described by a Heisenberg model \cite{chang2023}.
    \item \textit{Ab initio} $\textrm{Mg} \textrm{Ti}_2 \textrm{O}_4$ in a low-temperature dimerized phase has a low-energy spin subspace accurately described by a Heisenberg model \cite{busemeyer2019}.
    \item \textit{Ab initio} graphene at equilibrium bond lengths without doping has a low-energy subspace within the $\pi$ orbitals accurately described by a Fermi-Hubbard model \cite{zheng2018, changlani2015}.
    \item 3-band Fermi-Hubbard model at half-filling and large double occupancy energy in one band has a low-energy spin subspace accurately described by a Fermi-Hubbard model \cite{zheng2018}.
\end{enumerate}

\subsection{Density matrix downfolding (DMD)}
By assuming compressibility of $\functional_\ham$, we can now address both concerns of the cardinality and wave-function access.
This is captured in the following theorems which represents a primary contribution of the present work.
We note our theorem is a formalization of concepts previously discussed in DMD literature \cite{changlani2015, zheng2018}.

\subsubsection{Known descriptors}
We first discuss the case where, for a given compressible functional, one knows descriptors $\{d_i\}$ that satisfy Definition~\ref{def:compress}, but does not know the model parameters $\{g_i\}$.
In general, descriptors are not known \textit{a priori}, but in some cases, described in Sections~\ref{sec:classical_approaches} and~\ref{sec:quantum_algorithm}, they can be inferred from basic many-body physics principles.
Further, as described in these sections, current classical methods fail to produce controllably accurate model fits even when the descriptors are known, a problem which is resolved by a quantum algorithm for DMD.
We provide a generalization of this theorem in the following subsection for unknown descriptors.
\begin{theorem}[DMD, Known descriptors]
Suppose you are given $\ham, \hilbert$ and want to construct a low-energy Hamiltonian $\ham_\Lambda$ over $\hilbert_\Lambda$ for a given $\Lambda$.
Assume further that $\functional_\ham$ is $(\Lambda, \kappa, \epsilon)-$compressible and that the descriptors $\{d_j\}_{j \in [\kappa]}$ are known \textit{a priori}.
Then $H_\Lambda$ can be computed to additive accuracy $\epsilon$ using at most $O(1)$ in $|\hilbert_\Lambda|$ wavefunctions and access to only $\kappa$ operator expectation values.
\begin{enumerate}
    \item Sample a set of states from $\hilbert_\Lambda$ that saturate the image of $\hilbert_\Lambda$ on the $\kappa-$dimensional coordinate space $\{d_i[\psi]\}$.
    \item Compute $\{d_i[\psi]\}$ and $\functional_\ham[\psi]$ on the sampled states.
    \item Use the computed values to fit $\{g_i\}$.
\end{enumerate}
\label{algo:partial_dmd}
\end{theorem}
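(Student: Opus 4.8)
The plan is to show the three-step procedure outputs an operator $\ham^\prime=\sum_{i=1}^{\kappa}\hat g_i d_i$ and a constant $\hat c$ meeting the hypotheses of Theorem~\ref{thm:dmd_approx} together with the penalty condition Eq.~\ref{eq:new_condition}, so that $\ham_\Lambda=\ham^\prime+\hat c$ is a low-energy Hamiltonian accurate to $O(\epsilon)$. By $(\Lambda,\kappa,\epsilon)$-compressibility there exist \emph{true} parameters $\{g_i^*\},c^*$ with
\begin{equation}
  \functional_\ham[\psi]=\sum_{i=1}^{\kappa} g_i^*\,\functional_{d_i}[\psi]+\epsilon[\psi]+c^*\qquad\forall\,|\psi\rangle\in\hilbert_\Lambda ,
\end{equation}
$|\epsilon[\psi]|<\epsilon/2$, $\|\delta\epsilon[\psi]/\delta\langle\psi|\|<\epsilon/2$, and $\functional_{\sum_i g_i^* d_i}[\psi]>\Lambda$ on $\hilbert_\Lambda^\perp$. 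The theorem's content is that these parameters, or a sufficiently close approximation, are recoverable from finitely many descriptor expectation values instead of explicit wavefunctions.

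Introduce the descriptor map $\Phi:|\psi\rangle\mapsto(\functional_{d_1}[\psi],\dots,\functional_{d_\kappa}[\psi])\in\mathbb{R}^{\kappa}$ and let $W=\Phi(\hilbert_\Lambda)$ be its image -- the joint numerical range of $d_1,\dots,d_\kappa$ on $\hilbert_\Lambda$ -- a bounded set with compact convex hull $K$. The rank-$\kappa$ Gram condition of Definition~\ref{def:compress} says exactly that $d_1,\dots,d_\kappa$ are linearly independent on $\hilbert_\Lambda$ -- equivalently that the descriptor data matrix over a suitable sample of states is full rank -- so $K$ has dimension $\kappa$ (or $\kappa{-}1$ if the descriptors span the identity on $\hilbert_\Lambda$, in which case $c$ is simply absorbed). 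Step~1 then samples a constant number $M=O(1)$ of states $|\psi^{(1)}\rangle,\dots,|\psi^{(M)}\rangle\in\hilbert_\Lambda$ whose descriptor vectors $\Phi(\psi^{(m)})$ are affinely independent -- making the Step-3 regression well posed -- and moreover \emph{saturate} $W$ in the sense that $W$ lies within resolution $O(\epsilon)$ of $\mathrm{conv}\{\Phi(\psi^{(m)})\}$. That $M$ can be kept $O(1)$ in $|\hilbert_\Lambda|$ rests on $\kappa\in O(1)$: covering a bounded convex body in fixed dimension $\kappa$ to fixed resolution requires a number of vertices independent of $|\hilbert_\Lambda|$, in sharp contrast with the $\mathrm{poly}(|\hilbert_\Lambda|)$ states needed to build $\projector_\Lambda$ outright.

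Step~2 measures on each sampled state only the $\kappa$ descriptor expectations $\functional_{d_i}[\psi^{(m)}]$ and the energy $\functional_\ham[\psi^{(m)}]$, and Step~3 solves the linear system $\functional_\ham[\psi^{(m)}]=\sum_i\hat g_i\functional_{d_i}[\psi^{(m)}]+\hat c$, which is uniquely solvable by affine independence. Setting $\ham^\prime=\sum_i\hat g_i d_i$, the fit residual is $\tilde\epsilon[\psi]:=\functional_\ham[\psi]-\functional_{\ham^\prime}[\psi]-\hat c=\epsilon[\psi]+L(\Phi(\psi))$, where $L(x)=\langle g^*-\hat g,x\rangle+(c^*-\hat c)$ is affine on $\mathbb{R}^{\kappa}$. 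The fit vanishes on sampled states, so $|L(\Phi(\psi^{(m)}))|=|\epsilon[\psi^{(m)}]|<\epsilon/2$; because $L$ is affine and the $\Phi(\psi^{(m)})$ cover $K$, $|L|$ stays below $\epsilon/2$ across all of $W$, giving $|\tilde\epsilon[\psi]|<\epsilon$ on $\hilbert_\Lambda$. The same far-apart sampled values pin $\|g^*-\hat g\|=O(\epsilon)$ via the conditioning of the regression, which bounds the additional gradient $\|\delta L(\Phi(\psi))/\delta\langle\psi|\|=O(\epsilon)$; with the compressibility gradient bound this gives $\|\delta\tilde\epsilon/\delta\langle\psi|\|=O(\epsilon)$. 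Theorem~\ref{thm:dmd_approx} applied to $(\ham^\prime,\hat c)$ then yields $\|(\ham^\prime+\hat c-E_j)|\psi_j\rangle\|=O(\epsilon)$ for all $|\psi_j\rangle\in\hilbert_\Lambda$; the penalty condition Eq.~\ref{eq:new_condition}, granted to $\sum_i g_i^* d_i$ by Definition~\ref{def:compress}, survives the $O(\epsilon)$ parameter perturbation, and the counting ($M=O(1)$ states, $\kappa$ descriptor expectations each) gives the stated costs.

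The step I expect to be the main obstacle is the assertion that $O(1)$ states -- independent of $|\hilbert_\Lambda|$ -- can simultaneously render the regression well conditioned \emph{and} saturate the image $W$ closely enough that a model determined on finitely many states is accurate on the entire, exponentially large subspace $\hilbert_\Lambda$. Pinning this down quantitatively needs the convex geometry of the joint numerical range of $\{d_i\}$ on $\hilbert_\Lambda$ -- convexity of $K$, its diameter relative to $\max_i\|d_i\|$, and how well $K$ is approximated by few-vertex polytopes -- together with an explicit condition-number bound for the regression; the paper's stance is that these favorable geometric properties are precisely what Definition~\ref{def:compress} packages into a hypothesis, with sharper statements left to future work. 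A minor secondary point is bookkeeping the factor-of-two slack between the compressibility parameter $\epsilon$ and the target accuracy so that both the value and gradient bounds required by Theorem~\ref{thm:dmd_approx} genuinely survive the fit, which as sketched costs only a small constant factor.
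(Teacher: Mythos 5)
Your proposal follows essentially the same route as the paper: sample a number of states that is $O(1)$ in $|\hilbert_\Lambda|$ whose descriptor vectors saturate the image $D_\kappa(\hilbert_\Lambda)$, measure the $\kappa$ descriptor expectations together with $\functional_\ham$, and fit $\{g_i\}$ by linear regression, with the sample count depending only on $\kappa \in O(1)$ (the paper simply asserts $2^\kappa$ samples, two per dimension of the image). Your added error propagation through the affine residual and Theorem~\ref{thm:dmd_approx} is more detailed than the paper's own argument, which leaves precisely the step you flag --- that finitely many samples saturating the image control the fit over all of $\hilbert_\Lambda$ --- at the level of assertion.
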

\begin{proof}
Since we assume knowledge of the descriptors required for compressibility, we know that the compressed functional should take the form $\functional_{\ham^\prime} = \sum_{i = 1}^{\kappa} g_i d_i[\psi]$.
The functional is linear in the descriptors, and so our task for fitting $\{g_i\}$ is a linear regression.

To start off the linear regression, in Step 1 we sample the necessary data, in this case low-energy wave functions.
Since we know that $\functional_{\ham^\prime}$ varies linearly in $\{d_i[\psi]\}$ for all states in $\hilbert_\Lambda$, 
we are only concerned with properly sampling the $\kappa-$dimensional image of $\hilbert_\Lambda$, $D_\kappa(\hilbert_\Lambda)$, given by the function $D_\kappa$:
\begin{equation}
    \begin{split}
        &D_\kappa: \hilbert \rightarrow \mathbb{R}^\kappa \\
        & |\psi \rangle \rightarrow \{\langle d_1 \rangle, ... , \langle d_\kappa \rangle \}.
    \end{split}
    \label{eq:image}
\end{equation}
The number of samples required to adequately sample $D_\kappa(\hilbert_\Lambda)$ depends only on $\kappa$.
Concretely, due to the linearity in $\{d_i[\psi]\}$, $2^\kappa$ samples are sufficient, two for each dimension of the image.

In Step 2, we compute the expectation values of $\{d_i[\psi]\}, \functional_\ham[\psi]$ on each state we have sampled and in Step 3 we carry out the linear regression to fit the regression coefficients $g_i$.
While we are assuming compressibility, one can estimate $\epsilon[\psi]$ at this stage from the regression error to double check the model accuracy promise.

In total, this procedure takes $O(e^\kappa) \sim O(1)$ in $|\hilbert_\Lambda|$ wave function samples and only access to the operator expectation values $\{d_i[\psi]\}, \functional_\ham[\psi]$.
\end{proof}

We note that in the case of known descriptors, one may make use of Hamiltonian learning techniques to compute the parameters $g_j$, attendant to further assumptions about the structure of the Hamiltonian.
As an example, one may employ the recent Heisenberg scaling Hamiltonian learning algorithm \cite{PhysRevLett.130.200403} with some modifications to learn effective Hamiltonian parameters.
This algorithm, however, requires further assumptions about the structure of the Hamiltonian, the strongest of which is a condition on geometric locality of the descriptors $d_j$.
Our algorithm in Theorem~\ref{algo:partial_dmd} does not make any assumptions about the structure of the descriptors, while also achieving Heisenberg scaling as shown in Section~\ref{sec:quantum_algorithm}.

More importantly, Hamiltonian learning algorithms returns no information about the quality of the effective Hamiltonian.
While irrelevant for the case of known descriptors, estimation of the error functional $\epsilon[\psi]$ is integral to the case of unknown descriptors, shown in the following subsection and Theorem.
Estimating $\epsilon[\psi]$ allows us to carry out statistical inference on the quality of the regression when descriptors are unknown \textit{a priori}, and thereby systematically improve our \textit{ansatz} for the low-energy Hamiltonian.
As such, Hamiltonian learning can only be viewed comparably to DMD in the case of known descriptors.

\subsubsection{Unknown descriptors}
\begin{theorem}[DMD, Unknown descriptors]
    Suppose you are given $\ham, \hilbert$ and want to construct a low-energy Hamiltonian $\ham_\Lambda$ over $\hilbert_\Lambda$ for a given $\Lambda$.
    Assume further that $\functional_\ham$ is $(\Lambda, \kappa, \epsilon)-$compressible, cannot be compressed to any form for $\kappa^\prime < \kappa$ without a higher error $\epsilon^\prime > \epsilon$, and that the descriptors $\{d_j\}_{j \in [\kappa]}$ are contained within a pool $\Pi$ of descriptors with $|\Pi| \sim O(\text{poly}\log |\hilbert_\Lambda|)$.
    Then a $\ham_\Lambda$ can be computed to additive accuracy $\epsilon$ using an iterative algorithm with a total $O(\text{poly}\log |\hilbert_\Lambda|)$ wave functions and access to $|\Pi|$ operator expectation values.
    \begin{enumerate}
        \item Select a set of operators $\{d_i\}$ from $\Pi$ to be included in an ansatz for $\functional_{\ham^\prime} = \sum_{i = 1}^\kappa g_i d_i[\psi]$ where $g_i$ are unknown.
        \item Sample a set of states from $\hilbert_\Lambda$ that saturate the image of $\hilbert_\Lambda$ on the $\kappa-$dimensional coordinate space $\{d_i[\psi]\}$.
        Additionally, sample a set of states from $\hilbert_\Lambda^\perp$ that also saturate the $\kappa-$dimensional image.
        \item Compute $d_i[\psi]\  \forall d_i \in \Pi$ and $\functional_\ham[\psi]$ on the sampled states.
        \item Use the computed values to fit $g_i$, and assess the error functional $\epsilon[\psi]$ after the fit.
        \item Case A (True negative): $\functional_{\ham^\prime}$ does not satisfy Definition~\ref{def:compress} over the sampled data. Repeat steps 1 -- 4 with a different ansatz, concatenate samples.
        \item Case B (Positive): 
        $\functional_{\ham^\prime}$ does satisfy Definition~\ref{def:compress} over the sampled data. Two subcases should be studied.
        \begin{enumerate}
            \item Subcase B.1 (False Positive): 
            \begin{enumerate}
                \item Only $\kappa^\prime < \kappa$ descriptors are required to satisfy Definition~\ref{def:compress}.
                Repeat Steps 1 -- 4 with a different ansatz, concatenate samples.
                \item Assess undersampling. If undersampling is present, Repeat Steps 1 -- 4 with a different ansatz, concatenate samples.
            \end{enumerate}
            \item Subcase B.2 (True Positive): All $\kappa$ operators are required to yield error $\epsilon$ on the states.
            No undersampling is present.
            You have found a low-energy Hamiltonian to additive error $\epsilon$.
        \end{enumerate}
    \end{enumerate}
    \label{algo:true_dmd}
\end{theorem}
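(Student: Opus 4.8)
The plan is to reduce the statement to Theorem~\ref{algo:partial_dmd}. That known-descriptor result already certifies that \emph{once the correct $\kappa$-element subset of descriptors has been identified}, the fitted operator $\ham' + c$ is a valid low-energy Hamiltonian to additive accuracy $\epsilon$ using $O(1)$ wavefunctions and access to $\kappa$ operator expectation values. So there are only two new ingredients to supply. The first is a correctness argument that the Step 5--6 case split accepts an ansatz \emph{if and only if} the resulting model is genuinely $\epsilon$-accurate over all of $\hilbert_\Lambda$, rather than merely on the finite sample drawn. The second is a counting argument showing that searching the pool $\Pi$ for the right subset inflates the wavefunction budget only to $O(\text{poly}\log|\hilbert_\Lambda|)$, while never requiring more than $|\Pi|$ distinct operator expectation values.

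For correctness: because $\functional_\ham$ is $(\Lambda,\kappa,\epsilon)$-compressible with descriptors drawn from $\Pi$, the true descriptor set is among the ansätze the algorithm enumerates, and on it the fit achieves residual $<\epsilon$ on every sampled state and passes every check (this is Subcase~B.2), so the algorithm terminates. The only way to reach an \emph{incorrect} acceptance is a fit with sample residual $<\epsilon$ that nevertheless fails somewhere on $\hilbert_\Lambda$, and there are exactly two mechanisms, each handled by the case split. If fewer than $\kappa$ descriptors appear to suffice (first branch of Subcase~B.1), the hypothesis that $\functional_\ham$ cannot be compressed below $\kappa$ descriptors without error $>\epsilon$ implies the apparent fit is a finite-sample artifact, so concatenating more saturating samples breaks it and the ansatz is rejected. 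If $\kappa$ descriptors are used but the sampled states do not saturate $D_\kappa(\hilbert_\Lambda)$ (second branch of Subcase~B.1), the linear regression is under-determined; this is caught by the explicit undersampling test, e.g.\ by comparing the rank of the sampled descriptor-value matrix to $\kappa$, and again more samples are drawn and concatenated. When neither pathology occurs (Subcase~B.2), $\functional_{\ham'}$ is linear in $\kappa$ descriptors and matches $\functional_\ham$ to residual $<\epsilon$ on a set saturating the $\kappa$-dimensional image; since a linear functional on a $\kappa$-dimensional image is determined by its values on any saturating set, this forces $|\epsilon[\psi]| < \epsilon/2$ and $\|\delta\epsilon[\psi]/\delta\langle\psi|\| < \epsilon/2$ on all of $\hilbert_\Lambda$, and the $\hilbert_\Lambda^\perp$ samples enforce Eq.~\ref{eq:new_condition}, so Theorem~\ref{thm:dmd_approx} then certifies that $\ham' + c$ is a low-energy Hamiltonian with additive accuracy $\epsilon$.

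For the resource count: each ansatz is a subset of $\Pi$, and since the minimal valid model has $\kappa = O(1)$ descriptors, it suffices to grow the ansatz size from $1$ upward, trying all subsets of $\Pi$ of each size, halting the first time Subcase~B.2 is reached — which by the correctness argument happens no later than size $\kappa$. The number of ansätze tried is therefore at most $\sum_{s=1}^{\kappa}\binom{|\Pi|}{s} = O(|\Pi|^{\kappa}) = O(\text{poly}\log|\hilbert_\Lambda|)$, since $\kappa$ is constant and $|\Pi| = O(\text{poly}\log|\hilbert_\Lambda|)$. By Theorem~\ref{algo:partial_dmd} each ansatz needs only $O(2^{\kappa}) = O(1)$ saturating samples from $\hilbert_\Lambda$, plus $O(1)$ samples from $\hilbert_\Lambda^\perp$ to test the penalty condition, plus $O(1)$ re-draws each time an undersampling flag is raised (because $D_\kappa(\hilbert_\Lambda)$ is $\kappa$-dimensional, a constant number of additional saturating points suffices to pin it down). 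Concatenating samples across all iterations thus yields $O(\text{poly}\log|\hilbert_\Lambda|) \times O(1) = O(\text{poly}\log|\hilbert_\Lambda|)$ wavefunctions, and on each of them we evaluate the $|\Pi|$ descriptor expectation values together with $\functional_\ham[\psi]$ — exactly the claimed resources.

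The main obstacle is making the ``saturation $\Rightarrow$ generalization'' step quantitatively airtight: one must show that a constant number of suitably spread samples is enough to guarantee both that a residual-$<\epsilon$ linear fit is genuinely $\epsilon$-accurate (in value \emph{and} in functional gradient) everywhere on $\hilbert_\Lambda$, and that the undersampling test is both sound and complete. This hinges on controlling the affine geometry of the image $D_\kappa(\hilbert_\Lambda)$ — its dimension and the non-degeneracy of sampled points — and on the linearity of $\functional_{\ham'}$ in the descriptors from Definition~\ref{def:compress}, which is exactly what converts a finite-sample match into an all-of-$\hilbert_\Lambda$ bound and lets the $\hilbert_\Lambda^\perp$ checks certify Eq.~\ref{eq:new_condition} globally. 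Everything else — the iteration count, the per-iteration sample count, and the appeals to Theorems~\ref{algo:partial_dmd} and~\ref{thm:dmd_approx} — is bookkeeping.
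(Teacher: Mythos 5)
Your skeleton matches the paper's: iterate over descriptor subsets of $\Pi$, draw $O(2^\kappa)$ saturating samples per ansatz, reject over-compressed fits by the minimal-compressibility promise, concatenate samples, and bound the iteration count by $\binom{|\Pi|}{\kappa} \le |\Pi|^\kappa/\kappa! = O(\mathrm{poly}\log|\hilbert_\Lambda|)$; the final appeal to Theorem~\ref{thm:dmd_approx} and Eq.~\ref{eq:new_condition} is also how the paper certifies the accepted model. However, there is a genuine gap at exactly the step you flag as "the main obstacle": you assert that because $\functional_{\ham'}$ is linear in the descriptors, a residual-$<\epsilon$ fit on a set saturating $D_\kappa(\hilbert_\Lambda)$ forces $|\epsilon[\psi]|<\epsilon/2$ and the gradient bound on all of $\hilbert_\Lambda$. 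This does not follow, because $\functional_\ham$ (unlike the ansatz) is \emph{not} a function of the descriptor values alone: the map $D_\kappa$ is generally non-injective, so two states in $\hilbert_\Lambda$ can share the same image point yet have different $\functional_\ham$ values, and a sample that saturates the image can still miss one of them. This is precisely the undersampling failure mode the paper isolates, and your proposed detection — comparing the rank of the sampled descriptor matrix to $\kappa$ — cannot catch it: the matrix can be full rank while the fit is still a false positive. The paper's remedy is different and essential to its argument: undersampling with the wrong ansatz must manifest as multi-collinearity between the ansatz descriptors and descriptors in $\Pi\setminus\{d_i\}$ (guaranteed because the true, unbiased descriptor set is promised to lie in $\Pi$), so one iterates over the multi-collinear subsets, with at most $\binom{|\Pi|}{\kappa}$ further iterations or samples.

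A second omission concerns your claim that "the $\hilbert_\Lambda^\perp$ samples enforce Eq.~\ref{eq:new_condition}" globally. Finitely many samples from $\hilbert_\Lambda^\perp$ only certify the penalty condition on those samples; intruder states whose descriptor values fall \emph{outside} $D_\kappa(\hilbert_\Lambda)$ are not controlled by any regression over sampled data. The paper handles this by replacing $\ham'$ with the augmented operator $H''=\projector_{D_\kappa}H'\projector_{D_\kappa}+(1-\projector_{D_\kappa})(\Lambda+\sigma)(1-\projector_{D_\kappa})$ of Eq.~\ref{eq:aug_ham}, which penalizes everything outside the estimated image by construction, while intruders \emph{inside} the image are again caught via multi-collinearity and additional $\hilbert_\Lambda^\perp$ samples. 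Without these two mechanisms (multi-collinearity detection and the augmented operator), your acceptance criterion in Subcase~B.2 does not certify a low-energy Hamiltonian in the sense of Theorem~\ref{thm:dmd_approx} plus Eq.~\ref{eq:new_condition}; it only certifies agreement on the drawn samples.
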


\begin{figure}
    \centering
    \includegraphics{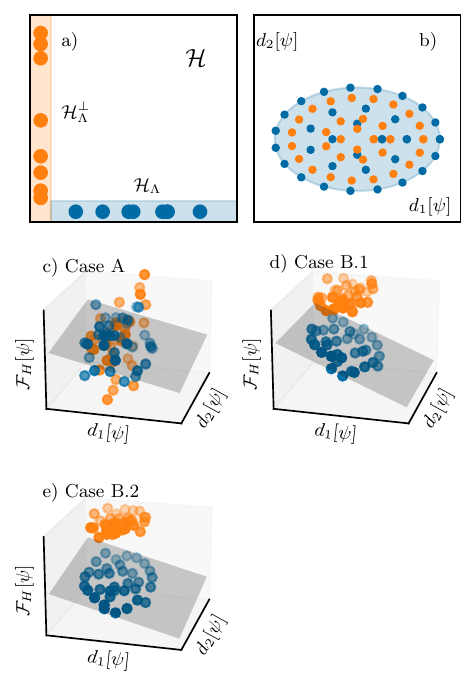}
    \caption{Visual guide for Theorem~\ref{algo:true_dmd} where we take a promise of a $(\Lambda, \kappa, \epsilon)-$compressible $\functional_\ham$ for $\kappa = 2$.
    Subpanel a) illustrates samples drawn from $\hilbert_\Lambda$ and $\hilbert_\Lambda^\perp$ used in the regression process of Steps 1 -- 4 with dots indicating samples and shaded regions indicating spaces.
    Subpanel b) shows the sampled data in the coordinate space.
    Sampled data is again shown with dots, and the shaded region is the image of $\hilbert_\Lambda$ in the 2-dimensional space.
    Subpanels c) -- e) show schematics for Case A, B.1 and B.2 respectively.
    In all three panels the sampled data are dots, and the shaded plane is the linear regression best fit.
    In c) one has a poor regression as in Case A, in d) one has good regression in $\hilbert_\Lambda$ and an energetic separation of states in $\hilbert_\Lambda^\perp$, but only the $d_2[\psi]$ regression variable is necessary as in Case B.1, and in e) one satisfies all the conditions for regression and both $d_1[\psi], d_2[\psi]$ are required to describe the data as in Case B.2.
    }
    \label{fig:dmd_algorithm}
\end{figure}

\begin{proof}
The iterative method described in this Theorem is a statistical learning approach to building $\functional_{\ham^\prime}$.
Below we provide a proof with a visual guide in Figure~\ref{fig:dmd_algorithm}.

Before addressing the Steps, we note that there are two additional assumptions relative to Theorem~\ref{algo:partial_dmd}.
First, we have a condition on maximal compressibility, that the functional cannot be compressed further than $\kappa$ terms without increasing the error.
Second, we have a condition on a pool of possible descriptors $\Pi$, which has size $|\Pi| \sim \text{poly}\log(|\hilbert_\Lambda|) \sim O(\text{poly}(N))$ where $N$ is a system size variable, and is guaranteed to contain the correct descriptor set for the given compressibility.
If this condition is lifted, it is impossible from the outset to have an efficient procedure, as the search space is too large.
In most practical calculations, a polynomial in system size search space $\Pi$ is sufficient, as shown in Sections~\ref{sec:classical_approaches}, ~\ref{sec:quantum_algorithm}, ~\ref{sec:resource_estimates}.

In Step 1, we select a set of operators $\{d_i\}$ from $\Pi$ for our ansatz $\functional_{\ham^\prime}$ leaving the constants $g_i$ unknown.
Our goal is to fit the unknown $g_i$ using information gathered from $\hilbert_\Lambda, \hilbert_\Lambda^\perp$, and determine whether the conditions in Definition~\ref{def:compress} are satisfied by the fit model.

In Step 2, we sample states from $\hilbert_\Lambda$ and $\hilbert_\Lambda^\perp$ in order to fit the $g_i$ and assess the validity of conditions in Definition~\ref{def:compress}.
As described in Theorem~\ref{algo:partial_dmd}, this only requires a number of samples depending on $\kappa$.

In Step 3, we compute the expectation values of all operators in $\Pi$ and $\ham$ on each state we have sampled and in Step 4 we carry out the linear regression to fit the regression coefficients $g_i$.

After the regression is complete, in Steps 5 and 6 we determine whether to iterate further or end our calculation.
In Case A, $\functional_{\ham^\prime}$ does not satisfy Definition~\ref{def:compress} over the sampled data.
In this case, our assumption about compressibility is contradicted, and therefore we update our ansatz and repeat Steps 1 -- 4.
It should be noted that we do not throw away any samples we have already collected, and instead we concatenate them into the set of all samples used for regression and inference.

In Case B, our data presents a positive conclusion, which can either be a true positive or a false positive.
In the latter case, labeled Subcase B.1, we have two categories of false positives.
First, we may find we find that Definition~\ref{def:compress} is satisfied over the sampled data, but only $\kappa^\prime < \kappa$ descriptors are required.
This contradicts our assumption about maximum compressibility, and therefore Steps 1 -- 4 are repeated with a new ansatz, and samples are concatenated.

Second, we may find that there is a false positive due to undersampling.
The source of this undersampling may come from undersampling of either $\hilbert_\Lambda$ or $\hilbert_\Lambda^\perp$, but both are related to non-injectivity of the map $D_\kappa$ in Equation~\ref{eq:image} for a given set $\{d_i\}_{i \in [\kappa]}.$

In the case of $\hilbert_\Lambda$, one may have that two states in $\hilbert_\Lambda$ with different values of $\functional_\ham$ but the same $D_\kappa(|\psi\rangle)$. 
In this case, a sampling scheme which saturates only $D(\hilbert_\Lambda)$ need not sample both states, and may come to a false conclusion about the accuracy of the effective model by only including one sample.
One can address this error by noting that the correct set of descriptors without any undersampling bias exists within $\Pi$, and thereby should present as a multi-collinearity of the ansatz descriptors $\{d_i\}_{i \in [\kappa]}$ with a subset of descriptors in $\Pi \setminus \{d_i\}_{i \in [\kappa]}$ over the samples from $\hilbert_\Lambda$.
As such, one can repeat Steps 1--4 with new ansatze including subsets of the multi-collinear descriptors in $\Pi$, concatenating samples every iteration until the undersampling of $\hilbert_\Lambda$ is resolved.
There are at most ${ |\Pi| \choose \kappa }$  such iterations.

The case of $\hilbert_\Lambda^\perp$ is similar, and relates to the long-standing problem of \textit{intruder states} in effective model construction, namely states which have energy $> \Lambda$ in $\ham$ but present as having energy $\leq \Lambda$ in $\ham_\Lambda$.
Intruder states can come in two varieties.
First, they can appear outside $D_\kappa(\hilbert_\Lambda)$.
This can be addressed by considering the augmented operator:
\begin{equation}
    H^{\prime \prime} = \projector_{D_\kappa}H^\prime \projector_{D_\kappa} + (1 - \projector_{D_\kappa})(\Lambda + \sigma)(1 - \projector_{D_\kappa}),
    \label{eq:aug_ham}
\end{equation}
where $\projector_{D_\kappa}$ is a projector onto the image $D_\kappa(\hilbert_\Lambda)$ that can be estimated using the sampled data and $\sigma$ is some positive real number.
This Hamiltonian ensures that any state outside $D_\kappa(\hilbert_\Lambda)$ will have energy $> \Lambda.$

Second, they can appear inside $D_\kappa(\hilbert_\Lambda)$.
In this case, undersampling would again manifest as multi-collinearity in the descriptors, namely that there would be multiple sets of descriptors in $\Pi$ which would all yield a positive result for regression over the sampled data.
In this case, one would need to include additional samples from $\hilbert_\Lambda^\perp$ which fall within $D_\kappa(\hilbert_\Lambda)$, but are able to differentiate between the various equivalent sets of descriptors in $\Pi$.
At most ${ |\Pi| \choose \kappa }$ samples are required to carry out this differentiation.

In Subcase B.2 we have found an ansatz in which all $\kappa$ operators are required to describe the data and we have satisfied the requirements of Definition~\ref{def:compress} over the sampled data.
Further, all undersampling errors are resolved.
As such, this is the True Positive case, and we have now constructed an approximation to $\ham_\Lambda$ to $\epsilon$ additive error.

In the worst case execution, $2^\kappa$ samples are drawn per iteration, and iterations are carried over all possible ${|\Pi| \choose \kappa} \leq \frac{|\Pi|^\kappa}{\kappa!}$ descriptor subsets.
In total, this yields an iterative algorithm where $O(\text{poly}\log |\hilbert_\Lambda|)$ wave functions are used, requiring access to expectation values of all $|\Pi|$ operators.
\end{proof}

\subsection{Requirements for efficient quantum-enhanced DMD}
\label{sec:requirements_for_efficient_quantum_DMD}

By making use of an assumption of compressibility of $\functional_\ham$, we have made significant headway in resolving the cardinality problem and have entirely resolved the wave-function access problem.
In doing so, however, we have made three strong assumptions about the structure of the problem, of which the first two are as-of-yet not proven, but have been practically demonstrated.
The third is the focus of the remainder of this manuscript.

\begin{requirement}[Compressibility of low-energy Hamiltonian functionals]
%\textbf{Requirement 1:} (Compressibility of low-energy Hamiltonian functionals)
A core assumption in Theorem~\ref{algo:true_dmd} is the compressibility of $\ham_\Lambda$, an assumption which is empirically valid for many physical systems.
\textit{An open problem is whether the compressibility of $\functional_\ham$ can be determined from $\ham$ and $\Lambda$ alone.}
The answer to this problem could have sweeping consequences for the understanding of low-energy many-body physics.
\end{requirement}

\begin{requirement}[Existence of compact descriptors pools $|\Pi|$]
%\textbf{Requirement 2:} (Existence of an accurate model ansatz)
The polylogarithmic scaling in $|\hilbert_\Lambda|$, correspondingly a polynomial scaling in $N$, of Theorem~\ref{algo:true_dmd} relies on the ability to ensure that the descriptors needed for compressibility lie within a pool $\Pi$ such that $|\Pi| \sim O(\text{poly}\log|\hilbert_\Lambda|)$.
\textit{An important problem still remains regarding whether, given that $\functional_\ham$ is compressible, if such a $\Pi$ exists and how one can construct it given $\ham, \Lambda$.}
\end{requirement}

\begin{requirement}[Ability to sample low-energy Hilbert spaces]
%\textbf{Requirement 3:} (Ability to sample low-energy Hilbert spaces)
The last core assumption in Theorem~\ref{algo:true_dmd} is ability to sample $\hilbert_\Lambda, \hilbert_\Lambda^\perp$ with a controlled error.
In Section~\ref{sec:classical_approaches} we will demonstrate that this assumption is not valid on classical hardware with classical algorithms.
\textit{In Section~\ref{sec:quantum_algorithm} we will resolve this problem by constructing a quantum algorithm which can sample $\hilbert_\Lambda, \hilbert_\Lambda^\perp$ with a controlled error given only $\ham$ and $\Lambda$.}
\end{requirement}

\section{Classical approaches to DMD \label{sec:classical_approaches}}
In this section we argue that the primary hurdle to the application of DMD on classical computers is the inability of classical algorithms to sample states from $\hilbert_\Lambda$ with controllable error.
To do so, we go through an example of applying Theorem~\ref{algo:partial_dmd} to a toy problem of an H$_2$ molecule with two electrons.

We will take $\ham$ to be the \textit{ab initio} Hamiltonian for H$_2$ under the Born-Oppenheimer approximation,
\begin{equation}
\begin{split}
    \ham_{\textrm{H}_2} = \sum_{i = 1, 2} -\frac{1}{2}\nabla_i^2 - \sum_{i = 1, 2} \sum_{I = 1, 2}\frac{1}{|\vec{r}_i - \vec{R}_I|} +\\  \frac{1}{|\vec{r}_1 - \vec{r}_2|} + \frac{1}{|\vec{R}_1 - \vec{R}_2|},
\end{split}
\label{eq:hamh2}
\end{equation}
where $\vec{r}$ are the electronic coordinates, which are treated quantum mechanically, and $\vec{R}$ are the ionic coordinates, which are treated classically.
The classical coordinates of the system can be uniquely defined by the parameter $R = |\vec{R}_1 - \vec{R}_2|$, the bond distance between the two H nuclei.

\subsection{Compressibility of the Hamiltonian}
A necessary assumption in Theorem~\ref{algo:partial_dmd} is that $\functional_{\ham_{\textrm{H}_2}}$ be compressible with a known set of descriptors.
For the case of stretched H$_2$ compressibility is readily verified due to the existence of the Coulson-Fischer point \cite{coulsonfischer}.
The Coulson-Fischer point for H$_2$ is $R_{CF} \sim 1.5R_0$ where $R_0$ is the equilibrium bond length.
For $R > R_{CF}$ the H$_2$ molecule has a ground state with broken spin symmetry, while this is not the case for $R < R_{CF}$.
The broken spin symmetry is accompanied by the emergence of a low-energy subspace $\hilbert_\Lambda$ composed of states with both electrons in the $1s$ orbitals of the hydrogen atoms, accurately described by a Fermi-Hubbard dimer model \cite{carrascal2015, debaerdemacker2023}.

We expect that the H$_2$ molecule for $R > R_{CF}$ should have a low-energy Hilbert space $\hilbert_\Lambda$ composed of broken spin symmetry states such that $\functional_{\ham_{\textrm{H}_2}}$ is $(\Lambda, 2, \epsilon)-$compressible.
To ensure that we are past the Coulson-Fischer point, we use $R = 1.5R_{CF}$ for our calculations.
Our starting ansatz will be the Fermi-Hubbard dimer, $H_{\textrm{hub}}$ as defined in Equation~\ref{eq:hamhub} with $N=2$.

\subsection{Sampling \texorpdfstring{$\hilbert_\Lambda$}{\textit{H}\_Lambda}, computing \texorpdfstring{$d_i, \functional_\ham$}{d\_i, F\_H}, fitting \texorpdfstring{$g_i$}{g\_i}}
We begin with an overview of the methods we employ in sampling $\hilbert_\Lambda$ classically.

The first method is full configuration interaction (FCI) \cite{sherrill1999}.
In this method, $\ham$ is diagonalized exactly within a finite single-particle basis set.
This is done by generating every possible determinant with, in this case, two electrons over the entire single-electron basis, and diagonalizing $\ham$ over the resulting determinants.
FCI scales exponentially in the system size and is not a computationally feasible technique, but can be run quickly for the H$_2$ Hamiltonian.
FCI will serve as reference data for an exact sampling of $\hilbert_\Lambda$.

The second method is complete active space configuration interaction (CASCI) \cite{shu2021}.
In this method, $\ham$ is diagonalized over a restricted set of determinants present in FCI.
This is accomplished by considering only determinants composed of two electrons over a subset of the single-particle basis functions called the ``active space".
The choice of active space is up to the user.
In the limit of the active space being the $1s$ orbital CASCI is equivalent to Hartree-Fock, while in the limit of the active space being the single-particle basis set CASCI is equivalent to FCI.

Methods like coupled-cluster and quantum Monte Carlo are not included in our analysis.
Coupled-cluster singles and doubles, which includes all singly- and doubly-excited determinants, is identical to FCI for H$_2$ since there are only two electrons.
Quantum Monte Carlo methods have additional stochastic errors which muddy the analysis shown here.
We note that regardless of which classical technique is employed, they all incur systematic errors in sampling $\hilbert_\Lambda$ \cite{carter2008, wagner2021}.
CASCI is simply emblematic of classical algorithms.

In Figure~\ref{fig:classical} we present FCI and CASCI results using the cc-pvtz Gaussian basis set \cite{dunning1989} computed using the PySCF program \cite{pyscf1, pyscf2}, with single particle orbitals computed using restriced open-shell Hartree-Fock \cite{roothaan1960}.
CASCI results are presented for an active space including only the $1s$ and $2s$ orbitals on each hydrogen (4 active orbitals), as well as for the $1s, 2s$ and $2p$ orbitals on each hydrogen (10 active orbitals).
\begin{figure}
\includegraphics{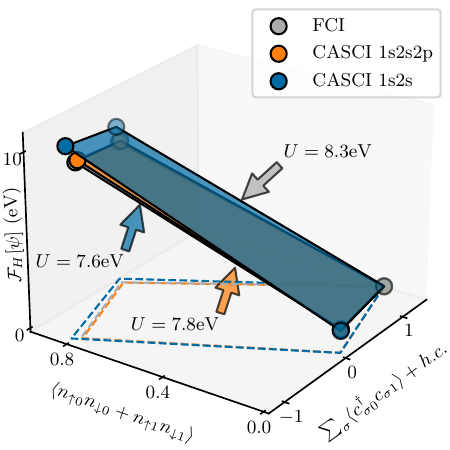}
\caption{Classical calculations of samples drawn from $\hilbert_\Lambda$ for H$_2$ molecule beyond the Coulson-Fischer point.
Both the FCI and CASCI methods are employed.
Four samples are drawn for each method, and the eigenstates are shown with respect to their total energy and descriptor values for the Fermi-Hubbard model Equation~\ref{eq:hamhub}.
Planes are drawn among the four sampled data points, and fit parameters for the Fermi-Hubbard $U$ are also presented for each method.
The projection of this plane into the 2-D descriptor subspace is shown in dashed lines.
}
\label{fig:classical}
\end{figure}

For each method, four states are sampled from $\hilbert_\Lambda$, corresponding to the four possible states with two electrons occupying the $1s$ orbitals in H$_2$.
For each state, $\functional_\ham$ as well as the two descriptors in Equation~\ref{eq:hamhub} --- $\sum_{\sigma}\langle c_{\sigma 0}^\dagger c_{\sigma 1}\rangle + h.c.$ and $\langle n_{0\uparrow}n_{0\downarrow} + n_{1\uparrow}n_{1\downarrow}\rangle$ --- are computed.
The four computed states are drawn as points in this three dimensional space, and the area between the four points is filled in for visual clarity.
The projection of this plane into the 2-D descriptor subspace is shown in dashed lines.

The four states generated by FCI are by definition the eigenstates of H$_2$ within the single-particle basis.
The four states visually form a plane in the descriptor coordinates, indicating a high quality of fit to a linear model in the Fermi-Hubbard descriptors.
The fit parameters in the Fermi-Hubbard model and estimated $\epsilon$ are presented in the first row of Table~\ref{tab:classical}.

\begin{table}[h]
    \centering
    \begin{tabular}{|c|c|c|c|}
        \hline
       Method &  $t$ (eV) & $U$ (eV) & $\epsilon$ (eV) \\
       \hline
       FCI    &  2.0    & 7.6   & 0.2 \\
       CASCI $1s2s2p$ & 2.0 & 7.8 & 0.2 \\
       CASCI $1s2s$   & 2.0 & 8.3 & 0.3 \\
       \hline
    \end{tabular}
    \caption{Summary of fit parameters for a Fermi-Hubbard model of the H$_2$ molecule using Theorem~\ref{algo:true_dmd}.}
    \label{tab:classical}
\end{table}

For the CASCI $1s2s$ method, we find the two lower energy states agree well with the FCI eigenstates.
However, the two higher energy states suffer a significant error in $d_i$ and $\functional_\ham$ relative to the FCI states.
Even more worrying is that the CASCI $1s2s$ data fall in a plane like the FCI data, meaning the quality of fit to a linear model in the Fermi-Hubbard descriptors would not indicate any problem.
As shown in the third row of Table~\ref{tab:classical}, regression on the CASCI $1s2s$ samples yields an error $\epsilon$ nearly identical to FCI but a 10\% over estimate in the parameter $U$.

The error seen in the CASCI $1s2s$ data is a non-parallelity error, where samples from an approximate method form a plane in $\{d_i[\psi]\}, \functional_\ham[\psi]$ which is not parallel to the plane formed by samples drawn exactly from $\hilbert_\Lambda$.
Non-parallelity error is pernicious since regression cannot be used to determine poor sampling, and was identified as one of the biggest drawbacks of DMD using classical algorithms \cite{zheng2018}.
Non-parallelity error is a false positive error.

Two other classes of errors may arise when using approximate methods for sampling $\hilbert_\Lambda$.
The first is a parallel error, in which case the approximate plane is shifted upwards in $\functional_\ham$ by a constant relative to the exact plane: this is a benign error as it does not affect the fit parameters.
The second is a non-planar error where the approximate samples do not fall in a plane in $\{d_i[\psi]\}, \functional_\ham[\psi]$ at all while the exact samples do.
In this case one would unknowingly discard the correct compressed $\ham_\Lambda$ as the regression using approximate samples would yield a large $\epsilon$.
This is a false negative error.

While the three errors --- non-parallel (false positive), parallel (benign), non-planar (false negative) --- can occur in any approximate method, classical or quantum, there is no way to determine the magnitude of these errors in classical methods \textit{a priori}.
One must either have reference data like FCI, which in most cases is computationally infeasible, or run an \textit{a posteriori} error analysis to estimate errors.
To illustrate the \textit{a posteriori} technique, we present data for the more expensive CASCI $1s2s2p$ method, wherein the non-parallelity error is significantly reduced.
If FCI was not available to us, we could run a sequence of more expensive CASCI calculations until the changes in fit parameters with increasing active space size are sufficiently small.

\subsection{Issues with the classical approach}
In summary, we have shown that all approximate methods for sampling $\hilbert_\Lambda$ may incur three different kinds of errors in Theorem~\ref{algo:true_dmd}: non-parallel (false positive), parallel (benign), and non-planar (false negative).
Additionally, we demonstrate that when classical algorithms for sampling $\hilbert_\Lambda$ incur these errors, estimation of the magnitude of the error \textit{a priori} is not possible.
Rather, expensive \textit{a posteriori} analysis must be carried out to ensure significant errors are not present in the sampling of $\hilbert_\Lambda$.
This feature of classical algorithms for sampling $\hilbert_\Lambda$ has been understood by prior authors on DMD \cite{zheng2018}.

In Section~\ref{sec:quantum_algorithm} we provide a quantum algorithm for sampling $\hilbert_\Lambda$ where the error in the sampled states is known and controllable \textit{a priori.}

\section{Quantum algorithm for DMD \label{sec:quantum_algorithm}}
In this section we provide a quantum algorithm for sampling $\hilbert_\Lambda$ and $\hilbert_\Lambda^\perp$ with a controllable error, resolving \textbf{Problem 3}.
We also provide discussion on computing $d_i[\psi], \functional_\ham[\psi]$ for the sampled states and error propagation to the fit parameters $g_i$.

\subsection{Sampling \texorpdfstring{$\hilbert_\Lambda$}{\textit{H}\_Lambda}}
The core of the state preparation algorithm is the construction of an approximate projector $\projector_\Lambda$ into $\hilbert_\Lambda$.
We do so by implementing $\reflector_\Lambda \equiv 2\projector_\Lambda - I$ using quantum signal processing (QSP) \cite{low2017, weibe2019}.
With an implementation of $\reflector_\Lambda$, $\projector_\Lambda$ is constructed using a controlled application of $\reflector_\Lambda$ and two Hadamard gates.

In QSP, one assumes access to a $(\lambda, m, \epsilon)-$block-encoding, $U$, of an operator $A$:
\begin{equation}
        ||A - \lambda (\langle 0^m|\otimes I)U(|0^m\rangle \otimes I)|| <  \epsilon.
        \label{def:blockencoding}
\end{equation}
By introducing an additional ancilla qubit, one can query $U$ to efficiently generate a polynomial transformation of $A$ for a given polynomial $P \in \mathbb{R}[x]$.
We reproduce the main theorem of QSP regarding $\Pi$ with definite parity below:
\begin{theorem}[QSP, Theorem 1 \cite{lin2020}]
Let $U$ be a $(\lambda, m, 0)-$block-encoding of a Hermitian matrix $A$.
Let $P \in \mathbb{R}[x]$ be a degree$-l$ even or odd real polynomial and $|P(x)| < 1$ for any $x \in [-1, 1]$.
Then there exists an $(1, m+1, 0)-$block-encoding $\tilde{U}$ of $P(A/\lambda)$ using $l$ queries of $U, U^\dagger$ and $O((m+1)l)$ other primitive quantum gates.
\label{thm:qsp}
\end{theorem}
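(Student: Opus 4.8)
\emph{Proof proposal.} The plan is to prove this by the qubitization construction together with the standard characterization of quantum signal processing, following \cite{lin2020,low2017}. First I would use that $A$ is Hermitian to write $A = \sum_k \lambda_k |\psi_k\rangle\langle\psi_k|$ with $|\lambda_k| \leq \lambda$, so that $A/\lambda$ has eigenvalues in $[-1,1]$; the block-encoding condition in Eq.~\ref{def:blockencoding} with $\epsilon = 0$ then reads $(\langle 0^m|\otimes I)\,U\,(|0^m\rangle\otimes I) = A/\lambda$ exactly. The goal is to exhibit a unitary $\tilde U$ on one extra qubit together with the $m$ block-encoding ancillas and $\hilbert$ whose $(|0\rangle\otimes|0^m\rangle)$-block equals $P(A/\lambda)$, using $l$ calls to $U,U^\dagger$ and $O((m+1)l)$ other gates.

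Second, I would establish the \emph{qubitization} structure. Passing from $U$ to the iterate built from $U$ and the reflection $Z_\Pi = 2(|0^m\rangle\langle 0^m|\otimes I) - I$, one checks that the combined operator leaves invariant each two-dimensional subspace $\mathcal{B}_k = \spn\{\,|0^m\rangle|\psi_k\rangle,\ |\perp_k\rangle\,\}$, where $|\perp_k\rangle$ is the component of $U\,|0^m\rangle|\psi_k\rangle$ orthogonal to $|0^m\rangle\otimes\hilbert$, and acts on $\mathcal{B}_k$ as a single-qubit $SU(2)$ rotation by an angle $\theta_k$ with $\cos\theta_k = \lambda_k/\lambda$. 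The eigenvalues $\lambda_k = \pm\lambda$ give degenerate one-dimensional blocks and must be treated separately, but there the claimed identity $P(\lambda_k/\lambda) = P(\pm 1)$ holds trivially. This step is pure bookkeeping once the invariant-subspace decomposition is written out, and it is where most of the detail budget should go.

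Third, I would introduce the single ancilla qubit and interleave $l$ applications of $U$ (and $U^\dagger$, as the parity of $P$ may require) with single-qubit rotations $e^{i\phi_j \sigma_z}$ on that ancilla for a phase sequence $(\phi_0,\ldots,\phi_l)$. Restricted to any $\mathcal{B}_k$ (tensored with the ancilla), this is exactly the standard QSP circuit, whose relevant block is $P_\phi(\cos\theta_k) = P_\phi(\lambda_k/\lambda)$ for a polynomial $P_\phi$ determined by the $\phi_j$; it has degree $\leq l$, parity $l \bmod 2$, real coefficients, and $\|P_\phi\|_{[-1,1]}\leq 1$. The QSP characterization theorem of \cite{lin2020} (building on Low--Chuang and Gilyén--Su--Low--Wiebe) states that, conversely, \emph{every} such admissible polynomial is realized by some choice of phases; applying this to our given $P$ yields phases with $P_\phi = P$. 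Reassembling over all $k$ shows the resulting unitary is a $(1,m+1,0)$-block-encoding of $P(A/\lambda)$. Finally I would count resources: $l$ queries of $U/U^\dagger$, each accompanied by a reflection $Z_\Pi$ costing $O(m)$ elementary gates plus one ancilla rotation, for $O((m+1)l)$ additional primitive gates.

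The main obstacle is the QSP characterization itself, i.e.\ the \emph{completeness} direction: that for every degree-$\le l$ real polynomial of the correct parity with sup-norm at most one there exist phase angles realizing it. This requires an inductive ``polynomial peeling'' argument with careful tracking of parity and of the unit-disc bound on the partial Chebyshev-like decompositions, and it is precisely the ingredient imported wholesale from \cite{lin2020}; in the write-up I would cite it rather than reprove it, and instead present in full the reduction to the single-qubit picture (the qubitization step) and the block-by-block reassembly.
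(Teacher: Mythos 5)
The paper does not actually prove this statement: it is imported verbatim (as ``Theorem 1'' of the cited reference \cite{lin2020}) and used as a black box, so there is no in-paper proof to compare against. Judged on its own, your sketch is a faithful outline of the standard qubitization/QSVT argument behind that theorem: exact block-encoding of $A/\lambda$ with spectrum in $[-1,1]$, reduction to $SU(2)$ rotations with $\cos\theta_k=\lambda_k/\lambda$ on the two-dimensional invariant subspaces, interleaving with projector-controlled phase rotations, and invoking the QSP characterization for the existence of phases, with the resource count $l$ queries and $O((m+1)l)$ extra gates coming from the reflections about $|0^m\rangle$ and the ancilla rotations. One detail is slightly misstated: a single phase sequence does not realize an arbitrary \emph{real} polynomial with $|P|\le 1$; the characterization theorem lets you prescribe only the real part of the achievable (generally complex) polynomial, and a purely real $P(A/\lambda)$ is obtained by combining the sequence with its phase-negated counterpart (the real-part/LCU trick), which is precisely why the output block-encoding uses $m+1$ rather than $m$ ancillas. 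Since you defer the characterization to \cite{lin2020} anyway, this is a minor imprecision in attribution of the extra qubit rather than a gap that breaks the argument, but the write-up should route the ``+1'' through the real-part extraction rather than through the phase rotations alone.
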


To build $\reflector_\Lambda$ we take $A = \ham$.
For the analysis below we assume oracle access to $U_\ham$, and defer constructions of $U_\ham$ to Section~\ref{sec:resource_estimates}.
Additional literature on block-encoding sparse and POVM Hamiltonians can be readily found \cite{gilyen2019}.

Regarding $P \in \mathbb{R}[x]$, we note that
\begin{equation}
    \reflector_\Lambda = -\mathrm{sign}[H - \Lambda I].
\end{equation}
The equality is verified by first expressing $\reflector_\Lambda$ in the eigenbasis of $\ham$
\begin{equation}
    \reflector_\Lambda = \sum_{k: E_k \leq \Lambda}|\psi_k\rangle \langle \psi_k| - \sum_{k: E_k > \Lambda}|\psi_k\rangle \langle \psi_k|.
    \label{eq:reflector}
\end{equation}
Taking $H - \Lambda I = \sum_{k}(E_k - \Lambda)|\psi_k\rangle\langle\psi_k|$, the $-$sign$[\cdot]$ function will return $+1$ for $E_k \leq \Lambda$ and $-1$ for $E_k > \Lambda$, returning the expression above for $\reflector_\Lambda.$

An efficient polynomial approximation to $\textrm{sign}[\cdot]$ exists, which we will take as $\Pi$:
\begin{theorem} [Lemma 3 \cite{lin2020}]
For all $0 < \delta < 1,\ 0 < \epsilon < 1$, there exists an efficiently computable odd polynomial $S(\cdot; \delta, \epsilon) \in \mathbb{R}[x]$ of degree $l = O(\frac{1}{\delta} \log(\frac{1}{\epsilon}))$, such that
\begin{enumerate}
    \item for all $x \in [-1, 1]$, $|S(x; \delta, \epsilon)| \leq 1$, and
    \item for all $x \in [-1, -\delta] \cup [\delta, 1], \ |S(x; \delta, \epsilon) - \mathrm{sign}(x)| \leq \epsilon.$
\end{enumerate}
\label{thm:sign_poly}
\end{theorem}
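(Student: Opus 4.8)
The plan is to obtain $S$ in three stages: first replace $\mathrm{sign}$ by an entire function that is already $\epsilon$-close on the ``resolved'' set $[\delta,1]\cup[-1,-\delta]$, namely a rescaled error function $\mathrm{erf}(kx)$; then replace $\mathrm{erf}(kx)$ by a genuine polynomial $P$ of degree $O(\frac{1}{\delta}\log\frac{1}{\epsilon})$ by integrating a Chebyshev truncation of the Gaussian $e^{-k^2 t^2}$; and finally rescale $P$ by a factor slightly below $1$ so that the output is bounded by $1$ on all of $[-1,1]$. Charging each of the three stages a third of the error budget $\epsilon$ and adding by the triangle inequality yields both claimed properties.

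For the first stage I would use the elementary tail bound $\mathrm{erfc}(y) = 1 - \mathrm{erf}(y) \le e^{-y^2}$ for $y \ge 0$. Since $\mathrm{erf}$ and $\mathrm{sign}$ are both odd, this gives $|\mathrm{sign}(x) - \mathrm{erf}(kx)| \le e^{-k^2 x^2} \le e^{-k^2\delta^2}$ for every $x$ with $|x| \ge \delta$, so the choice $k = \Theta(\frac{1}{\delta}\sqrt{\log(1/\epsilon)})$ makes this at most $\epsilon/3$; note also that $|\mathrm{erf}(kx)| < 1$ for all real $x$. For the third stage, once the second stage supplies an odd polynomial $P$ with $|P(x) - \mathrm{erf}(kx)| \le \epsilon/3$ on $[-1,1]$, we have $|P| \le 1 + \epsilon/3$ there, so $S(x;\delta,\epsilon) := P(x)/(1+\epsilon/3)$ is odd, has the same degree, satisfies $|S| \le 1$ on $[-1,1]$ (property 1), and obeys $|S - P| \le \epsilon/3$ there; the triangle inequality across the three stages then gives $|S(x;\delta,\epsilon) - \mathrm{sign}(x)| \le \epsilon$ on $[\delta,1]\cup[-1,-\delta]$ (property 2). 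Efficiency is immediate: the Chebyshev coefficients of a Gaussian are closed-form (modified Bessel functions), and the remaining operations --- truncation, term-by-term integration, rescaling --- are explicit.

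The second stage is where the degree is fixed, and I would carry it out as follows. Write $\mathrm{erf}(kx) = \frac{2k}{\sqrt{\pi}}\int_0^x e^{-k^2 t^2}\,dt$ and let $q$ be the degree-$d$ truncation of the Chebyshev expansion on $[-1,1]$ of the even entire function $g(t) = e^{-k^2 t^2}$. On the Bernstein ellipse with foci $\pm 1$ and semi-minor axis $b$ one has $|g(z)| = e^{-k^2\,\mathrm{Re}(z^2)} \le e^{k^2 b^2}$, and $\log\rho = \Theta(b)$ for the ellipse parameter $\rho$, so the Chebyshev truncation error is $O(e^{k^2 b^2 - c d b})$; maximizing the exponent over $b$ (the optimum is $b = \Theta(d/k^2)$, which is small since $k$ is large) shows this is at most $\epsilon'$ for $d = O(k\sqrt{\log(1/\epsilon')})$. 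Then $P(x) := \frac{2k}{\sqrt{\pi}}\int_0^x q(t)\,dt$ is odd of degree $d+1$ with $|P(x) - \mathrm{erf}(kx)| \le \frac{4k}{\sqrt{\pi}}\epsilon'$ for all $x \in [-1,1]$; choosing $\epsilon' = \Theta(\epsilon/k)$ makes this at most $\epsilon/3$ at degree $d+1 = O(k\sqrt{\log(k/\epsilon)}) = O(\frac{1}{\delta}\log(1/\epsilon))$ after absorbing the lower-order $\log k$ and $\log(1/\delta)$ terms into $\log(1/\epsilon)$.

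I expect the main obstacle to be exactly this degree bookkeeping. A naive Taylor truncation of $e^{-k^2 t^2}$, or its Chebyshev truncation at the ``natural'' order $k^2$, costs degree $\Theta(k^2) = \Theta(\frac{1}{\delta^2}\log\frac{1}{\epsilon})$; recovering the optimal $\Theta(\frac{1}{\delta}\log\frac{1}{\epsilon})$ hinges on using the non-obvious thin Bernstein ellipse with $b = \Theta(\sqrt{\log(1/\epsilon')}/k)$ and on checking that inflating the Gaussian error to $\epsilon' \sim \epsilon/k$, needed to absorb the factor $k$ produced by integration, only adds $\log k$ and so does not change the asymptotics. A shortcut would be to quote the Gaussian- and $\mathrm{erf}$-polynomial lemmas of Gily\'en--Su--Low--Wiebe directly and feed them into stages one and three, which are otherwise routine.
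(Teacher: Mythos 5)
The paper itself does not prove this statement: it is imported verbatim as Lemma~3 of Ref.~\cite{lin2020}, which in turn rests on Low and Chuang's polynomial approximation of the error function \cite{low2017}. So there is no in-paper argument to compare against, and your reconstruction follows exactly the standard route behind the cited lemma ($\mathrm{sign}\rightarrow\mathrm{erf}(kx)\rightarrow$ integrated Chebyshev truncation of the Gaussian $\rightarrow$ rescaling). Stages one and three of your plan are fine as written: $\mathrm{erfc}(y)\le e^{-y^2}$ gives the $\epsilon/3$ closeness of $\mathrm{erf}(kx)$ to $\mathrm{sign}(x)$ on $|x|\ge\delta$ for $k=\Theta(\delta^{-1}\sqrt{\log(1/\epsilon)})$, and the final rescaling by $1/(1+\epsilon/3)$ correctly secures property~1 at a cost of $\epsilon/3$.

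The one genuine gap is in the degree bookkeeping of stage two, and it is exactly the step you flagged: treating $\log k$ and $\log(1/\delta)$ as ``lower order'' than $\log(1/\epsilon)$. With the Gaussian truncated to uniform error $\epsilon'=\Theta(\epsilon/k)$, your Bernstein-ellipse bound yields $d=\Theta\bigl(k\sqrt{\log(1/\epsilon')}\bigr)=\Theta\bigl(\delta^{-1}\sqrt{\log(1/\epsilon)\,\bigl(\log(1/\epsilon)+\log(1/\delta)\bigr)}\bigr)$. When $\log(1/\delta)\gg\log(1/\epsilon)$ (say $\epsilon$ constant and $\delta$ exponentially small), this exceeds the claimed $O(\delta^{-1}\log(1/\epsilon))$ by a factor of order $\sqrt{\log(1/\delta)/\log(1/\epsilon)}$, so the lemma as stated --- for all $0<\delta,\epsilon<1$ --- is not established by the written argument. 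The fix stays within your framework: do not reduce the Gaussian error to $\epsilon/k$ at all. Integrate the Chebyshev tail term by term and use $\bigl|\int_0^x T_n(t)\,dt\bigr|=O(1/n)$, so the tail beyond degree $d$ acquires a factor $O(1/d)$ that cancels the $2k/\sqrt{\pi}$ prefactor from the integral representation of $\mathrm{erf}$; this is what the explicit Bessel-coefficient analysis in Ref.~\cite{low2017} accomplishes, giving $|P(x)-\mathrm{erf}(kx)|\le\epsilon$ at degree $O\bigl(\sqrt{(k^2+\log(1/\epsilon))\log(1/\epsilon)}\bigr)=O\bigl(k\sqrt{\log(1/\epsilon)}+\log(1/\epsilon)\bigr)$ and hence $O(\delta^{-1}\log(1/\epsilon))$ uniformly in $\delta$ and $\epsilon$. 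With that replacement (or by quoting the $\mathrm{erf}$ lemma directly, as you suggest in your shortcut), the rest of your proof goes through.
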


Using Theorems~\ref{thm:qsp} and \ref{thm:sign_poly} and the relationship between $\reflector_\Lambda$ and $\projector_\Lambda$, we provide a theorem regarding the construction of $\projector_\Lambda$.
\begin{theorem}
Suppose one is given a $(\lambda, m, 0)-$block-encoding for $\ham$, $U_\ham$.
By Theorems~\ref{thm:qsp} and \ref{thm:sign_poly}, one can construct a $(1, m+2, 0)-$block-encoding for $S[(H - \lambda I)/(\lambda + |\Lambda|); \delta, \epsilon)$, $U_S(\Lambda, \delta, \epsilon)$, with $l = O(\frac{1}{\delta} \log(\frac{1}{\epsilon}))$ queries to $U_{\ham}$ and $O((m+2)l)$ other primitive quantum gates.

Consider further the operator $U_\theta(\Lambda, \delta, \epsilon)$
\begin{figure}[H]
    \begin{tabular}{cc}
    \begin{quantikz}[column sep=5pt, row sep={20pt,between origins}]
        \lstick{$|0\rangle$} &\gate[3]{U_\theta(\Lambda, \delta, \epsilon)}&\\
        \lstick{$|0^{m+2}\rangle$} && \\
        \lstick{$|0^n\rangle$}&&\\
    \end{quantikz}  $\ \ $=&
    \begin{quantikz}[column sep=5pt, row sep={20pt,between origins}]
        &\gate{H}&\ctrl{1}&\gate{H}&\\
        &&\gate[2]{U_S(\Lambda, \delta, \epsilon)}&& \\
        &&&&\\
    \end{quantikz}
    \label{fig:utheta}
    \end{tabular}
\end{figure}

For any state $|\psi\rangle \in \hilbert$ with eigenstate expansion $|\psi\rangle = \sum_{k} \alpha_k |\psi_k\rangle$ such that $\sum_{k: E_k  \leq \Lambda - \delta (\lambda + |\Lambda|)}|\alpha_k|^2 > \gamma^2$, the state
\begin{equation}
    |\psi^\prime \rangle = (\langle 0^{m+3}| \otimes I)U_{\theta}(\Lambda, \delta, \epsilon)(|0^{m+3}\rangle \otimes |\psi\rangle)
\end{equation}
satisfies the conditions
\begin{equation}
    ||(1 - \projector_{\Lambda+ \delta(\lambda + |\Lambda|)})|\psi^\prime \rangle|| \leq \epsilon/2.
\end{equation}
\begin{equation}
    ||\projector_{\Lambda + \delta(\lambda + |\Lambda|)} |\psi^\prime \rangle || > \gamma(1 - \epsilon/2).
\end{equation}
\label{thm:be_projector_loose}
\end{theorem}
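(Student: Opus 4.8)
\emph{Proof sketch.} The plan is to assemble $U_S$ from two standard block-encoding manipulations, then evaluate the postselected output of the circuit defining $U_\theta$, which collapses the claim to two one-line estimates in the eigenbasis of $\ham$.

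\emph{Building $U_S$.} Since $U_\ham$ is a $(\lambda,m,0)$-block-encoding we have $\|\ham\|\le\lambda$, so $\|\ham-\Lambda I\|\le\lambda+|\Lambda|=:\mu$ and the Hermitian operator $(\ham-\Lambda I)/\mu$ has spectrum in $[-1,1]$. Moreover $(\ham-\Lambda I)/\mu=(\lambda/\mu)(\ham/\lambda)-(\Lambda/\mu)I$, a combination of $\ham/\lambda$ and $\pm I$ whose nonnegative weights $\lambda/\mu$ and $|\Lambda|/\mu$ sum to $1$, so a standard two-term linear-combination-of-unitaries step with one extra ancilla turns $U_\ham$ into a $(1,m+1,0)$-block-encoding of $(\ham-\Lambda I)/\mu$. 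Applying Theorem~\ref{thm:qsp} to it with the degree-$l$ odd polynomial $-S(\cdot;\delta,\epsilon)$ of Theorem~\ref{thm:sign_poly} (which is bounded by $1$ on $[-1,1]$) produces, with one further ancilla, the advertised $(1,m+2,0)$-block-encoding $U_S$ using $l=O(\delta^{-1}\log\epsilon^{-1})$ queries of $U_\ham,U_\ham^\dagger$ and $O((m+2)l)$ other gates. Let $\tilde S$ denote the operator block-encoded by $U_S$; on an eigenstate $|\psi_k\rangle$ it acts as multiplication by a scalar $\tilde s_k$ with $|\tilde s_k|\le1$, and whenever $|E_k-\Lambda|\ge\delta\mu$ the second property of $S$ gives $|\tilde s_k-r_k|\le\epsilon$, where $r_k\in\{+1,-1\}$ is the eigenvalue on $|\psi_k\rangle$ of $\reflector_\Lambda=-\mathrm{sign}[\ham-\Lambda I]$ (the overall sign of the polynomial is fixed precisely so that $\tilde S$ tracks $\reflector_\Lambda$ and not $-\reflector_\Lambda$).

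\emph{Evaluating $U_\theta$.} Feeding $|0\rangle\otimes|0^{m+2}\rangle\otimes|\psi\rangle$ through the three blocks of $U_\theta$ — Hadamard on the extra ancilla, then $U_S$ controlled on that ancilla, then Hadamard again — and projecting all $m+3$ ancillas onto $|0\rangle$ is the familiar interference calculation that converts a block-encoded reflection-type operator into the corresponding projector-type operator; it yields
\begin{equation}
    |\psi^\prime\rangle=\tfrac12\,(I+\tilde S)\,|\psi\rangle.
\end{equation}

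\emph{The two bounds.} Write the normalized input as $|\psi\rangle=\sum_k\alpha_k|\psi_k\rangle$, so $|\psi^\prime\rangle=\sum_k\alpha_k\tfrac12(1+\tilde s_k)|\psi_k\rangle$. Partition the eigenindices into those with $E_k\le\Lambda-\delta\mu$ (where $r_k=+1$, hence $\tfrac12(1+\tilde s_k)\in[1-\tfrac\epsilon2,1]$), those with $E_k>\Lambda+\delta\mu$ (where $r_k=-1$ forces $\tfrac12(1+\tilde s_k)\in[0,\tfrac\epsilon2]$), and the remaining band (where only $\tfrac12(1+\tilde s_k)\in[0,1]$). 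The vector $(I-\projector_{\Lambda+\delta\mu})|\psi^\prime\rangle$ retains only indices of the second type, so its squared norm is at most $(\tfrac\epsilon2)^2\sum_k|\alpha_k|^2=(\tfrac\epsilon2)^2$, giving the first claim. Conversely $\projector_{\Lambda+\delta\mu}|\psi^\prime\rangle$ retains in particular every index of the first type, whose coefficient has modulus at least $1-\tfrac\epsilon2$, so its squared norm is at least $(1-\tfrac\epsilon2)^2\sum_{E_k\le\Lambda-\delta\mu}|\alpha_k|^2>(1-\tfrac\epsilon2)^2\gamma^2$ by hypothesis; taking square roots gives the second claim.

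\emph{Main obstacle.} The only substantive step is the circuit identity above, together with the observation that $(\ham-\Lambda I)/\mu$ is the correctly normalized argument for the sign polynomial; the rest is bookkeeping. The parts that need care are administrative: landing the ancilla count on exactly $m+2$ (one ancilla for the shift-and-rescale, one for QSP), fixing the polynomial's sign so $\tilde S\approx\reflector_\Lambda$, and tracking the one-sided slack $\pm\delta\mu$ about $\Lambda$ that makes this a \emph{loose} block-encoding of the projector — the leakage bound concerns weight above $\Lambda+\delta\mu$ while the retention bound is driven by weight below $\Lambda-\delta\mu$, with the intervening band left uncontrolled.
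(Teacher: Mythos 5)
Your proof is correct and follows essentially the same route as the paper: write the postselected output as $\tfrac12\bigl(I+S\bigl[(\ham-\Lambda I)/(\lambda+|\Lambda|)\bigr]\bigr)|\psi\rangle$ and then bound the two norms eigenvalue-by-eigenvalue using the sign-polynomial guarantees, restricting to $E_k>\Lambda+\delta(\lambda+|\Lambda|)$ for the leakage bound and to $E_k\leq\Lambda-\delta(\lambda+|\Lambda|)$ for the retention bound. Your only additions are bookkeeping the paper leaves implicit—the LCU shift-and-rescale construction of $U_S$ with its ancilla count, and the explicit choice of the polynomial's overall sign so that the block-encoded operator tracks $\reflector_\Lambda=-\mathrm{sign}[\ham-\Lambda I]$, which the paper's proof tacitly assumes.
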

\begin{proof}
Expanding out $U_{\theta}$ in terms of $U_S$ we find:
\begin{equation}
\begin{split}
        |\psi^\prime \rangle = &\frac{1}{2}(I + (\langle 0^{m+2} |\otimes I)U_S(\Lambda, \delta, \epsilon)(|0^{m+2}\rangle \otimes I))|\psi\rangle\\
        = & \frac{1}{2}(I + S\Big[(H - \Lambda I)/(\lambda + |\Lambda|); \delta, \epsilon\Big])|\psi\rangle \\
        = & \frac{1}{2}\sum_{k}\alpha_k(1 + S\Big[\frac{E_k - \Lambda}{\lambda + |\Lambda|}; \delta, \epsilon\Big])|\psi_k\rangle
\end{split}
\end{equation}
Applying $(1 - \projector_{ \Lambda + \delta(\lambda + |\Lambda|)})$ we then find
\begin{equation}
    \begin{split}
        &||(1 - \projector_{ \Lambda + \delta(\lambda + |\Lambda|)})|\psi^\prime \rangle || = \\
        &
        ||\frac{1}{2}\sum_{k: E_k > \Lambda + \delta(\lambda + |\Lambda|)} \alpha_k (1 + S\Big[\frac{E_k - \Lambda}{\lambda + |\Lambda|}; \delta, \epsilon\Big])|\psi_k\rangle || \\
        & \leq \frac{1}{2}  ||\sum_{k: E_k > \Lambda + \delta(\lambda + |\Lambda|)}\alpha_k \epsilon |\psi_k\rangle || \leq \epsilon/2.
    \end{split}
\end{equation}

Applying instead $\projector_{\Lambda +\delta(\lambda + |\Lambda|)}$ we get:
\begin{equation}
    \begin{split}
        &||\projector_{\Lambda + \delta(\lambda + |\Lambda|)}|\psi^\prime \rangle || = \\
            &
        ||\frac{1}{2}\sum_{k: E_k \leq \Lambda + \delta(\lambda + |\Lambda|)} \alpha_k (1 + S\Big[\frac{E_k - \Lambda}{\lambda + |\Lambda|}; \delta, \epsilon\Big])|\psi_k\rangle || \\
        & \geq  ||\frac{1}{2}\sum_{k: E_k \leq \Lambda - \delta(\lambda + |\Lambda|)} \alpha_k (1 + S\Big[\frac{E_k - \Lambda}{\lambda + |\Lambda|}; \delta, \epsilon\Big])|\psi_k\rangle || \\
        & \geq \frac{1}{2}||\sum_{k: E_k \leq \Lambda - \delta(\lambda + |\Lambda|)} \alpha_k (2 - \epsilon)|\psi_k\rangle || > \gamma (1 - \epsilon/2).
    \end{split}
\end{equation}
In the last line of both proofs we make use of Condition 2 in Theorem~\ref{thm:sign_poly}.
\end{proof}

Application of $U_{\theta}(\Lambda, \delta, \epsilon)$ supresses the overlap of $|\psi\rangle$ outside $\hilbert_\Lambda$ but leaving $|\psi^\prime\rangle$ subnormalized,
as expected of a projection operator.
To bring the normalization of the state back to unity, we carry out amplitude amplification.
\begin{figure}
    \begin{tabular}{cc}
         \begin{quantikz}[column sep=5pt, row sep={20pt,between origins}]
            \lstick{$|0\rangle$} & \gate[3]{U_{sp}} & \\
             \lstick{$|0^{m+3}\rangle$} &  & \\
             \lstick{$|0^n\rangle$} & & \\
         \end{quantikz} & \\
         $\ \ =$\begin{quantikz}
            \lstick{$|0\rangle$} & & &\gate[3]{Q}\gategroup[wires=3,steps=1,style={dashed,rounded corners,inner xsep=2pt}]{Repeat} & \\
             \lstick{$|0^{m+3}\rangle$} & & \gate[2]{U_\theta(\Lambda, \delta, \epsilon\gamma)} & & \rstick{$|0^{m+3}\rangle$}\\
             \lstick{$|0^{n}\rangle$} & \gate{U_I} & & & \rstick{$|\psi\rangle$}
         \end{quantikz} & \\
    \end{tabular}
    \caption{
    Circuit diagram demonstrating state preparation using amplitude estimation, Theorem~\ref{thm:aa}, guaranteeing sampling of low-energy Hilbert spaces with an \textit{a priori} desired accuracy $\epsilon$.
    An initial state is created through the oracle $U_I$ and then projected into the low-energy Hilbert space via $U_\theta$.
    Brassard-type amplitude amplification is carried out using the Brassard $Q$ operator defined in Equation~\ref{eq:brassardq}.
    The final output is a state with $>1 - \epsilon$ support on the low-energy Hilbert space.
    \label{fig:aa}
    }
\end{figure}

\begin{theorem}
Suppose you have access to a $(\lambda, m, 0)-$block-encoding of $\ham$, $U_\ham$, and an initial state preparation oracle $U_I$ that satisfies
$$ U_I|0\rangle = \sum_{k}\alpha_k |\psi_k\rangle\ , \sum_{k: E_k \leq \Lambda - \delta (\lambda + |\Lambda|)}|\alpha_k|^2 > \gamma^2.$$

Then, a state with fidelity $1-\epsilon$ in $\hilbert_{\Lambda + \delta(\lambda + |\Lambda|)}$ can be prepared via the circuit in Figure~\ref{fig:aa} using
\begin{enumerate}
    \item $O(\frac{1}{\gamma (1-\epsilon\gamma) \delta}\log{\frac{1}{\gamma\epsilon}})$ queries to $U_\ham$, $O(\frac{1}{\gamma(1-\epsilon\gamma)})$ queries to $U_I$,
    \item $O(m + n)$ total qubits, $O(m)$ auxiliary and $O(n)$ for the system register,
    \item $O(\frac{m}{\gamma(1-\epsilon\gamma) \delta}\log{\frac{1}{\gamma\epsilon}})$ other one- and two-qubit gates.
\end{enumerate}
\label{thm:aa}
\end{theorem}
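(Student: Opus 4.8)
The plan is to combine the approximate-projector circuit $U_\theta$ of Theorem~\ref{thm:be_projector_loose} with Brassard-type amplitude amplification on the flag register, making a deliberate choice of the QSP accuracy parameter so that the renormalization inherent in amplitude amplification does not spoil the target fidelity $1-\epsilon$.

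Following Figure~\ref{fig:aa}, I would first apply $U_I$ to the $n$-qubit system register to obtain $|\psi\rangle=\sum_k\alpha_k|\psi_k\rangle$ with $\sum_{k:\,E_k\le\Lambda-\delta(\lambda+|\Lambda|)}|\alpha_k|^2>\gamma^2$, and then apply $U_\theta(\Lambda,\delta,\epsilon\gamma)$, i.e.\ run the sign-polynomial QSP at accuracy $\epsilon'=\epsilon\gamma$ rather than $\epsilon$. Writing $\projector\equiv\projector_{\Lambda+\delta(\lambda+|\Lambda|)}$ and $|\psi'\rangle=(\langle0^{m+3}|\otimes I)\,U_\theta(\Lambda,\delta,\epsilon\gamma)\,(|0^{m+3}\rangle\otimes|\psi\rangle)$, Theorem~\ref{thm:be_projector_loose} gives
\begin{equation}
  \|(I-\projector)|\psi'\rangle\|\le\tfrac{\epsilon\gamma}{2},\qquad
  \|\projector|\psi'\rangle\|>\gamma\bigl(1-\tfrac{\epsilon\gamma}{2}\bigr),
\end{equation}
so the amplitude of the flag-$|0^{m+3}\rangle$ branch is $a\equiv\||\psi'\rangle\|>\gamma(1-\tfrac{\epsilon\gamma}{2})\ge\gamma(1-\epsilon\gamma)$ (and $a\le1$), while the renormalized branch satisfies $\|(I-\projector)(|\psi'\rangle/a)\|\le\frac{\epsilon\gamma/2}{\gamma(1-\epsilon\gamma/2)}\le\epsilon$ (using $\epsilon\gamma<1$), so its overlap with $\hilbert_{\Lambda+\delta(\lambda+|\Lambda|)}$ is already at least $1-\epsilon$. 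The whole task thus reduces to amplifying that branch to near-unit weight without disturbing its system part.

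That is exactly what amplitude amplification delivers. Taking the ``good'' subspace to be the one where all $m+3$ flag qubits are zero and $\mathcal{A}=U_\theta(\Lambda,\delta,\epsilon\gamma)\cdot(I_{m+3}\otimes U_I)$, the Brassard operator is $Q=-\mathcal{A}S_0\mathcal{A}^\dagger S_{\mathrm{good}}$ with $S_0$ the reflection about $|0^{m+3+n}\rangle$ and $S_{\mathrm{good}}$ the reflection about the nonzero-flag subspace. Since the good-branch amplitude equals $a$ and $a>\gamma(1-\epsilon\gamma)$, $O(1/a)=O\!\bigl(\tfrac{1}{\gamma(1-\epsilon\gamma)}\bigr)$ rounds rotate the state to within $O(a)$ of $|0^{m+3}\rangle\otimes(|\psi'\rangle/a)$; measuring the flag returns $0^{m+3}$ with $\Omega(1)$ probability, and conditioned on that the system register holds $|\psi'\rangle/a$, of fidelity $\ge1-\epsilon$ in $\hilbert_{\Lambda+\delta(\lambda+|\Lambda|)}$. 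Because only a lower bound on $a$ is available one uses an exact or fixed-point variant of amplitude amplification (the extra single ancilla visible in Figure~\ref{fig:aa}), or randomizes the iteration count; neither changes the $O(1/a)$ scaling.

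The resource estimates follow by multiplying per-round costs by the $O(1/a)$ rounds. Each $Q$ invokes $U_\theta(\Lambda,\delta,\epsilon\gamma)$ and $U_I$ a constant number of times; by Theorems~\ref{thm:qsp} and~\ref{thm:be_projector_loose} one application of $U_\theta(\Lambda,\delta,\epsilon\gamma)$ uses $l=O\!\bigl(\tfrac1\delta\log\tfrac1{\epsilon\gamma}\bigr)$ queries to $U_\ham$ and $O((m+2)l)$ other gates on $n$ system and $m+3$ ancilla qubits, while $S_0$ and $S_{\mathrm{good}}$ add only $O(m+n)$ gates and $O(1)$ ancillas per round. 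This yields $O\!\bigl(\tfrac{1}{\gamma(1-\epsilon\gamma)\delta}\log\tfrac1{\gamma\epsilon}\bigr)$ queries to $U_\ham$, $O\!\bigl(\tfrac{1}{\gamma(1-\epsilon\gamma)}\bigr)$ queries to $U_I$, $O(m+n)$ qubits ($O(m)$ auxiliary, $O(n)$ system), and $O\!\bigl(\tfrac{m}{\gamma(1-\epsilon\gamma)\delta}\log\tfrac1{\gamma\epsilon}\bigr)$ other one- and two-qubit gates, the reflection cost being dominated by the $U_\theta$ contribution. The main obstacle is not any single step but carrying the two error scales together: the QSP error $\epsilon'=\epsilon\gamma$ must be small enough that after the unavoidable renormalization by $1/a\approx1/\gamma$ the residual weight outside $\hilbert_{\Lambda+\delta(\lambda+|\Lambda|)}$ is still $\le\epsilon$, while that same $a$ sets the amplification depth — and it is this coupling that forces the choice $\epsilon'=\epsilon\gamma$ and produces the $\gamma(1-\epsilon\gamma)$ factors throughout the bounds.
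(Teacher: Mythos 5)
Your proposal is correct and follows essentially the same route as the paper: apply $U_I$ then $U_\theta(\Lambda,\delta,\epsilon\gamma)$, use Theorem~\ref{thm:be_projector_loose} to show the flagged branch has amplitude $>\gamma(1-\epsilon\gamma/2)$ and renormalized infidelity $\le\frac{\epsilon\gamma/2}{\gamma(1-\epsilon\gamma/2)}<\epsilon$, then run Brassard amplitude amplification with $\mathcal{A}=U_\theta(\Lambda,\delta,\epsilon\gamma)(I\otimes U_I)$ for $O(1/(\gamma(1-\epsilon\gamma)))$ rounds and multiply by the per-round cost $l=O(\frac{1}{\delta}\log\frac{1}{\gamma\epsilon})$ of queries to $U_\ham$. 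Your explicit remark about handling the unknown amplitude via exact or fixed-point amplification matches the paper's implicit use of one extra ancilla qubit to guarantee unit overlap with the flag sector, so the two arguments coincide in substance.
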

\begin{proof}
We initialize the $n-$qubit system register as
\begin{equation}
    |\psi\rangle = U_{\theta}(\Lambda, \delta, \gamma \epsilon)(I \otimes I \otimes U_I)|0\rangle |0^{m+2}\rangle |0^n\rangle.
\end{equation}
By Theorem~\ref{thm:be_projector_loose}, selecting an error of $\epsilon \gamma$ in $U_\theta$ guarantees a fidelity relative to $\hilbert_{\Lambda + \delta(\lambda + |\Lambda|)}$ within the $|0^{m+3}\rangle$ sector of at least $\epsilon$:
\begin{equation}
\begin{split}
    &\frac{||(1 - \projector_{\Lambda + \delta(\lambda + |\Lambda|)})(\langle 0^{m+3}|\otimes I)|\psi \rangle ||}{||(\langle 0^{m+3}|\otimes I)|\psi \rangle ||} \\
    &\leq \frac{\epsilon \gamma/2}{\gamma(1 - \epsilon \gamma/2)} < \epsilon,
\end{split}
\end{equation}
where we make use of the sub-normalization of the state
\begin{equation}
    ||(\langle 0^{m+3}|\otimes I)|\psi \rangle || > \gamma(1 - \epsilon \gamma/2).
\end{equation}
and $\gamma, \epsilon < 1$.

Next, we carry out Brassard's amplitude amplification  \cite{brassard2002} with the iterate
\begin{equation}
    \begin{split}
        &Q = -\mathcal{A}S_0\mathcal{A}S_\chi \\
        &\mathcal{A} = U_\theta(\Lambda, \delta, \epsilon\gamma)(I\otimes I\otimes U_I)\\
        &S_\chi = (2|0^{m+3}\rangle \langle 0^{m+3}| - I)\otimes I.
    \end{split}
    \label{eq:brassardq}
\end{equation}
The final state after amplitude amplification, $|\psi^\prime \rangle$ will have unit overlap with the $|0^{m+3}\rangle$ sector and infidelity at most $\epsilon$ relative to $\hilbert_{\Lambda + \delta(\lambda + |\Lambda|)}$:
\begin{equation}
    \frac{||(1 - \projector_{\Lambda + \delta(\lambda + |\Lambda|)})(\langle 0^{m+3}|\otimes I)|\psi^\prime \rangle ||}{||(\langle 0^{m+3}|\otimes I)|\psi^\prime \rangle ||}  < \epsilon
\end{equation}

\begin{equation}
    ||(\langle 0^{m+3}|\otimes I)|\psi^\prime \rangle || = 1.
\end{equation}

The initial amplitude for amplitude amplification is $||(\langle 0^{m+3}|\otimes I)|\psi \rangle || > \gamma(1 - \epsilon\gamma/2)$.
The complexity of amplitude estimation scales inversely with the initial overlap, with a constant number of queries to $U_\theta$ and $U_I$, four for the prior and one for the latter.
The query complexity relative to $U_\ham$ is determined via Theorem~\ref{thm:be_projector_loose}.
\end{proof}

\subsection{Sampling \texorpdfstring{$\hilbert_\Lambda^\perp$}{\textit{H}\_Lambda\^⟂}}
$\hilbert_\Lambda^\perp$ is sampled in a similar approach to $\hilbert_\Lambda$.
The major difference is in $\Pi$, where instead of taking $P = S[\cdot]$, we take $P = -S[\cdot]$.
We reformulate Theorem~\ref{thm:be_projector_loose} for $\hilbert_\Lambda^\perp$ as
\begin{theorem}
Suppose one is given $U_\ham, U_S, U_\theta$ as in Theorem~\ref{thm:be_projector_loose}.
Consider the operators $U_{-S}$ which block-encodes $-S[(H - \lambda I)/(\lambda + |\Lambda|); \delta, \epsilon]$ and the corresponding $U_{-\theta}$.

For any state $|\psi\rangle \in \hilbert$ with eigenstate expansion $|\psi\rangle = \sum_{k} \alpha_k |\psi_k\rangle$ such that $\sum_{k: E_k  \geq \Lambda + \delta (\lambda + |\Lambda|)}|\alpha_k|^2 > \gamma^2$, the state
\begin{equation}
    |\psi^\prime \rangle = (\langle 0^{m+3}| \otimes I)U_{-\theta}(\Lambda, \delta, \epsilon)(|0^{m+3}\rangle \otimes |\psi\rangle)
\end{equation}
satisfies the conditions
\begin{equation}
    ||(1 - \projector_{\Lambda- \delta(\lambda + |\Lambda|)})|\psi^\prime \rangle|| > \gamma(1 - \epsilon/2).
\end{equation}
\begin{equation}
    ||\projector_{\Lambda - \delta(\lambda + |\Lambda|)} |\psi^\prime \rangle || \leq \epsilon/2.
\end{equation}
%\label{thm:be_projector_loose}
\end{theorem}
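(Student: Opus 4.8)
The plan is to mirror the proof of Theorem~\ref{thm:be_projector_loose} essentially line for line, the only change being that the QSP polynomial is negated so that the Hadamard‑test sandwich around $U_{-S}$ produces a spectral filter that survives on the \emph{high}-energy eigencomponents rather than the low-energy ones. First I would check that $-S(\cdot;\delta,\epsilon)$ is still admissible for Theorem~\ref{thm:qsp}: it is odd and bounded by $1$ on $[-1,1]$ exactly because $S$ is, so $U_{-S}$ exists as a $(1,m+2,0)$-block-encoding of $-S[(H-\Lambda I)/(\lambda+|\Lambda|);\delta,\epsilon]$ with the same query count, and $U_{-\theta}$ is the corresponding $(1,m+3,0)$-block-encoding. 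Expanding the Hadamard test exactly as in Theorem~\ref{thm:be_projector_loose} then gives
\begin{equation}
  |\psi^\prime\rangle = \tfrac12\Bigl(I + S\bigl[(H-\Lambda I)/(\lambda+|\Lambda|);\delta,\epsilon\bigr]\Bigr)|\psi\rangle = \tfrac12\sum_k \alpha_k\Bigl(1 + S\bigl[\tfrac{E_k-\Lambda}{\lambda+|\Lambda|};\delta,\epsilon\bigr]\Bigr)|\psi_k\rangle ,
\end{equation}
so the per-eigenstate weight $\tfrac12(1+S[x])$ is $\approx 1$ near $x=+1$ and $\approx 0$ near $x=-1$, i.e.\ it is an $\epsilon$-approximation of the complementary projector $I-\projector_\Lambda$.

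Next I would establish the second inequality. Applying $\projector_{\Lambda-\delta(\lambda+|\Lambda|)}$ keeps only terms with $E_k\le\Lambda-\delta(\lambda+|\Lambda|)$, for which $\tfrac{E_k-\Lambda}{\lambda+|\Lambda|}\le-\delta$; Condition~2 of Theorem~\ref{thm:sign_poly} gives $|1 + S[\tfrac{E_k-\Lambda}{\lambda+|\Lambda|};\delta,\epsilon]|\le\epsilon$, and since $\sum_k|\alpha_k|^2\le 1$ this norm is at most $\epsilon/2$. For the first inequality I would go the other way: $||(I-\projector_{\Lambda-\delta(\lambda+|\Lambda|)})|\psi^\prime\rangle||$ is bounded below by the norm of its further restriction to the terms with $E_k\ge\Lambda+\delta(\lambda+|\Lambda|)$, where $\tfrac{E_k-\Lambda}{\lambda+|\Lambda|}\ge\delta$ and Condition~2 yields $1+S[\cdot]\ge 2-\epsilon$; factoring out $2-\epsilon$ and invoking the hypothesis $\sum_{k:\,E_k\ge\Lambda+\delta(\lambda+|\Lambda|)}|\alpha_k|^2>\gamma^2$ gives the bound $\gamma(1-\epsilon/2)$. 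These are precisely the two estimates at the end of the proof of Theorem~\ref{thm:be_projector_loose}, with the roles of ``inside $\hilbert_\Lambda$'' and ``outside $\hilbert_\Lambda$'' interchanged.

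I do not expect a genuine obstacle: the content is a sign flip together with careful bookkeeping of which eigenenergy windows fall under Condition~1 versus Condition~2 of Theorem~\ref{thm:sign_poly}. The one place to be careful is the asymmetry of the two thresholds — the set one controls at level $\epsilon/2$ is $\{E_k\le\Lambda-\delta(\lambda+|\Lambda|)\}$, whereas the weight $\gamma$ is promised only on $\{E_k\ge\Lambda+\delta(\lambda+|\Lambda|)\}$, and the uncontrolled band $(\Lambda-\delta(\lambda+|\Lambda|),\,\Lambda+\delta(\lambda+|\Lambda|))$ must simply be discarded on both sides (it contributes nonnegatively to the lower-bounded quantity and is excluded by the projector in the upper-bounded one). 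Keeping the direction of each inequality straight through that discard is the only subtlety.
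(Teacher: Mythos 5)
Your proof matches the paper's essentially line for line: expand the Hadamard sandwich around the sign-flipped QSP polynomial into per-eigenstate weights, bound the contribution of the window $E_k\le\Lambda-\delta(\lambda+|\Lambda|)$ by $\epsilon/2$ via Condition~2 of Theorem~\ref{thm:sign_poly}, and lower-bound the component on $E_k\ge\Lambda+\delta(\lambda+|\Lambda|)$ by $\gamma(1-\epsilon/2)$ after discarding the intermediate band, exactly as in the paper with the roles of the two windows swapped relative to Theorem~\ref{thm:be_projector_loose}. The only discrepancy is sign bookkeeping --- you write the filter as $\tfrac12(1+S)$ where the paper's proof writes $\tfrac12(1-S)$ --- but this merely reflects the paper's own ambiguity about whether its block-encoded polynomial approximates $\mathrm{sign}$ or $-\mathrm{sign}$; your assignment of which eigenvalue windows receive weight $\approx 1$ versus $\le\epsilon$ is the correct one, so the argument is sound.
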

\begin{proof}
Expanding out $U_{-\theta}$ in terms of $U_{-S}$ we find:
\begin{equation}
\begin{split}
        |\psi^\prime \rangle = &\frac{1}{2}(I + (\langle 0^{m+2} |\otimes I)U_{-S}(\Lambda, \delta, \epsilon)(|0^{m+2}\rangle \otimes I))|\psi\rangle\\
        = & \frac{1}{2}(I - S\Big[(H - \Lambda I)/(\lambda + |\Lambda|); \delta, \epsilon\Big])|\psi\rangle \\
        = & \frac{1}{2}\sum_{k}\alpha_k(1 - S\Big[\frac{E_k - \Lambda}{\lambda + |\Lambda|}; \delta, \epsilon\Big])|\psi_k\rangle
\end{split}
\end{equation}
Applying $(1 - \projector_{ \Lambda - \delta(\lambda + |\Lambda|)})$ we then find
\begin{equation}
    \begin{split}
        &||(1 - \projector_{ \Lambda - \delta(\lambda + |\Lambda|)})|\psi^\prime \rangle || = \\
        &
        ||\frac{1}{2}\sum_{k: E_k > \Lambda - \delta(\lambda + |\Lambda|)} \alpha_k (1 - S\Big[\frac{E_k - \Lambda}{\lambda + |\Lambda|}; \delta, \epsilon\Big])|\psi_k\rangle || \\
        & \geq ||\frac{1}{2}\sum_{k: E_k > \Lambda + \delta(\lambda + |\Lambda|)} \alpha_k (1 - S\Big[\frac{E_k - \Lambda}{\lambda + |\Lambda|}; \delta, \epsilon\Big])|\psi_k\rangle || \\
        & \geq \frac{1}{2}||\sum_{k: E_k > \Lambda + \delta(\lambda + |\Lambda|)} \alpha_k (2 - \epsilon)|\psi_k\rangle || > \gamma (1 - \epsilon/2).
    \end{split}
\end{equation}

Applying instead $\projector_{\Lambda -\delta(\lambda + |\Lambda|)}$ we get:
\begin{equation}
    \begin{split}
        &||\projector_{\Lambda - \delta(\lambda + |\Lambda|)}|\psi^\prime \rangle || = \\
            &
        ||\frac{1}{2}\sum_{k: E_k \leq \Lambda - \delta(\lambda + |\Lambda|)} \alpha_k (1 - S\Big[\frac{E_k - \Lambda}{\lambda + |\Lambda|}; \delta, \epsilon\Big])|\psi_k\rangle || \\
        & \leq \frac{1}{2}  ||\sum_{k: E_k < \Lambda - \delta(\lambda + |\Lambda|)}\alpha_k \epsilon |\psi_k\rangle || \leq \epsilon/2.
    \end{split}
\end{equation}
In the last line of both proofs we make use of Condition 2 in Theorem~\ref{thm:sign_poly}.
\end{proof}

Once again we will have a subnormalized $|\psi^\prime\rangle$ ameliorated by amplitude amplification.
The only difference here is that $U_I$ is an oracle which prepares a state
$$U_I|0\rangle = \sum_{k}\alpha_k |\psi_k\rangle\ , \sum_{k: E_k \geq \Lambda + \delta (\lambda + |\Lambda|)}|\alpha_k|^2 > \gamma^2.$$

\subsection{Computing \texorpdfstring{$d_i, \functional_\ham$}{d\_i, F}}
Here we provide an overview of three different observable estimation methods considered in this work and asymptotic scalings for their execution.

The first is a method by Rall \cite{rall2020} which adapts adapts Brassard's algorithm for amplitude estimation.
\begin{theorem}[Canonical observable estimation (COE) \cite{rall2020}]
Assuming access to a state preparation unitary $U_{sp}$ that prepares a state $|\psi\rangle$, and a $(\lambda, m, 0)-$block-encoding of an observable $O$, $U_O$, one can estimate $\langle \psi | O |\psi \rangle$ to additive error $\epsilon$ with probability $1 - q$ employing
\begin{enumerate}
    \item $O(\frac{\lambda}{\epsilon}\log(\frac{1}{q}))$ queries to $U_{sp}$ and $U_{O}$,
    \item $O(m+n+\log(\frac{\lambda}{\epsilon}))$ total qubits, $n$ for the system register, $m$ for the observable block-encoding, and the rest for amplitude estimation readout.
\end{enumerate}
\label{thm:coe}
\end{theorem}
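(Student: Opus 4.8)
\emph{Proof proposal.} The plan is to reduce the task to a single real-amplitude estimation and then invoke Brassard-type amplitude estimation \cite{brassard2002} with median amplification, following \cite{rall2020}. First I would exploit the fact that the block-encoding is \emph{exact} (the third slot is $0$), so that
\begin{equation}
(\langle 0^m|\otimes I)\,U_O\,(|0^m\rangle\otimes I) = O/\lambda ,
\end{equation}
and hence, writing $|G\rangle = |0^m\rangle\otimes|\psi\rangle$ with $|\psi\rangle = U_{sp}|0^n\rangle$,
\begin{equation}
\langle G|U_O|G\rangle = \langle\psi|O|\psi\rangle/\lambda \in [-1,1],
\end{equation}
a \emph{real} number since $O$ is Hermitian and $|\langle\psi|O|\psi\rangle|\le\lambda$. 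So it suffices to estimate $\mathrm{Re}\langle G|U_O|G\rangle$ to additive error $\epsilon/\lambda$ and rescale by $\lambda$.

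Second, I would build the standard Hadamard-test circuit: adjoin one control qubit, prepare $|0^m\rangle\otimes|\psi\rangle$ using $U_{sp}$, apply $H$ to the control, apply controlled-$U_O$ (controlled on $|1\rangle$) to the ancilla-plus-system register, and apply $H$ to the control again; call the resulting unitary $A$. The key identity is that the probability of measuring the control qubit in $|0\rangle$ is exactly
\begin{equation}
\Pr[\mathrm{ctrl}=0] = \tfrac12\bigl(1 + \mathrm{Re}\langle G|U_O|G\rangle\bigr) = \tfrac12\bigl(1 + \langle\psi|O|\psi\rangle/\lambda\bigr),
\end{equation}
with \emph{no} $O^2$ cross-term, precisely because one projects only onto the control qubit and not onto the $m$-qubit block-encoding ancilla; the contribution of $U_O$ to the $|0\rangle_{\mathrm{ctrl}}$ branch is $\tfrac12(I+U_O)|G\rangle$ and $\langle G|U_O|G\rangle = \langle\psi|O|\psi\rangle/\lambda$ by the displayed identity. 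The marked subspace for amplitude estimation is thus just $|0\rangle\langle 0|$ on the control qubit.

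Third, I would run amplitude estimation on $A$ with marked projector $\Pi = |0\rangle\langle 0|_{\mathrm{ctrl}}\otimes I$: one Grover iterate $Q=-A S_0 A^\dagger S_\chi$ uses $A$ and $A^\dagger$ once each, hence one controlled-$U_O$, one controlled-$U_O^\dagger$, one $U_{sp}$, and one $U_{sp}^\dagger$. Running $M=O(\lambda/\epsilon)$ iterates estimates $\Pr[\mathrm{ctrl}=0]$ to additive error $O(\epsilon/\lambda)$, hence $\langle\psi|O|\psi\rangle$ to additive error $\epsilon$, with a constant success probability; repeating $O(\log(1/q))$ times and returning the median boosts this to $1-q$, giving the claimed $O\!\bigl(\tfrac{\lambda}{\epsilon}\log\tfrac1q\bigr)$ queries to $U_{sp}$ and $U_O$. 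The qubit count is $n$ for the system, $m$ for the block-encoding ancilla, one for the Hadamard-test control, and $O(\log(\lambda/\epsilon))$ for the amplitude-estimation readout register, totalling $O(m+n+\log(\lambda/\epsilon))$.

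The part I expect to require the most care is the amplitude-estimation error bookkeeping: converting "additive error $O(1/M)$ in the estimated amplitude $a=\sin\theta$" into "additive error $\epsilon$ in $\langle\psi|O|\psi\rangle$" uniformly over all regimes of $\Pr[\mathrm{ctrl}=0]$ (including values near $0$ or $1$, where $\Pr=\sin^2\theta$ is nonlinear in $\theta$), and verifying that a single amplitude-estimation run succeeds with probability bounded away from $1/2$ so the median-of-$O(\log(1/q))$ amplification applies. These are exactly the technical points handled in \cite{rall2020}, so I would cite that analysis rather than reproduce it.
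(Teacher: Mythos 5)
Your proposal is correct: the paper does not prove Theorem~\ref{thm:coe} itself but imports it from Rall's work, which it describes exactly as you do --- an adaptation of Brassard-type amplitude estimation --- and your Hadamard-test reduction (using unitarity of $U_O$ to kill any cross-term, so $\Pr[\mathrm{ctrl}=0]=\tfrac12(1+\langle\psi|O|\psi\rangle/\lambda)$), followed by $O(\lambda/\epsilon)$ Grover iterates and median-of-$O(\log(1/q))$ amplification, reproduces the stated query and qubit counts. The error bookkeeping you flag is benign, since the amplitude-estimation error bound is uniformly $O(1/M)$ in the amplitude (as $\sqrt{a(1-a)}\le 1/2$), so deferring those constants to the cited analysis is fine.
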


The next method is that of Huggins \textit{et al.} \cite{huggins2022}, which we will refer to as gradient observable estimation (GOE).
This method leverages gradient estimation to parallelize observable estimation for non-commuting observables yielding a square-root speedup in the number of observables relative to COE.
\begin{theorem}[Gradient observable estimation (GOE) \cite{huggins2022}]
Assuming access to a state preparation unitary $U_{sp}$ and a sequence of $(\lambda_i, m_i, 0)-$ block-encodings for observables $O_i$, $U_{O_i}$ for $i = 1...M$, one can estimate $\{\langle \psi |O_i |\psi \rangle\}_{i = 1...M}$ to additive errors $\epsilon$ and probability $1 - q$ employing
\begin{enumerate}
    \item $O(\sqrt{M} \cdot \frac{\max_{i \in [1, M]}{\lambda_i}}{\epsilon}\log(\frac{M}{q}))$ queries to $U_{sp}$ and $U_O$.
    \item $O(Mm + n +  M\log(\frac{1}{\epsilon}))$ total qubits, $n$ for the system register, $Mm$ for the observable block-encoding, and the rest for estimation readout.
\end{enumerate}
\label{thm:goe}
\end{theorem}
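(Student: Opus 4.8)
The plan is to follow Huggins \textit{et al.}~\cite{huggins2022} and reduce the simultaneous estimation of the $M$ expectation values to a single call of quantum gradient estimation, so that the $\sqrt{M}$ improvement over an $M$-fold repetition of Theorem~\ref{thm:coe} emerges from Jordan's algorithm and its refinements. Introduce an auxiliary ``gradient register'' that holds a vector $x=(x_1,\dots,x_M)$ of grid points and consider the linear function
\begin{equation}
    f(x) = \sum_{i=1}^{M} x_i\,\langle\psi|O_i|\psi\rangle ,
\end{equation}
whose gradient $\nabla f = (\langle\psi|O_1|\psi\rangle,\dots,\langle\psi|O_M|\psi\rangle)$ is exactly the data we want. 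Because $f$ is linear, all higher derivatives vanish, so the discretization and finite-difference errors that usually dominate gradient-estimation analyses are absent, and the query complexity is controlled entirely by the cost and precision of a phase oracle for $f$.

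First I would construct that phase oracle from the given resources. Controlled on the value $x$ stored in the gradient register, a linear-combination-of-unitaries circuit built from the block-encodings $U_{O_i}$ yields a block-encoding of $A(x)=\sum_i x_i O_i$ with subnormalization at most $(\max_i\lambda_i)\|x\|_1$. Composing this with $U_{sp}$ and applying quantum signal processing (or the Hadamard-test/amplitude-estimation construction already used in Theorem~\ref{thm:coe}) produces, up to a controllable error, a circuit that imprints the phase $e^{\mathrm{i}\,t\,\langle\psi|A(x)|\psi\rangle} = e^{\mathrm{i} t f(x)}$ for a fixed $t$. Each such phase-oracle query uses $O(1)$ applications of $U_{sp}$ and a number of applications of the $U_{O_i}$ set by the block-encoding subnormalization, i.e.\ by $\max_i\lambda_i$; the ancilla budget is $O(\sum_i m_i)=O(Mm)$ for the block-encoding registers, plus the $n$ system qubits, plus $O(M\log(1/\epsilon))$ to represent $x$ at the resolution needed to read off $\epsilon$-accurate derivatives, which is the source of the stated qubit count.

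Then I would invoke the gradient-estimation primitive: prepare a uniform superposition over the grid in each of the $M$ coordinates, apply the phase oracle for $f$, perform an inverse quantum Fourier transform on each coordinate register, and measure; for a function with bounded (here vanishing) higher derivatives this returns every component of $\nabla f$ to additive error $\epsilon$. Boosting the success probability by taking a median of $O(\log(M/q))$ independent runs and applying a union bound over the $M$ outputs gives overall success probability $1-q$, and propagating the per-query cost through the total number of phase-oracle queries yields the claimed $O(\sqrt{M}\,(\max_i\lambda_i)/\epsilon\cdot\log(M/q))$ calls to $U_{sp}$ and the $U_{O_i}$. The main obstacle is precisely the error and scaling bookkeeping of the phase oracle: one must show that the quantum-signal-processing truncation error, the imprecision of the $U_{O_i}$ block-encodings, and the growth of the subnormalization $\|x\|_1$ as $x$ ranges over the grid together leave an $O(\epsilon)$ error in each estimated partial derivative while contributing only a $\sqrt{M}$ factor---rather than $M$---to the query count. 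This rescaling argument is the technical heart of~\cite{huggins2022} and is what distinguishes GOE from a naive parallelization of Theorem~\ref{thm:coe}.
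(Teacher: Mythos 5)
The paper does not actually prove Theorem~\ref{thm:goe}: like Theorems~\ref{thm:coe} and~\ref{thm:csoe}, it is imported as a stated result from Huggins \emph{et al.}~\cite{huggins2022}, so there is no in-paper argument to compare against line by line. Your sketch does faithfully reproduce the strategy of that reference---encode the target expectation values as the gradient of $f(x)=\sum_i x_i\langle\psi|O_i|\psi\rangle$, build a phase oracle for $f$ from $U_{sp}$ and controlled applications of the block-encodings $U_{O_i}$, run QFT-based (Jordan/Gily\'en--Arunachalam--Wiebe) gradient estimation, and boost to success probability $1-q$ by medians plus a union bound over the $M$ outputs, which also accounts for the $O(Mm+n+M\log(1/\epsilon))$ qubit budget. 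One caveat worth fixing: your opening claim that because $f$ is linear ``the discretization and finite-difference errors\ldots are absent'' and that the $\sqrt{M}$ ``emerges from Jordan's algorithm'' is misleading as stated. For an \emph{exactly} linear phase oracle, Jordan's algorithm would recover the gradient with $O(1)$ queries; the $O(\sqrt{M}\,\max_i\lambda_i/\epsilon)$ scaling arises precisely because the oracle one can actually construct from $U_{sp}$ and the $U_{O_i}$ is a probability-to-phase-oracle conversion whose encoded function is only approximately linear, and it is the Gily\'en--Arunachalam--Wiebe analysis of gradient estimation for such smooth-but-nonlinear functions over a suitably scaled hypercube (where typical values of $\sum_i x_i a_i$ concentrate at $\sqrt{M}$ rather than $M$ times the coordinate scale) that yields $\sqrt{M}$ instead of $M$. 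Your closing paragraph correctly identifies this rescaling as the technical heart of~\cite{huggins2022}, so the sketch as a whole is consistent with the cited argument, but the two statements should be reconciled rather than left in tension.
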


The final method is a shots-based technique, classical shadows observable estimation (CSOE) \cite{huang2020}.
This technique favors short circuit execution times and exponentially better scaling in observables relative to COE and GOE for a tradeoff in accuracy scaling.
\begin{theorem}[Classical shadows observable estimation (CSOE) \cite{huang2020}]
Assuming access to a state preparation unitary $U_{sp}$ and a sequence of $(\lambda_i, m_i, 0)-$ block-encodings for observables $O_i$, $U_{O_i}$ for $i = 1...M$, one can estimate $\{\langle \psi |O_i |\psi \rangle\}_{i = 1...M}$ to additive errors $\epsilon$ and probability $1 - q$ employing
\begin{enumerate}
    \item $O(\log(\frac{M}{q}) \frac{1}{\epsilon^2} \max{||O_i||^2_\mathrm{shadow}})$ queries to $U_{sp}$ and $U_O$.
    \item $O(n)$ total qubits, $n$ for the system register.
\end{enumerate}
Here the shadow-norm $||O_i||^2_\mathrm{shadow}$ depends on the family of measurements carried out during the shots, and has been studied extensively \cite{huang2020, zhao2021, low2022classical}.
\label{thm:csoe}
\end{theorem}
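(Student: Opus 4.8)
The plan is to reconstruct the classical-shadows argument of Huang \emph{et al.}~\cite{huang2020}, tailored to the statement's input conventions. \textbf{Step 1 (shadow construction).} I would fix a tomographically complete measurement ensemble $\mathcal{U}$---in practice the $n$-qubit Clifford group, or tensor products of single-qubit Cliffords---and define one ``snapshot'' as: draw $U\sim\mathcal{U}$, apply $U$ after $U_{sp}$, measure in the computational basis to obtain $|b\rangle$, and record the pair $(U,b)$. Averaging the post-measurement operator over both the ensemble and the Born outcomes defines the linear measurement channel $\mathcal{M}(\rho)=\mathbb{E}_{U}\sum_{b}\langle b|U\rho U^{\dagger}|b\rangle\,U^{\dagger}|b\rangle\langle b|U$, which for these ensembles is invertible on Hermitian operators with an efficiently applicable inverse $\mathcal{M}^{-1}$ (a depolarizing-type inverse globally, a tensor product of single-qubit inverses locally). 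The single-shot estimator is $\hat{\rho}=\mathcal{M}^{-1}(U^{\dagger}|b\rangle\langle b|U)$, and linearity gives unbiasedness $\mathbb{E}[\hat{\rho}]=\rho=|\psi\rangle\langle\psi|$, hence $\mathbb{E}[\operatorname{tr}(O_i\hat{\rho})]=\langle\psi|O_i|\psi\rangle$ for every $i$.

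\textbf{Step 2 (variance bound).} The crux is controlling the fluctuations of $\hat{o}_i=\operatorname{tr}(O_i\hat{\rho})$. Using self-adjointness of $\mathcal{M}^{-1}$, $\hat{o}_i=\langle b|U\,\mathcal{M}^{-1}(O_i)\,U^{\dagger}|b\rangle$, so $\operatorname{Var}[\hat{o}_i]\le\mathbb{E}[\hat{o}_i^{2}]=\mathbb{E}_{U}\sum_{b}\langle b|U\rho U^{\dagger}|b\rangle\bigl(\langle b|U\,\mathcal{M}^{-1}(O_i)\,U^{\dagger}|b\rangle\bigr)^{2}\le\|O_i\|_{\mathrm{shadow}}^{2}$, where the shadow norm is the supremum of that expression over states in place of $\rho$. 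I would then invoke the known closed-form bounds on this quantity for the relevant ensembles (for random global Cliffords, $\|O_i\|_{\mathrm{shadow}}^{2}$ is $O(\operatorname{tr}(O_i^{2}))$ after subtracting the trace part; for $k$-local observables read out with random Pauli bases it is $O(4^{k}\|O_i\|_{\infty}^{2})$). Because the statement supplies each $O_i$ as a $(\lambda_i,m_i,0)$-block-encoding rather than an explicit matrix, one extra remark is needed: the post-processing quantity $\operatorname{tr}(O_i\hat{\rho})$ must be computable from the stored $(U,b)$ together with a classical handle on $O_i$ (e.g.\ its Pauli decomposition), so the $U_{O_i}$ enter only in this classical read-out, not as coherent queries on the device, and the dominant quantum cost is the $O(1)$ queries to $U_{sp}$ per snapshot.

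\textbf{Step 3 (median of means and union bound).} Given unbiased estimators with variance at most $V=\max_i\|O_i\|_{\mathrm{shadow}}^{2}$, I would collect $N=KL$ snapshots, partition them into $K$ batches of $L$, form batch means $\bar{o}_i^{(1)},\dots,\bar{o}_i^{(K)}$, and report the median $\hat{\mu}_i$. Chebyshev with $L=O(V/\epsilon^{2})$ makes each batch mean $\epsilon$-accurate with probability at least $2/3$; a Chernoff bound on the count of ``bad'' batches then gives $\Pr[|\hat{\mu}_i-\langle\psi|O_i|\psi\rangle|>\epsilon]\le q/M$ once $K=O(\log(M/q))$; a union bound over the $M$ observables yields joint success probability $1-q$. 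Multiplying $K$ and $L$ reproduces the claimed snapshot count $N=O\!\bigl(\log(M/q)\,\epsilon^{-2}\max_i\|O_i\|_{\mathrm{shadow}}^{2}\bigr)$, i.e.\ that many queries to $U_{sp}$ (and to $U_{O_i}$ for the classical processing). The space bound is immediate: Clifford/Pauli snapshot circuits act on the $n$-qubit system register with no ancillas, and everything downstream is classical, so $O(n)$ qubits suffice.

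\textbf{Main obstacle.} The analytically substantive part is the variance/shadow-norm estimate in Step 2: it is ensemble-specific and requires a genuine second-moment (Weingarten-calculus) computation of $\mathcal{M}$, $\mathcal{M}^{-1}$, and the state-supremum defining $\|\cdot\|_{\mathrm{shadow}}$; the rest---unbiasedness, median-of-means, union bound---is routine concentration bookkeeping. A secondary subtlety peculiar to this formulation is squaring the block-encoding input convention with classical-shadows post-processing, which as noted only ever needs a classical description of each $O_i$ rather than a coherent oracle call.
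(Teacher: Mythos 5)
Your proposal is correct and is essentially the canonical Huang--Kueng--Preskill argument (snapshot construction, shadow-norm variance bound, median-of-means, union bound), which is exactly what the paper relies on: the paper states this theorem by citation to \cite{huang2020} and gives no proof of its own. Your observation that the block-encodings $U_{O_i}$ enter only through classical post-processing, with the quantum cost dominated by queries to $U_{sp}$ on the bare $n$-qubit register, is consistent with how the paper uses the result later (e.g.\ $Q_{\textrm{CSOE}} = Q_{U_{sp}}$ with no observable ancillas in the resource estimates).
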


% Note to ADB: think about this, carefully
\subsection{Fitting  \texorpdfstring{$g_i$}{gi} \label{sec:gi_quantum}}
In the previous two sections we established quantum algorithms for state preparation within $\hilbert_\Lambda, \hilbert_\Lambda^\perp$ and observable estimation.
Here we show that knowing the errors from state preparation and observable estimation are enough to bound the error in the fit model parameters $g_i$.
To disambiguate between $\epsilon$ for state preparation and observable estimation, we will use $\epsilon_{sp}$ and $\epsilon_{oe}$ for the former and latter.

Consider the general form of the fitting problem for $g_i$
\begin{equation}
    y = X \cdot \beta,
    \label{eq:linear_reg}
\end{equation}
where $y_i, X_{ij}$ are estimates of $\functional_\ham[\psi_i], d_j[\psi_i]$ over the sampled states $|\psi_i\rangle$ using the quantum algorithms for observable estimation and state preparation algorithms in the previous sections.
$\beta_j = g_j$ are the parameters we will fit.

To understand the error in $\beta$ due to finite $\epsilon_{oe}, \epsilon_{sp}$, we decompose $y = y^\prime + \epsilon_y$, $X = X^\prime + \epsilon_{X}$ where $y^\prime$ and $X^\prime$ are estimates of $\functional_\ham[\psi_i]$ and $d_j[\psi_i]$ with $\epsilon_{oe} = \epsilon_{sp} = 0$.
At this stage, $\beta$ may be expressed in terms of $\beta^\prime \equiv (X^\prime)^{-1} \cdot y^\prime$, $\epsilon_y$ and $\epsilon_X$.
In general, however, we do not know every entry in $\epsilon_y, \epsilon_X$, since this would require knowing $y^\prime$ and $X^\prime$.

Depending on how much information we have about $\epsilon_y, \epsilon_X$, different inferences can be drawn about the error in $\beta$ relative to $\beta^\prime$.
If one knew something about the distribution of the independent variate ($X$) and the noise ($\epsilon_y$, $\epsilon_X$), sophisticated statistical techniques can be used to propagate errors from $\epsilon_y, \epsilon_X$ to $\beta$.
This would include relative magnitude of the variance of $\epsilon_y$ and $\epsilon_X$ \cite{Beech1962}, or the relative magnitude of the variance of $\epsilon_y$ and the variance in $X$ \cite{frost2002}, or even just the variance in $X$ \cite{Clarke2013}.
Employing these techniques would require detailed study of the distribution of noise in both state preparation and observable estimation algorithms.
Some effort for this has already been carried out for state preparation \cite{pathak2022}.

In this work we take the approach of bounding $\epsilon_y, \epsilon_X$, and using the bounds to propagate errors to $\beta$.
The entries in $\epsilon_y, \epsilon_{X}$ can be bounded as:
\begin{equation}
\begin{split}
    &|(\epsilon_y)_i| \leq \epsilon_{oe}^{\ham} + \lambda^\ham(2\epsilon_{sp, i} + \epsilon_{sp, i}^2),\\
    &|(\epsilon_{X})_{ij}| \leq \epsilon^{d_j}_{oe} + \lambda^{d_j}(2\epsilon_{sp, i} + \epsilon_{sp, i}^2).
    \label{eq:epsilon_xy}
\end{split}
\end{equation}

The two terms in the bound of $\epsilon_y$ arise from observable estimation and state preparation respectively.
First, we consider the case of $\epsilon_{oe} > 0, \epsilon_{sp} = 0$, wherein $|(\epsilon_y)_i| < \epsilon_{oe}^\ham$.
Next, the case of $\epsilon_{oe} = 0, \epsilon_{sp} > 0$ such that the sampled state $|\psi_i\rangle = |\psi_i^\prime\rangle + |\epsilon_{sp, i}\rangle$ where $|\psi_i^\prime\rangle \in \hilbert_\Lambda$ (or $\hilbert_\Lambda^\perp$), and $|\epsilon_{sp, i}\rangle$ is outside the target space with $|| |\epsilon_{sp, i}\rangle || \leq \epsilon_{sp, i}$.
Then
\begin{equation}
    \begin{split}
    &|(\epsilon_y)_i| = ||\langle \psi_i |\ham |\psi_i \rangle  - \langle \psi_i^\prime |\ham |\psi_i^\prime \rangle || = \\
    &||(\langle \psi_i^\prime| + \langle \epsilon_{sp,i}|)\ham(|\psi_i^\prime\rangle + |\epsilon_{sp,i}\rangle) - \langle \psi_i^\prime |\ham |\psi_i^\prime \rangle || \\
    &\leq ||\ham||(2\epsilon_{sp,i} + \epsilon_{sp,i}^2) < \lambda^\ham(2\epsilon_{sp,i} + \epsilon_{sp,i}^2).
    \end{split}
\end{equation}
As observable expectation values are linear, the two contributions add to yield Equation~\ref{eq:epsilon_xy}.
An identical analysis follows for $\epsilon_X$.

We now make two simplifying assumptions.
First, that $\epsilon_{sp} \ll \epsilon_{oe}$, justified by the quantum resources scaling as $\lambda/\epsilon_{oe}$ for observable estimation but only $\log(1/\epsilon_{sp})$ for state preparation.
Second, that $\epsilon_{oe}^{d_j} \ll \epsilon_{oe}^{\ham}$.
This is taken as we assume block-encoding access to the observables $d_j$ and $\ham$, with norms $\lambda^{d_j}$ and $\lambda^\ham$.
Generally $\lambda^\ham \gg \lambda^{d_j}$, and since the observable estimation algorithm scales as $\lambda/\epsilon_{oe}$, the condition above is justified.

Inverting Equation~\ref{eq:linear_reg}, we find $\beta = (X^\prime + \epsilon_X)^{-1} \cdot (y^\prime + \epsilon_y)$, which to lowest order in error yields $\beta = {X^\prime}^{-1}\cdot(y^\prime + \epsilon_y) = \beta^\prime + X^{-1}\cdot \epsilon_y$ where $|(\epsilon_y)_i| < \epsilon_{oe}^{H}.$
The largest effect of $\epsilon_y$ on $\beta$ would occur when $\mathrm{cov}(d_i, d_j) = 0 \ \forall i \neq j$ over the states in $\hilbert_\Lambda$.
If the variable $d_i$ had covariance with other descriptors, then some amount of variation in the $y$ could be described by a different descriptor, including possibly the fictitious variation due to the error $\epsilon_y$.

Assuming zero covariance, the largest error occurs if the error $\epsilon_y$ saturates the bound $|\epsilon_y| = \epsilon_{oe}^\ham$ at the extremal values of $d_i$ with opposite sign.
If the error occurred with the same sign, then there would be no affect on $\beta_j$.
Thereby we can bound the difference between $\beta, \beta^\prime$:
\begin{equation}
    |\beta_j - \beta_j^\prime| < \frac{2\epsilon_{oe}^\ham}{\max_{|\psi\rangle \in \hilbert_\Lambda}{d_j[\psi]} - \min_{|\psi\rangle \in \hilbert_\Lambda}{d_j[\psi]}}.
    \label{eq:param-bounds}
\end{equation}

\subsection{Addressing issues with the classical approach}
\begin{figure}
\includegraphics{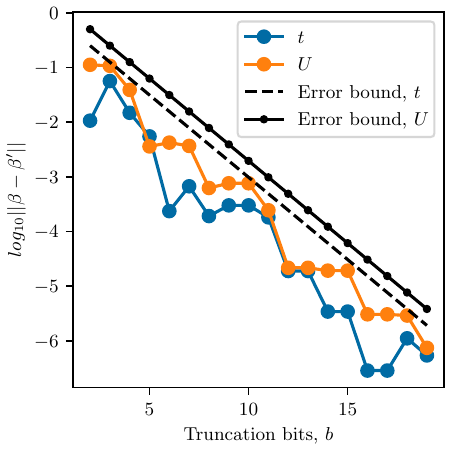}
\caption{Computed model errors for the H$_2$ molecule beyond the Coulson-Fischer point using the quantum algorithm for DMD.
The errors from the quantum algorithm are modeled by a $b-$bit binary truncation when estimating $\functional_\ham$.
Computed data is plotted against the upper bounds in parameter errors determined in Equation~\ref{eq:param-bounds}.
}
\label{fig:quantum_bounds}
\end{figure}

As a final demonstration of the power of the quantum technique for model fitting, we revisit the H$_2$ molecule calculation from Section~\ref{sec:classical_approaches} with error modelling representative of quantum algorithms.

Taking $\epsilon_{oe}^\ham$ as the dominant error in the simulation, we model the observable estimation error for $\functional_\ham$ as a $b-$bit binary truncation with $\epsilon_{oe}^\ham = 2^{-b}$.
Binary truncation is the only error present in both COE and GOE assuming successful measurement.
Practically, this amounts to taking our FCI data, leaving all computed $d_j$ untouched, and applying a $b-$bit binary truncation to the computed $\functional_\ham$.

\begin{figure}
\includegraphics{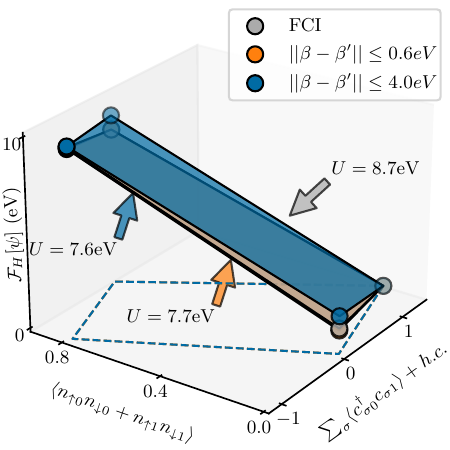}
\caption{Illustration of model fitting procedure for a fixed parameter error budget $|\beta - \beta^\prime|$ using the quantum DMD approach for H$_2$ molecule beyond the Coulson-Fischer point.
FCI results are shown as a reference, and two ``quantum" calculations are shown for $|\beta - \beta^\prime| \leq $\, 0.6 and 4.0 eV.
The ``quantum" calculations include the dominant observable estimation error in the quantum DMD approach, modelled here via $b-$bit binary truncation with $b$ determined by the bound in Equation~\ref{eq:param-bounds}.
Planes are drawn among the four sampled data points, and fit parameters for the Fermi-Hubbard $U$ are also presented for each method.
The projection of this plane into the 2-D descriptor subspace is shown in dashed lines.
}
\label{fig:quantum}
\end{figure}

In Figure~\ref{fig:quantum_bounds}, we show the fit model parameter errors versus the theoretical bound established in Equation~\ref{eq:param-bounds} for $b-$bit binary truncation from 5 to 15 digits after the decimal.
All of the generated data satisfy the bounds, with the data overperforming relative to the bound between a factor of 5 to 10, indicating that the bounds in Equation~\ref{eq:param-bounds} can be tightened with a more sophisticated analysis.

We now carry out the inverse operation, namely for a target parameter error in $t, U$, we use the bound in Equation~\ref{eq:param-bounds} to determine the largest applicable $\epsilon_{oe}^\ham$ that would yield the target parameter error.
We then ``run" the quantum calculation by considering a $b-$bit truncation with $b = \log_2 (1/\epsilon_{oe}^\ham)$, use this data to fit the model parameters, and then see if the fit parameters satisfy the target parameter errors we desired.
The results are shown in Figure~\ref{fig:quantum}.

We find that the computed parameter $t$ does not vary significantly amongst the different methods, similar to the classical case, but that $U$ does and that the parameter error in $U$ always sits below the targeted error.
In addition, we find the computed errors in $U$ perform around a factor of 5 times better than the target, consistent with the observations in Figure~\ref{fig:quantum_bounds}.

Our results demonstrate that, unlike in the classical case, the quantum algorithm for DMD can be used to ensure state preparation, observable estimation, and model parameter estimation to an \textit{a priori} decided error without \textit{a posteriori} corrections.
We have now resolved \textbf{Requirement 3} listed in Sec.~\ref{sec:theoretical_background}.
The rest of the manuscript will be dedicated to resource estimation, demonstrating not only the conceptual benefits of the quantum algorithm for DMD, but also its efficiency.

\section{Resource estimates for quantum algorithm \label{sec:resource_estimates}}
\subsection{Methodology}

We carry out rigorous resource estimates for sampling a single low-energy state in Step 1 of Theorem~\ref{algo:partial_dmd} or equivalently Step 2 of Theorem~\ref{algo:true_dmd}.
At the logical level, we report logical qubit counts and T-gate counts, with the latter being the predominant cost in the fault-tolerant context \cite{litinski2019, fowler2019}.
To compute T-gate counts, we write quantum circuits in terms of Clifford gates, multi-controlled CNOT gates, and single qubit rotation gates.
The multi-controlled CNOT gates and single qubit rotation gates are compiled down to T gates using efficient protocols \cite{niemann2020, selinger2015}.

Logical resource estimates are further refined to physical resource requirements, after making assumptions about the underlying hardware.
To inform our assumptions, we note that extant hardware is able to implement distance $3$ and $5$ surface codes around break-even and with error rates $p_{\textrm{phys}}\sim10^{-2}$ \cite{Acharya2023Feb}.
This architecture is able to implement each code cycle (one round of syndrome extraction) in $t=921\ \textrm{ns}$.
We consider two hypothetical sub-threshold machines that implement the surface code with physical error rates of $p_{\textrm{phys}}=10^{-3}$ and $p_{\textrm{phys}}=10^{-4}$ and syndrome extractions times of $1\ \mu \textrm{s}$.
We analyze several configurations of these two hypothetical machines, determining the number of physical qubits and physical runtime for algorithm execution.

% Overview of techniques for logical and physical
\subsection{Doped 2-D Fermi-Hubbard model\label{sec:doped_FH_estimates}}
\subsubsection{Problem overview}
There has been ongoing interest in the 2-D Fermi-Hubbard model, Equation~\ref{eq:hamhub}, for decades due to its conceptual vicinity to strongly correlated superconductors.
In the strong coupling limit $U/t > 8$ with hole dopings $p \sim  0.1$, the 2-D Fermi-Hubbard model yields competing striped- and superconducting-orders at low energy, mirroring tightly competing orders in cuprate superconductors \cite{emery1999, zheng2017, huang2018, qin2020}.
With the introduction of a next-nearest-neighbor hopping (including an additional parameter $t^\prime$ not present in Equation~\ref{eq:hamhub}) there is evidence for co-existing partially filled stripes and superconductivity \cite{xu2023coexistence}.

We are concerned with the $t^\prime = 0$ case in this manuscript, as there are still significant unresolved questions regarding the competition between stripe and superconducting orders.
Two pieces of information are well known through approximate classical simulation, namely that the stripe order is the stable ground state with an energy of $-0.765t \pm 0.005t$ per site for $U/t = 8$, and that the energy scale of the low-energy stripe excitations is $\sim 0.01t$ per site for $U/t = 8 - 12$ \cite{zheng2017}.

Unfortunately, details of the excitations, like how the stripe order wavelength and direction affect the energy of a state, are not well studied.
This is because approximate classical methods typically perform worse when computing excited states \cite{pathak2022, wagner2021}, thereby implying that the best classical methods would yield systematic errors $> 0.005t$ per site, nearly drowning out any information in the low-energy excitation space.

Instead of relying on approximate methods, one can utilize exact methods like Lanczos diagonalization.
However, due to the exponential scaling of the Hilbert space, Lanczos diagonalization can only be carried out for $N \leq 18$ sites on classical hardware \cite{ohta1994}.
Given that finite size effects are a major source of error in low-energy excited states of the Fermi-Hubbard model \cite{xu2023coexistence}, the limited system size poses additional concerns.

In this section we will provide resource estimates for constructing a low-energy Hamiltonian describing the low-lying striped eigenstates for the Fermi-Hubbard model with $U/t > 8$, $N > 18$, and $p = 0.1$, targeting an error of $\sim 0.001t$ per site.
Formally, we will compute the logical and physical resources required for sampling a single low-energy state of the 2-D Fermi-Hubbard model in Equation~\ref{eq:hamhub} for $U/t = 12$, $N$ sites with a $2 \times N/2$ lattice geometry, and $p = 0.1$ hole-doping from $\hilbert_\Lambda$ with  $\Lambda > -0.76tN + 0.01tN$, using Algorithm~\ref{algo:true_dmd}.
We will take $\functional_\ham$ to be $(\Lambda, \kappa, \epsilon)-$compressible for some $\kappa \in O(1)$ in system size and $\epsilon < 0.005tN$.

\subsubsection{Hamiltonian block-encoding circuit,  \texorpdfstring{$U_\ham$}{U\_ham}}
We use the method of Babbush \textit{et al.} \cite{babbush2018} to block-encode the Fermi-Hubbard Hamiltonian in Equation~\ref{eq:hamhub}.
An $N$ site system requires $\lceil 2N + 2\log_2 N + 4\rceil$ logical qubits and $20N + \lceil 8 \log_2(2N/\epsilon_R) + 10\log_2N\rceil + 40$ T gates.
Here $\epsilon_R$ is the rotation synthesis error demanded for the block-encoding, which is discussed in the section regarding parameter selection.
The norm of the block-encoding is $\lambda^\ham = 4Nt + NU.$

The leading linear term in the T-gate counts arises from the SELECT block of the LCU scheme.
Recent work by Morrison and Landahl \cite{landahl2023} provide an optimized implementation for the SELECT block which yields a smaller prefactor in the total T-gate count.
Using their ``Majorana inspired" encoding of the SELECT block, the total T-gate count is reduced to
\begin{equation}
    \begin{split}
        &Q_{U_\ham} = 2N + \lceil 2\log_2N \rceil + 4\\
        &T_{U_\ham} = 16N + 8\lceil \log_2{2N} + \log_2({2N/\epsilon_R})\rceil + 40.
    \end{split}
    \label{eq:ham_resources}
\end{equation}

\subsubsection{State preparation circuit,  \texorpdfstring{$U_{sp}$}{U\_sp}}
As shown in Figure~\ref{fig:aa}, $U_{sp}$ has two components: $U_I$ and $U_\theta$.
We take $U_I$ to prepare a single computational basis state, thereby requiring no additional qubits or T gates.
In Appendix~\ref{app:hubbard} we demonstrate that for the Fermi-Hubbard model and $\Lambda = 3ptN = 0.3tN > -0.76tN + 0.01tN$, properly chosen computational basis states have significant overlap $\hilbert_\Lambda$.
\begin{equation}
    \begin{split}
        &Q_{U_I} = Q_{U_\ham}\\
        &T_{U_I} = 0
    \end{split}
    \label{eq:init_resources}
\end{equation}

By Theorem~\ref{thm:be_projector_loose}, $U_{\theta}(\Lambda, \delta, \epsilon)$ has the same T-gate count as a controlled-$U_S(\Lambda, \delta, \epsilon)$ gate with an extra qubit.
The $U_S$ circuit is implemented in the standard QSP form:
\begin{figure}[H]
    \hspace*{-0.5cm}\begin{quantikz}[column sep=1pt, row sep={20pt,between origins}]
        \lstick{$|0\rangle$} &\targ{} & \gate{e^{-i\phi_1 Z}} & \targ{} &&\targ{} & \gate{e^{-i\phi_2 Z}} & \targ{} &&\  \hdots \ &\targ{} & \gate{e^{-i\phi_d Z}} & \targ{} & \\
        \lstick{$|0^{m+1}\rangle$} & \octrl{-1}&&\octrl{-1}&\gate[2]{U_\ham} & \octrl{-1}&&\octrl{-1}&\gate[2]{U_\ham} &\ \hdots \ & \octrl{-1}&&\octrl{-1} &\\
        \lstick{$|0^n\rangle$}& &&&&&&&&\ \hdots \ &&&&\\
    \end{quantikz}
    \label{fig:qsp}
\end{figure}
where $d$ is the degree of the approximate polynomial, and $\phi_1$ to $\phi_d$ are the QSP phases.

We follow an efficient phase-factor optimization scheme for QSP \cite{dong2021} to determine the following expression for the number of QSP phases:
\begin{equation}
    \begin{split}
        &d = \Bigl\lceil \frac{2}{5}\sqrt{(\rho^2 + \log{\frac{1}{\epsilon})\log{\frac{1}{\epsilon}}}} \Bigr\rceil \\
        & \rho = \frac{1}{\delta} \sqrt{2\log{\frac{2}{\pi \epsilon^2}}}.
    \end{split}
\end{equation}
One can confirm that the above expression for $d$ exhibits the correct asymptotic behavior in Theorem~\ref{thm:sign_poly}.
The total cost for $U_{\theta}(\Lambda, \delta, \epsilon)$ is then
\begin{equation}
    \begin{split}
        &Q_{U_\theta} = Q_{U_\ham} + 3\\
        &T_{U_\theta} = d \cdot T_{U_\ham} +  \\
        & \ \ \ \ \ d \cdot \lceil (48(2\log_2N + 6) + (10 + 4\log_2\epsilon_R^{-1})) \rceil
    \end{split}
    \label{eq:theta_resources}
\end{equation}
with the three contributions in $T_{U_{\theta}}$ coming from the Hamiltonian block-encoding, the multi-controlled CNOTs, and the single-qubit rotations respectively.

Finally, by including the overhead of amplitude amplification, in this case $O(1/\gamma)$ iterations of amplification and a single extra qubit to ensure a final unit fidelity, we determine the total cost of state preparation to be:
\begin{equation}
    \begin{split}
        &Q_{U_{sp}} = Q_{U_\ham} + 4\\
        &T_{U_{sp}} = \lceil 1 + \frac{1}{2} (\frac{\pi}{2\arcsin \gamma(1 - \epsilon\gamma)} - 1) \rceil (2T_{U_I} + 2T_{U_\theta}).
    \end{split}
    \label{eq:sp_resources}
\end{equation}

\subsubsection{Computing  \texorpdfstring{$d_i, \functional_\ham$}{d\_i, F}}
We first discuss the cost of implementing $U_O$, the observable block-encoding costs for the Fermi-Hubbard model.
For $U_O = U_\ham$, the costs have been established earlier in this section.
For $d_j$, we make the assumption that the $d_j$ operators are sums of $\nu-$body reduced density matrix operators ($\nu-$RDMs).
For the case of the Fermi-Hubbard model, all RDMs in the Jordan-Wigner representation are Pauli strings, and thereby require no T gates or additional qubits to block encode, and further have $\lambda^{d_j} = 1$.
As such, we find that $Q_{U_{d_j}} = T_{U_{d_j}} = 0$ for the Fermi-Hubbard model.

For COE, we find the total cost for estimating $M$ observables $d_j$ and $\functional_\ham$, with accuracy $\epsilon_{oe}^{d_j}$ and $\epsilon_{oe}^\ham$ respectively, and aggregate success probability $q$ (success $q$ over \textit{all} observables) to be \cite{rall2020}:
\begin{equation}
    \begin{split}
        Q_{\textrm{COE}} & = Q_{U_{sp}} + \log_2{\frac{\lambda^\ham}{\epsilon_{oe}^\ham}}\\
        T_{\textrm{COE}} & = \lceil 8\pi\Big(\sum_{j=1}^M \frac{\lambda^{d_j}}{\epsilon_{oe}^{d_j}} (T_{U_{sp}} + T_{U_{d_j}}) + \\
        & \ \ \ \ \ \frac{\lambda^\ham}{\epsilon_{oe}^\ham} (T_{U_{sp}} + T_{U_{\ham}})\Big)\cdot\log(\frac{2(M+1)}{q}) + \\
        & \ \ \ \ \ (M+1)(10 + 4 \log_2 \epsilon_R^{-1})\log_2({\frac{\lambda^\ham}{\epsilon_{oe}^\ham}})^2\rceil
    \end{split}
    \label{eq:coe_resources}
\end{equation}
Here the T-gate cost is split into three parts, first, the cost for computing the $M$ quantities $d_j$ in terms of queries to $U_{sp}$ and $U_O$, then the same for for computing $\functional_\ham$, and finally the overhead for carrying out the inverse Fourier transform required to back out the observable estimates.
Note, we are assuming a scenario where all $M$ observables are non-commuting.

For GOE, a similar analysis yields \cite{huggins2022}:
\begin{equation}
    \begin{split}
        Q_{\textrm{GOE}} & = Q_{U_{sp}} + \log_2{\frac{\lambda^\ham}{\epsilon_{oe}^\ham}} + \sum_{j = 1}^{M} \log_2{\frac{\lambda^{d_j}}{\epsilon_{oe}^{d_j}}}\\
        T_{\textrm{GOE}} & = \lceil 2
R\sqrt{M} \frac{\lambda^\ham}{\epsilon_{oe}^\ham}\cdot (T_{U_{sp}} + T_{U_\ham} + \sum_{j = 1}^{M}T_{U_{d_j}}) \\
        & \ \ \cdot \log{\frac{2(M+1)}{q}} \rceil + \\
        & \ \
        \lceil (10 + 4 \log_2 \epsilon_R^{-1})(\log_2({\frac{\lambda^\ham}{\epsilon_{oe}^\ham}})^2 + \sum_{j = 1}^{M} \log_2({\frac{\lambda^{d_j}}{\epsilon_{oe}^{d_j}}})^2) \rceil
    \end{split}
    \label{eq:goe_resources}
\end{equation}
where the constant $R = 18m(54432\pi m \sqrt{M}\lambda^\ham/\epsilon_{oe}^\ham)^{1/2m}$ and $m = \log(2\sqrt{M}\lambda^\ham/\epsilon_{oe}^\ham).$ \cite{Gilyn2019}.
Note the significantly higher qubit counts due to the parallel nature of the method, and the large prefactor constant $R \sim 10^{3}$.

For CSOE, we make use of efficient algorithms for computing entire $\nu-$RDMs \cite{zhao2021, low2022classical}.
Since the $\functional_\ham$ can be computed using the 1- and 2-RDMs all relevant quantities are computable from just the RDMs.
\begin{equation}
    \begin{split}
        Q_{\textrm{CSOE}} & = Q_{U_{sp}}\\
        T_{\textrm{CSOE}} & = \lceil \begin{pmatrix}
            2N \\
            \nu
        \end{pmatrix} \nu^{3/2} \log_2({2N}) (\frac{\lambda^{\ham}}{\epsilon_{oe}^{\ham}})^2 \log(\frac{2({2N}^{2\nu})}{q}) \rceil \cdot T_{U_{sp}}.
    \end{split}
    \label{eq:csoe_resources}
\end{equation}

\subsubsection{Parameter selection}
% ALL PARAMETERS REQUIRED FOR LOGICAL RESOURCES!
To compute resource estimates, we need to set the following parameters:
\begin{enumerate}
    \item System parameters: $N$, $\Lambda$, $M$ or $\nu$
    \item Observable estimation parameters: $\epsilon_{oe}, q$
    \item State preparation parameters: $\epsilon_{sp}, \gamma, \delta$
    \item Rotation synthesis parameters: $\epsilon_R$.
\end{enumerate}

Regarding the system parameters, we will take $N = 22$ and allow $M$ and $\nu$ to be an independent parameter in our resource estimates.

For observable estimation, we take $q = 0.1$.
As we are trying to describe a space with excitations $\sim 0.01t$ per site, we select an error of 0.003$t$ per site, and thereby set $\epsilon_{oe}^ {\ham} = 0.003 Nt$ and $\epsilon_{oe}^ {d_j} = \epsilon_{oe}^ {\ham}/10$ by Sec.~\ref{sec:gi_quantum}.

For state preparation, we take $\epsilon_{sp} = \epsilon_{oe, d_j}/100$ again by Sec.~\ref{sec:gi_quantum}.
For $\gamma$ and $\Lambda$, we prove in Appendix~\ref{app:hubbard} that for the doped Fermi-Hubbard model with $U/t = 12$, by selecting $\Lambda = 3pNt$, all product states corresponding to physical states with zero on-site double occupancy, have $\gamma$ such that $(\frac{\pi}{2\arcsin \gamma(1 - \epsilon_{sp}\gamma)} - 1) = 1$ for $N \leq 22$.
We take $\delta = (\Lambda - E_0)/(2\lambda^\ham)$, as we expect to sample states uniformly across $\hilbert_\Lambda$, using the best classical estimate to $E_0$ of $-0.765t$ per site.

Finally, we set the rotation synthesis error $\epsilon_R = \epsilon_{sp}/10$.

\begin{figure}
    \centering
    \includegraphics{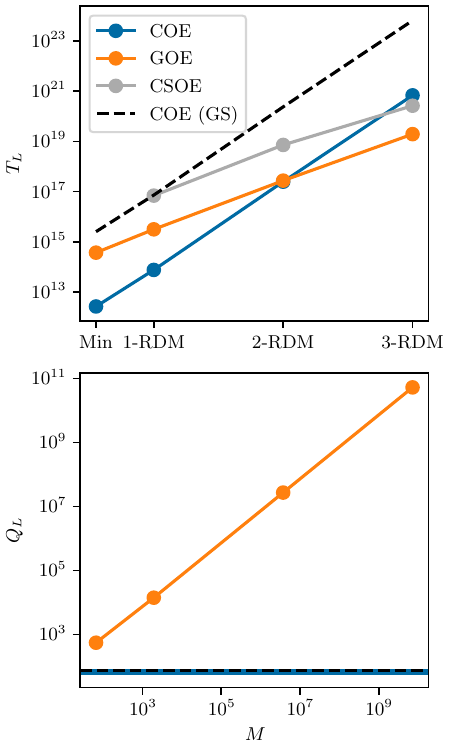}
    \caption{Resource estimates for constructing a low-energy Hamiltonian $\ham_\Lambda$ for the strongly-coupled $U/t = 12$, hole-doped $p = 0.1$, Fermi-Hubbard model with $N=22$ sites using the quantum implementation of Algorithm~\ref{algo:true_dmd} with $\Lambda/t = 0.3pN$.
    Logical qubit and T-gate counts, $Q_L$ and $T_L$, are shown for computing $d_i, \functional_\ham$ for a single low-energy state with varying number of observables, and observable estimation techniques.
    Reference values for an equivalent ground state calculation are also shown, labeled ``GS".
    }
    \label{fig:hubbard_logical}
\end{figure}

\subsubsection{Logical resource estimates}
We present the logical qubit counts $Q_L$ and T-gate counts $T_L$ required for sampling a single low-energy state of the Fermi-Hubbard model in Figure~\ref{fig:hubbard_logical} while varying the system size $N$, number of computed observables $M$, and observable estimation technique (COE, GOE, CSOE).

The marked data points on the x-axis correspond to physically relevant values of $M$.
The minimal set includes three observables per site, which would constitute the simplest compression which is not trivial, $\kappa = 3$, and can be taken as the cost of sampling a single state in Step 1 of Theorem~\ref{algo:partial_dmd} in the case of three known observables.
The remaining data points correspond to the calculation of increasingly sized $\nu-$RDMs, which scale in system size as $O(N^{\nu}),$ and can be seen as an application of Theorem~\ref{algo:true_dmd} in the case there $|\Pi| \sim O(N^\nu)$.

For $M < 10^7$ observables, COE outperforms both CSOE and GOE in T-gate counts.
Beyond $M > 10^7$ observables, GOE performs the best of all the methods.
For $M \gg 10^9$, not shown here, CSOE would be the best option in terms of T-gate counts.
In terms of RDMs, COE outperforms up to the $2-$RDM, after which GOE is better, with both eventually being surpassed by CSOE for large $\nu-$RDMs.

In terms of qubits, GOE would require a massive qubit overhead, six to ten orders of magnitude higher than COE and CSOE, in the region where it would be the most T-gate efficient method.
For problem cases where the total number of system qubits is very large, the additional overhead from the GOE auxiliary qubits may not matter as much.
The Fermi-Hubbard model has a very low system register encoding overhead due to the Jordan-Wigner transformation, thereby resulting in a comically large qubit overhead in GOE.

We also present resource estimates for a ground state simulation, where the ground state is prepared to fidelity $\epsilon_{sp}$ using the same amplitude amplification methodology in this manuscript and computing the same set of observables using COE with identical accuracies.
In Appendix~\ref{app:hubbard}, we present the extrapolations required to get resource estimates for the ground state simulation.

We find that the ground state simulation is consistently three orders of magnitude more expensive in T-gate counts than a single low-energy state simulation.
This can be accounted for via two factors.
First, $\gamma$ for the ground state is much smaller, requiring 17 iterations of amplitude amplification to achieve $\epsilon_{sp}$ fidelity, compared to just a single iteration for the low-energy state.
Second, $\delta$ for the ground state is much smaller as $\delta$ now constitutes the gap between the ground- and first-excited states.
Since the Fermi-Hubbard model has no spectral gap as $N\rightarrow \infty$, the result is a finite-size gap vanishing at least as fast as $1/N$, which in turns yields a $\delta$ which is two orders of magnitude smaller than in the low-energy case.
Importantly, since we have three orders of magnitude difference in T-gate counts between the low-energy and ground-state simulations, one can likely execute the entire Algorithm~\ref{algo:true_dmd} faster than a single ground state calculation.

\begin{table*}[pt]
\begin{tabular}{rrclclccrl}\toprule
&&\multicolumn{2}{c}{Logical} &Physical &&&&\multicolumn{2}{c}{Physical}\\
\cmidrule(lr){3-4}\cmidrule(lr){9-10}
&&Qubits&\multicolumn{1}{c}{T gates}&Error&\multicolumn{1}{c}{Distillery}&Layout&Distance&Qubits&\multicolumn{1}{c}{Time (s)}\\\midrule
\rowcolor{lightgray}min&COE&74&$2.654\times10^{12}$&$10^{-3}$&(15-to-1)${}_{13,5,5}$(15-to-1)${}_{32,12,14}$&Compact&$33$&$2.855\times10^{5}$&$7.883\times10^{8}$\\
min&COE&74&$2.654\times10^{12}$&$10^{-3}$&(15-to-1)${}_{13,5,5}$(15-to-1)${}_{32,12,14}$&Intermediate&$33$&$3.683\times10^{5}$&$4.379\times10^{8}$\\
min&COE&74&$2.654\times10^{12}$&$10^{-3}$&(15-to-1)${}_{13,5,5}$(15-to-1)${}_{32,12,14}$&Fast&$35$&$4.635\times10^{5}$&$3.159\times10^{8}$\\
min&COE&74&$2.654\times10^{12}$&$10^{-3}$&(15-to-1)${}_{18,8,8}$(20-to-4)${}_{33,17,19}$&Compact&$33$&$3.463\times10^{5}$&$7.883\times10^{8}$\\
min&COE&74&$2.654\times10^{12}$&$10^{-3}$&(15-to-1)${}_{18,8,8}$(20-to-4)${}_{33,17,19}$&Intermediate&$33$&$4.291\times10^{5}$&$4.379\times10^{8}$\\
min&COE&74&$2.654\times10^{12}$&$10^{-3}$&(15-to-1)${}_{18,8,8}$(20-to-4)${}_{33,17,19}$&Fast&$34$&$5.007\times10^{5}$&$1.261\times10^{8}$\\
min&COE&74&$2.654\times10^{12}$&$10^{-4}$&(15-to-1)${}_{6,2,2}$(15-to-1)${}_{15,5,6}$&Compact&$16$&$6.570\times10^{4}$&$3.822\times10^{8}$\\
min&COE&74&$2.654\times10^{12}$&$10^{-4}$&(15-to-1)${}_{6,2,2}$(15-to-1)${}_{15,5,6}$&Intermediate&$16$&$8.516\times10^{4}$&$2.123\times10^{8}$\\
\rowcolor{lightgray}min&COE&74&$2.654\times10^{12}$&$10^{-4}$&(15-to-1)${}_{6,2,2}$(15-to-1)${}_{15,5,6}$&Fast&$17$&$1.079\times10^{5}$&$1.221\times10^{8}$\\
min&COE&74&$2.654\times10^{12}$&$10^{-4}$&(15-to-1)${}_{8,2,3}$(20-to-4)${}_{16,8,9}$&Compact&$16$&$7.586\times10^{4}$&$3.822\times10^{8}$\\
min&COE&74&$2.654\times10^{12}$&$10^{-4}$&(15-to-1)${}_{8,2,3}$(20-to-4)${}_{16,8,9}$&Intermediate&$16$&$9.532\times10^{4}$&$2.123\times10^{8}$\\
\rowcolor{lightgray}min&COE&74&$2.654\times10^{12}$&$10^{-4}$&(15-to-1)${}_{8,2,3}$(20-to-4)${}_{16,8,9}$&Fast&$17$&$1.183\times10^{5}$&$6.105\times10^{7}$\\
min&GOE&552&$3.694\times10^{14}$&$10^{-3}$&(15-to-1)${}_{15,6,6}$(15-to-1)${}_{36,13,15}$&Compact&$40$&$2.714\times10^{6}$&$1.330\times10^{11}$\\
min&GOE&552&$3.694\times10^{14}$&$10^{-3}$&(15-to-1)${}_{15,6,6}$(15-to-1)${}_{36,13,15}$&Intermediate&$39$&$3.425\times10^{6}$&$7.204\times10^{10}$\\
min&GOE&552&$3.694\times10^{14}$&$10^{-3}$&(15-to-1)${}_{15,6,6}$(15-to-1)${}_{36,13,15}$&Fast&$41$&$3.995\times10^{6}$&$4.174\times10^{10}$\\
min&GOE&552&$3.694\times10^{14}$&$10^{-4}$&(15-to-1)${}_{7,2,3}$(15-to-1)${}_{17,6,7}$&Compact&$19$&$6.111\times10^{5}$&$6.317\times10^{10}$\\
min&GOE&552&$3.694\times10^{14}$&$10^{-4}$&(15-to-1)${}_{7,2,3}$(15-to-1)${}_{17,6,7}$&Intermediate&$19$&$8.111\times10^{5}$&$3.509\times10^{10}$\\
min&GOE&552&$3.694\times10^{14}$&$10^{-4}$&(15-to-1)${}_{7,2,3}$(15-to-1)${}_{17,6,7}$&Fast&$20$&$9.487\times10^{5}$&$1.958\times10^{10}$\\
min&GOE&552&$3.694\times10^{14}$&$10^{-4}$&(15-to-1)${}_{9,3,3}$(20-to-4)${}_{18,8,10}$&Compact&$19$&$6.227\times10^{5}$&$6.317\times10^{10}$\\
min&GOE&552&$3.694\times10^{14}$&$10^{-4}$&(15-to-1)${}_{9,3,3}$(20-to-4)${}_{18,8,10}$&Intermediate&$19$&$8.227\times10^{5}$&$3.509\times10^{10}$\\
min&GOE&552&$3.694\times10^{14}$&$10^{-4}$&(15-to-1)${}_{9,3,3}$(20-to-4)${}_{18,8,10}$&Fast&$20$&$9.605\times10^{5}$&$9.235\times10^{9}$\\
1-RDM&COE&74&$7.584\times10^{13}$&$10^{-3}$&(15-to-1)${}_{14,5,6}$(15-to-1)${}_{35,13,15}$&Compact&$36$&$3.452\times10^{5}$&$2.457\times10^{10}$\\
1-RDM&COE&74&$7.584\times10^{13}$&$10^{-3}$&(15-to-1)${}_{14,5,6}$(15-to-1)${}_{35,13,15}$&Intermediate&$36$&$4.437\times10^{5}$&$1.365\times10^{10}$\\
1-RDM&COE&74&$7.584\times10^{13}$&$10^{-3}$&(15-to-1)${}_{14,5,6}$(15-to-1)${}_{35,13,15}$&Fast&$38$&$5.522\times10^{5}$&$8.570\times10^{9}$\\
1-RDM&COE&74&$7.584\times10^{13}$&$10^{-4}$&(15-to-1)${}_{6,2,2}$(15-to-1)${}_{17,6,6}$&Compact&$18$&$8.229\times10^{4}$&$1.229\times10^{10}$\\
1-RDM&COE&74&$7.584\times10^{13}$&$10^{-4}$&(15-to-1)${}_{6,2,2}$(15-to-1)${}_{17,6,6}$&Intermediate&$17$&$9.627\times10^{4}$&$6.446\times10^{9}$\\
1-RDM&COE&74&$7.584\times10^{13}$&$10^{-4}$&(15-to-1)${}_{6,2,2}$(15-to-1)${}_{17,6,6}$&Fast&$18$&$1.212\times10^{5}$&$3.489\times10^{9}$\\
1-RDM&COE&74&$7.584\times10^{13}$&$10^{-4}$&(15-to-1)${}_{9,3,3}$(20-to-4)${}_{17,8,9}$&Compact&$18$&$9.439\times10^{4}$&$1.229\times10^{10}$\\
1-RDM&COE&74&$7.584\times10^{13}$&$10^{-4}$&(15-to-1)${}_{9,3,3}$(20-to-4)${}_{17,8,9}$&Intermediate&$17$&$1.082\times10^{5}$&$6.446\times10^{9}$\\
1-RDM&COE&74&$7.584\times10^{13}$&$10^{-4}$&(15-to-1)${}_{9,3,3}$(20-to-4)${}_{17,8,9}$&Fast&$18$&$1.333\times10^{5}$&$1.725\times10^{9}$\\
1-RDM&GOE&14097&$3.134\times10^{15}$&$10^{-3}$&(15-to-1)${}_{15,6,6}$(15-to-1)${}_{38,15,16}$&Compact&$44$&$8.194\times10^{7}$&$1.241\times10^{12}$\\
1-RDM&GOE&14097&$3.134\times10^{15}$&$10^{-3}$&(15-to-1)${}_{15,6,6}$(15-to-1)${}_{38,15,16}$&Intermediate&$44$&$1.092\times10^{8}$&$6.895\times10^{11}$\\
1-RDM&GOE&14097&$3.134\times10^{15}$&$10^{-3}$&(15-to-1)${}_{15,6,6}$(15-to-1)${}_{38,15,16}$&Fast&$46$&$1.208\times10^{8}$&$4.356\times10^{11}$\\
1-RDM&GOE&14097&$3.134\times10^{15}$&$10^{-4}$&(15-to-1)${}_{7,2,3}$(15-to-1)${}_{18,6,7}$&Compact&$22$&$2.048\times10^{7}$&$6.205\times10^{11}$\\
1-RDM&GOE&14097&$3.134\times10^{15}$&$10^{-4}$&(15-to-1)${}_{7,2,3}$(15-to-1)${}_{18,6,7}$&Intermediate&$21$&$2.488\times10^{7}$&$3.291\times10^{11}$\\
1-RDM&GOE&14097&$3.134\times10^{15}$&$10^{-4}$&(15-to-1)${}_{7,2,3}$(15-to-1)${}_{18,6,7}$&Fast&$22$&$2.763\times10^{7}$&$1.661\times10^{11}$\\
1-RDM&GOE&14097&$3.134\times10^{15}$&$10^{-4}$&(15-to-1)${}_{10,3,4}$(20-to-4)${}_{19,8,11}$&Compact&$22$&$2.050\times10^{7}$&$6.205\times10^{11}$\\
1-RDM&GOE&14097&$3.134\times10^{15}$&$10^{-4}$&(15-to-1)${}_{10,3,4}$(20-to-4)${}_{19,8,11}$&Intermediate&$21$&$2.490\times10^{7}$&$3.291\times10^{11}$\\
1-RDM&GOE&14097&$3.134\times10^{15}$&$10^{-4}$&(15-to-1)${}_{10,3,4}$(20-to-4)${}_{19,8,11}$&Fast&$22$&$2.764\times10^{7}$&$9.480\times10^{10}$\\
1-RDM&CSOE&61&$6.903\times10^{16}$&$10^{-3}$&(15-to-1)${}_{16,7,6}$(15-to-1)${}_{41,17,18}$&Compact&$42$&$3.980\times10^{5}$&$2.609\times10^{13}$\\
1-RDM&CSOE&61&$6.903\times10^{16}$&$10^{-3}$&(15-to-1)${}_{16,7,6}$(15-to-1)${}_{41,17,18}$&Intermediate&$42$&$5.074\times10^{5}$&$1.450\times10^{13}$\\
1-RDM&CSOE&61&$6.903\times10^{16}$&$10^{-3}$&(15-to-1)${}_{16,7,6}$(15-to-1)${}_{41,17,18}$&Fast&$44$&$6.282\times10^{5}$&$9.526\times10^{12}$\\
1-RDM&CSOE&61&$6.903\times10^{16}$&$10^{-4}$&(15-to-1)${}_{7,2,3}$(15-to-1)${}_{20,7,8}$&Compact&$21$&$9.732\times10^{4}$&$1.305\times10^{13}$\\
1-RDM&CSOE&61&$6.903\times10^{16}$&$10^{-4}$&(15-to-1)${}_{7,2,3}$(15-to-1)${}_{20,7,8}$&Intermediate&$20$&$1.143\times10^{5}$&$6.903\times10^{12}$\\
1-RDM&CSOE&61&$6.903\times10^{16}$&$10^{-4}$&(15-to-1)${}_{7,2,3}$(15-to-1)${}_{20,7,8}$&Fast&$21$&$1.423\times10^{5}$&$4.142\times10^{12}$\\
1-RDM&CSOE&61&$6.903\times10^{16}$&$10^{-4}$&(15-to-1)${}_{10,4,4}$(20-to-4)${}_{20,10,11}$&Compact&$21$&$1.128\times10^{5}$&$1.305\times10^{13}$\\
1-RDM&CSOE&61&$6.903\times10^{16}$&$10^{-4}$&(15-to-1)${}_{10,4,4}$(20-to-4)${}_{20,10,11}$&Intermediate&$20$&$1.296\times10^{5}$&$6.903\times10^{12}$\\
1-RDM&CSOE&61&$6.903\times10^{16}$&$10^{-4}$&(15-to-1)${}_{10,4,4}$(20-to-4)${}_{20,10,11}$&Fast&$21$&$1.578\times10^{5}$&$2.088\times10^{12}$\\
\bottomrule
\end{tabular}
\caption{Resource estimates for constructing a low-energy Hamiltonian $\ham_\Lambda$ for the strongly-coupled $U/t = 12$, hole-doped $p = 0.1$, Fermi-Hubbard model with $N=22$ sites using the quantum implementation of Algorithm~\ref{algo:true_dmd} with $\Lambda/t = 0.3pN$.  
Physical costs are show for a minimal set of $3$ observables per site, as well as all $1$-RDMs.  
Physical times assume that code cycles take 1 $\mu$s, and there is a single T factory. 
The first highlighted rows represents an optimistic target for a first generation application-scale device.  
The second and third highlighted rows are analyzed to determine consumption-limited physical runtime and qubit costs.\label{tab:resource_big}}
\end{table*}

\subsubsection{Physical resource estimates}
We model in the style of \cite{litinski2019}, abstracting away many of the details of the underlying hardware.
This approach divides the physical qubits of the hardware into three categories: qubits that encode the logical qubits of the algorithm; qubits that are used to route entanglement between those logical qubits; and finally qubits that are used to produce, distill, and buffer the magic $T$ states that are required to perform non-Clifford gates in the surface code.
All of these qubits are used to implement surface code patches of various shapes.

We use a magic states distillation process which goes beyond assuming each qubit used in the distillation process is encoded at the same distance and instead tailors the code distance and shape to the distillation circuit \cite{Litinski2019magicstate}.
These magic states are then routed (and consumed by) several ``layouts'' \cite{litinski2019}, each of which make trade-offs between encoding efficiency of the arrangement of data qubits, and the exposure of various logical operators of the underlying patches of surface code to the auxillary routing qubits.

The choice of surface code distance used for the data qubits is critical to the overall performance and resource requirements, since it determines the number of physical qubits needed to encode each logical qubit, the length of time needed to perform each logical gate on said qubit, the fidelity of those logical operations, and the quiescent lifetime of idle qubits.
We proceed by choosing the lowest code distance that guarantees a sufficiently low overall chance of failure of the entire circuit.
Specifically, we require that the overall chance that any logical qubit fails is no greater than $1\%$.
Similarly, the parameters for generation of magic $T$ states are also of central importance, since so many $T$ gates need to be performed.
We similarly require that the magic $T$ states used are of sufficient quality that the chance of any logical failure being induced by a defective $T$ state is no greater than $1\%$.
Combined, we thus ensure that the executed circuit will be entirely fault-free $98\%$ of the time.

Our procedure to determine the total physical cost of a logical circuit is as follows:
\begin{enumerate}
    \item Identify the smallest magic $T$ factory that satisfies our requirements for the fidelity of the output $T$ state.
    \item Identify the length of time the circuit must run in order for the above $T$ factory to produce enough $T$ states for the logical circuit.  Use this to identify the code distance needed for the logical qubits when the circuit limited by $T$ state production.
    \item Identify the code distance required if an unlimited supply of $T$ states were available (i.e., the circuit is limited by consumption).
    \item Report the larger code distance and time as the requirement for the circuit.
\end{enumerate}

We repeat this process for several layouts and magic $T$ distillation techniques, and report those number is Table~\ref{tab:resource_big}.
All numbers are reported for a \emph{single} $T$ factory, but the overall takeaway is clear: calculations of this scale will take at least a year to perform with the hypothetical first-generation hardware considered.
Achieving the modest goal of a hardware error rates $p_{\textrm{phys}}=10^{-3}$, roughly an order of magnitude improvement over current state of the art, would require a minimum of $2.9\times 10^{5}$ qubits to calculate a minimal set of observables (this is the first highlighted row in the table).
However, this calculation would take approximately $25$ years.

The more ambitious goal of $p_{\textrm{phys}} = 10^{-4}$ allows for much smaller code patches, in particular distance 17 is possible in the second two highlighted rows (vs distance 33 in the first highlighted row).
This simultaneously allows for smaller logical qubits, and faster operations on them.
These two candidate ``fast'' layout setups highlighted in Table~\ref{tab:resource_big} use approximately one hundred thousand qubits, with execution times roughly $2$ and $4$ years.

The two $p_{\textrm{phys}} = 10^{-4}$ calculations are limited by $T$ states production, meaning that allocation of additional qubits to be used for magic $T$ factories will reduce the run time.
By adding additional factories, we can reduce the run time to the consumption-limited $3.98\times 10^{7}$ s, or about 15 months.
The code distance required for the data qubits shrinks to $15$ because of the reduced time in this faster setup.
Then, the total qubits required for the two-stage 15-to-1 and 15-to-1 followed by 20-to-4 (i.e., the second and third highlighted rows) is $1.08\times10^{5}$ and $1.12\times10^{5}$ physical qubits, respectively.

We suspect the highlighted configurations in Table~\ref{tab:resource_big} represent reasonable objectives for first- and second-generation application-scale simulation machines.
While a run-time of $25$ years is unlikely to finish due to being overtaken by technological advances in the intervening decades, further theoretical progress in quantum algorithms, quantum error correction, and this DMD technique in the time before such hardware becomes available may conceivably reduce the resource estimates by another order of magnitude.
Additional details about the layouts, magic $T$ factories, and critically, the search for the ideal asymmetric code distances used for those factories, are included in Appendix~\ref{app:physical}.

\subsection{Cuprate superconductor,  \texorpdfstring{Sr$_{1-x}$La$_{x}$CuO$_2$}{Sr1-xLaxCuO2}\label{sec:cuprate_estimates}}
%\shivesh{Pseudo-potentials here...?}
To cap off our resource estimates, we compute the logical resources required to sample a single low-energy state in Step 2 of Algorithm~\ref{algo:true_dmd} for the electron-doped cuprate superconductor Sr$_{1-x}$La$_x$CuO$_2$ (SLCO).
We consider an eight unit cell computation with $\eta_e = 683$ electrons, an electron doping fraction $x = 0.125$, and use $N_p = 10^6$ plane waves.

Our circuit encodings are practically identical to the Fermi-Hubbard case, with the only difference being the Hamiltonian block-encoding $U_\ham$ and the initialization circuit $U_I$.
Rather than the second quantized form of the Fermi-Hubbard model, we use the first quantized representation of the first principles Hamiltonian to efficiently block encode $\ham$.
For $U_I$, we have two new pieces.
First, an antisymmetrization routine, as antisymmetry of the wave functions must be manually enforced in the first quantization encoding.
Second, the state initialization, wherein we initialize to a Hartree-Fock state instead of a product state like in the Fermi-Hubbard case.
Details of the circuit implementation and resource estimates for $U_\ham$ and $U_I$ can be found in the Appendix of our previous work on resource estimation for ground state preparation \cite{pathak2023}.

Pursuant to the first quantized encoding of $\ham$, we use an adjusted CSOE scheme \cite{Babbush_2023}:
\begin{equation}
    \begin{split}
        Q_{\textrm{CSOE}} & = Q_{U_{sp}}\\
        T_{\textrm{CSOE}} & = \lceil 64e^3 \log(N_p/q) \nu (2\nu + 2e)^\nu \eta_e^\nu (\frac{\lambda^\ham}{\epsilon_{oe}^\ham})^2 \rceil \cdot T_{U_{sp}}.
    \end{split}
\end{equation}

Regarding parameter selection, based on prior knowledge from experiments and classical simulations, it is known that the low-energy excitations of SLCO lie within $\sim 0.1$ eV per electron \cite{tomaschko2012, RevModPhys.82.2421, PhysRevLett.129.047001}, and accordingly we take $\Lambda = 0.1$ per electron.
We set the accuracy $\epsilon_{oe}^\ham = 0.03$ eV per electron, with $\epsilon_{oe}^{d_j}, \epsilon_{sp}, \epsilon_R$ relating to $\epsilon_{oe}^\ham$ in the same way as the Fermi-Hubbard case.
For $M$, we consider operators built from $\nu-$RDMs on a restricted basis of the 72 Cu $3d$ and O $2p$ orbitals.
The minimal case will be three observables per orbital, as before.
For $\gamma$, as we are not able to estimate the overlap of our initial state with $\hilbert_\Lambda$ rigorously, we report the resource estimates for $\gamma = 0.5, 0.1, 0.01$.

\begin{figure}
    \centering
    \includegraphics{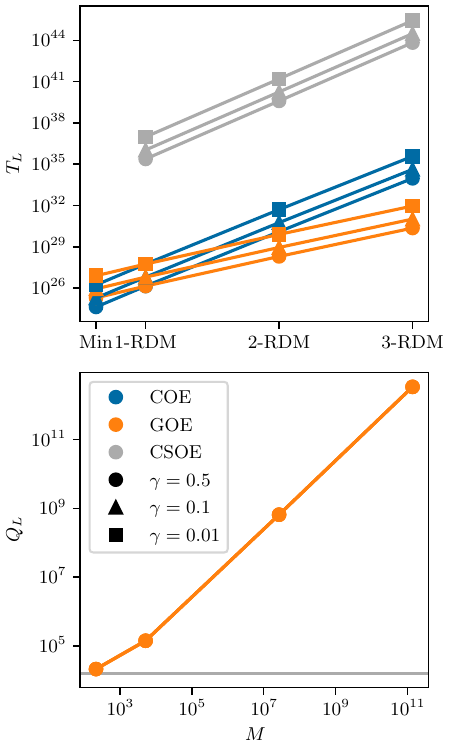}
    \caption{Resource estimates for constructing a low-energy Hamiltonian $\ham_\Lambda$ for an SLCO, an electron-doped cuprate superconductor, with 8 unit cells in the computational cell and an electron doping fraction of $x = 0.125$,  using the quantum implementation of Algorithm~\ref{algo:true_dmd}.
    Logical qubit and T-gate counts, $Q_L$ and $T_L$ are shown for computing $d_i, \functional_\ham$ for a single low-energy state with varying number of observables, and observable estimation techniques.}
    \label{fig:slco_logical}
\end{figure}

We present the results of the resource estimates in Figure~\ref{fig:slco_logical}.
Like in the case of the Fermi-Hubbard model, we find that for the smallest problem cases, the COE method outperforms GOE.
However, including and beyond the 1-RDM, GOE has a significantly lower T-gate count than COE.
Additionally, since the system is more complex, the required qubit overhead for GOE is not as severe as in the case of the Fermi-Hubbard model.

CSOE is infeasible for all problem sizes, being nearly eight orders of magnitude more expensive than either COE or GOE.
The increased cost is near entirely attributable to the $(\lambda^\ham/\epsilon_{oe}^\ham)^2$ factor in CSOE as we find $\lambda^\ham \sim 2\times 10^7$ and $\epsilon_{oe}^\ham \sim 0.5$.
The high accuracy demand relative to the Hamiltonian norm makes the non-Heisenberg scaling classical shadows technique prohibitively expensive.

Independent of observable estimation technique, the resources required for SLCO are nearly 10 orders of magnitude higher than the Fermi-Hubbard model.
We briefly discuss the sources of the increased cost: a) 5 orders of magnitude from $\lambda^\ham$, b) 2 orders of magnitude from $U_\ham$, c) three orders of magnitude from accuracy parameters $\epsilon_{oe}^\ham, \delta$.
The primary overhead is therefore from the block-encoding, which is highly inefficient due to encoding all the electrons in the system, whereas we are only computing properties of the Cu $3d$ and O $2p$ orbitals.
As such, we expect significant improvements to these resource estimates with the introduction of pseudopotentials to quantum algorithms.

Unlike in the Fermi-Hubbard case, we cannot estimate the resources for a ground state simulation here.
Like the Fermi-Hubbard case, doped SLCO is not insulating, and as such $E_1 - E_0 \rightarrow 0$ as the system size increases.
However, while we could do finite size extrapolations for the Fermi-Hubbard case through exact diagonalization on small cells on classical hardware, this not feasible on SLCO.
Ultimately, ground state preparation algorithms that require a finite spectral gap are not really useful in the case of SLCO, but if we were to apply it here the resources required are necessarily more than the resources for the low-energy state.

\section{Conclusion}
\label{sec:conclusion}

We have proposed a quantum-enhanced DMD protocol for building effective Hamiltonians with systematically improvable accuracy.
The efficiency of our protocol depends on meeting three requirements: whether the Hamiltonian can be compressed on a low-energy subspace, whether an efficient set of descriptors for regressing the effective Hamiltonian on that subspace are available, and whether it is possible to efficiently and accurately sample those descriptors on that subspace.
We have described a quantum algorithm that satisfies the third requirement assuming that the first two are met.
Resource estimates for a surface code implementation of our protocol are daunting, but we hope that continued improvements to quantum algorithms and architectures will lead to resource requirements that are more feasible on future utility-scale fault-tolerant quantum computers.

Though assuming the first two requirements might seem to be a technical leap of faith, the preponderance of effective Hamiltonians throughout the physical sciences suggests that there are many systems that are amenable to techniques like DMD.
Here the success of Landau's Fermi liquid theory in describing simple metals is a useful example beyond the more generic Hubbard-like models that we have discussed so far.
The absence of a gap in such a simple metal suggests that a naive computational approach to efficiently constructing states that have significant overlap with its ground state is doomed to fail.
However, the properties of the low-energy excitations close to that ground state are consistent with an effective free-fermion Hamiltonian description at low temperatures.
Thus we expect that DMD, or methods like it, can be employed to construct an effective Hamiltonian-- even for a problem that seems to be hard from a strictly computational perspective.
One might also expect similar luck in using it to describe the properties of systems for which the best low-energy degrees of freedom are far less obvious, as in the case of the high-$T_c$ problems that we have considered.

At the same time, we do not discount the great challenges associated with developing mathematically precise characterizations of the capabilities of classical and quantum computers using tools like computational complexity theory.
It is evident that further work is needed to understand whether the first two requirements are satisfied for Hamiltonians for which quantum utility might be achieved.
This would be greatly facilitated by the development of fault-tolerant quantum computers of sufficient capability to facilitate testing heuristics.
In the meantime, we hope that some of the open problems will prove sufficiently interesting to inspire more mathematical developments.

%\shivesh{Do this}
%\shivesh{Here are some other things:
%\begin{enumerate}
%    \item This method for metals    
%    \item Toby Cubitt "Universal quantum Hamiltonians"
%    \item Perturbative gadgets, NLTS
%\end{enumerate}}

\begin{acknowledgements}
We thank Lucas Kovalsky, Anand Ganti, Max Porter, Soo-Jong Rey, Nick Rubin, and João Rodrigues for helpful technical discussions.
We are also grateful to Setso Metodi for pre-publication review and effective administrative oversight.
All authors were supported by the National Nuclear Security Administration’s Advanced Simulation and Computing Program.
Part of this research was performed while A.D.B. was visiting the Institute for Pure and Applied Mathematics (IPAM), which is supported by the National Science Foundation (Grant No. DMS-1925919).
% ADB: Ask Shivesh about adding his LDRD... why not?

This article has been co-authored by employees of National Technology \& Engineering Solutions of Sandia, LLC under Contract No. DE-NA0003525 with the U.S. Department of Energy (DOE). The authors own all right, title and interest in and to the article and are solely responsible for its contents. The United States Government retains and the publisher, by accepting the article for publication, acknowledges that the United States Government retains a non-exclusive, paid-up, irrevocable, world-wide license to publish or reproduce the published form of this article or allow others to do so, for United States Government purposes. The DOE will provide public access to these results of federally sponsored research in accordance with the DOE Public Access Plan \url{https://www.energy.gov/downloads/doe-public-access-plan}.
\end{acknowledgements}

\bibliography{main}

\appendix
\onecolumngrid
\section{Initialization bounds for Fermi-Hubbard model}
\subsection{Low-energy states \label{app:hubbard}}
\subsubsection{Theory}
Our goal is to develop a scheme for generating initial, simple states, which have a large overlap with a given low-energy subspace, for the hole-doped Fermi-Hubbard model at strong coupling $(U/t > 8)$.
We are particularly interested in the case where the initial state is a product state.
For the Fermi-Hubbard Hamiltonian, product states at any filling can be uniquely identified by $m$, the double occupancy, and $q$ the other necessary quantum numbers.
As such, the product state has notation $|m, q\rangle$.
The task can then be written as
\begin{problem}
    Given an initial product state with double occupancy $m$, $|m, q\rangle$, and a low-energy space $\hilbert_\Lambda$, we want to determine a bound on $\gamma = |||\projector_\Lambda |m,k\rangle ||$
    for the doped 2-D Fermi-Hubbard model at strong coupling.
\end{problem}

The first step in this process is computing the Schrieffer-Wolff (SW) transform of the Fermi-Hubbard Hamiltonian in Equation~\ref{eq:hamhub} with the coefficients scaled by $1/U$ to make the perturbation analysis more evident.
This section follows the discussion by Fazekas \cite{Fazekas1999}.
We first note that for the Fermi-Hubbard Hamiltonian the two terms $\ham_t$ and $\ham_U$ do not commute, so we cannot label eigenstates with the product state double-occupancy $m$.
As such, we will carry out a SW transform, which is unitary, to map the Hamiltonian into a different basis such that:
\begin{equation}
    \ham = \ham_U + \ham_t \rightarrow \tilde{\ham} \equiv e^{-S}\ham e^{S} = \tilde{\ham}_U + \tilde{\ham}_t.
\end{equation}

Here the tilde operators $\tilde{O}$ take the same form as $O$ but in a dressed basis, such that:
\begin{equation}
    \begin{split}
        &\tilde{\ham}_U =
    \sum_{i}\tilde{n}_{i\uparrow}\tilde{n}_{i\downarrow},\ \tilde{n}_{i\sigma} = e^{-S} \tilde{n}_{i\sigma} e^{S} \\
        &\tilde{\ham}_t = \frac{t}{U}\sum_{i \sigma}\tilde{c}_{i\sigma}^\dagger\tilde{c}_{j\sigma} + h.c.,\ \tilde{c}_{i\sigma} =  e^{-S} \tilde{c}_{i\sigma} e^{S}
    \end{split}
\end{equation}
The objective is to particularly find a choice of $S$ such that $[\tilde{\ham}_U, \ham] = 0$.
As such, we could label the eigenstates of $\ham$ with respect to the dressed quantum numbers $\tilde{m}$ of $\tilde{\ham}_U$.

To carry out this task, we will do it order by order in $S$.
Namely, we can expand the similarity transform as:
\begin{equation}
    \ham = e^{S}\tilde{\ham}e^{-S} = \tilde{\ham} + [S, \tilde{\ham}] + [S, [S, \tilde{\ham}]]+ ...
\end{equation}
In order for the commutator with $\tilde{\ham}_U$ to vanish, we require that $\ham$, order by order in $S$ in terms of the coupling parameter $t/U$, to have to have no terms that increase or decrease the double occupancy in the dressed basis.
The procedure for doing this is detailed extensively in Fazekas \cite{Fazekas1999} Ch. 5, and we will just quote the final result to order $t/U$:
\begin{equation}
    \begin{split}
        &\ham = \sum_{i}\tilde{n}_{i\uparrow}\tilde{n}_{i\downarrow} + \tilde{\ham}_{t, 0} + \tilde{\ham}_{t, 2} + O(t^2/U^2) \\
        &\tilde{\ham}_{t, 0} = -\frac{t}{U}\sum_{i, j, \sigma} (1 - \tilde{n}_{i, -\sigma})\tilde{c}^\dagger_{i, \sigma}\tilde{c}_{j, \sigma}(1 - \tilde{n}_{j, -\sigma}) + h.c.  \\
        & \tilde{\ham}_{t, 2} = -\frac{t}{U}\sum_{i, j, \sigma} {n}_{i, -\sigma}\tilde{c}^\dagger_{i, \sigma}\tilde{c}_{j, \sigma}\tilde{n}_{j, -\sigma} + h.c.
    \end{split}
\end{equation}
resulting in a commutation relationship:
\begin{equation}
    [\ham, \tilde{\ham}_U] = 0 + O(t^2/U^2).
\end{equation}
We note that $\tilde{H}_{t, 0}$ corresponds to processes that mediates hopping between a singly-occupied site and a vacant site, and $\tilde{H}_{t, d}$ a process between a singly-occupied site and a doubly-occupied site: neither change the double occupancy of the system.

The first order expression of $S$ is:
\begin{equation}
    S = \tilde{\ham}_{t, d+} - \tilde{\ham}_{t, d-}
\end{equation}
where
\begin{equation}
    \begin{split}
        \tilde{\ham}_{t, d+} = -\frac{t}{U}\sum_{i, j, \sigma} \tilde{n}_{i, -\sigma}\tilde{c}^\dagger_{i, \sigma}\tilde{c}_{j, \sigma}(1 - \tilde{n}_{j, -\sigma}) + h.c.  \\
        \tilde{\ham}_{t, d-} = -\frac{t}{U}\sum_{i, j, \sigma} (1 - \tilde{n}_{i, -\sigma})\tilde{c}^\dagger_{i, \sigma}\tilde{c}_{j, \sigma}\tilde{n}_{j, -\sigma} + h.c.
    \end{split}
\end{equation}
These two terms correspond to hopping processes which take two singly-occupied sites to a double occupancy or vice versa, thereby increasing $(d+)$ and decreasing $(d-)$ the double occupancy in the system.

At this stage, we can label the eigenstates of $\ham$ to lowest order in $t/U$ in terms of $\tilde{m}$.
We will also consider a slightly different interpretation, namely we will instead consider the operator $\ham^{(1)}$ defined as:
\begin{equation}
    \ham^{(1)} = \sum_{i}\tilde{n}_{i\uparrow}\tilde{n}_{i\downarrow} + \tilde{\ham}_{t, 0} + \tilde{\ham}_{t, 2},
\end{equation}
which is beneficial to work with since $[\ham^{(1)}, \tilde{\ham}_U] = 0$ and $||\ham^{(1)} - \ham|| = O(t^2/U^2)$, and as such $\ham^{(1)}$ is a lowest order approximation to $\ham$ but has eigenstates that can be exactly labeled by $\tilde{m}$.
Any results we get to order $O(t^2/U^2)$ can be computed using $\ham^{(1)}$ instead of $\ham$.

In particular, if we want to bound the overlap between $|m, k\rangle$ and $\hilbert_\Lambda$ to $O(t^2/U^2)$, we can instead work with the overlap between $|m, k\rangle$ and $\hilbert^{(1)}_\Lambda$ generated by $\ham^{(1)}$.
We can do this by relating $|m, k\rangle$ to $|\tilde{m}, k\rangle$ by the inverse similarity transform $|m, k\rangle = e^{S}|\tilde{m}, k\rangle$ to first order
\begin{equation}
    |m, k\rangle = (1 + (\tilde{\ham}_{t, d+} - \tilde{\ham}_{t, d-}))|\tilde{m}, k \rangle + O(t^2/U^2).
    \label{eq:dressed_undressed}
\end{equation}
Note that $\tilde{\ham}_{t, d\pm}$ generate product states with $\tilde{m}\pm 1$ double occupancy, and that at most $O(N)$ of these product states can be generated (as they take two singly-occupied sites to two doubly-occupied sites, or vice versa).

As such, if we can guarantee that \textit{all} product states $|\tilde{n}, k\rangle$ for $\tilde{n} \leq \tilde{m}$ have support $||\projector_\Lambda^{(1)}|m,k\rangle||$ lower bounded by $\gamma^{(1)}$ over $\hilbert^{(1)}_\Lambda$, or inversely infidelity $||(I - \projector_\Lambda^{(1)})|m,k\rangle|| = \sqrt{1 - (\gamma^{(1)})^2}$ upper bounded by $\mathcal{I}^{(1)}$ over $\hilbert^{(1)}_\Lambda$, the following holds:
\begin{theorem}
    If one can guarantee an upper bound on the infidelity $\mathcal{I}^{(1)}$ between any dressed product states $|\tilde{n}, k\rangle$ for $\tilde{n} \leq \tilde{m}$ over $\hilbert^{(1)}_\Lambda$ such that $||(I - \projector_\Lambda^{(1)})|m,k\rangle|| < \mathcal{I}^{(1)}$, it is guaranteed that the infidelity $\mathcal{I}$ of any \textit{undressed} product states $|n, k\rangle$ $n \leq \tilde{m}$ over $\hilbert_\Lambda$ takes the following form:
    \begin{equation}
        \mathcal{I} - \mathcal{I}^{(1)} < O(Nt/U).
    \end{equation}
    \label{thm:app1}
\end{theorem}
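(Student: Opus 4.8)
The plan is to bound the undressed infidelity $\mathcal{I}=\|(I-\projector_\Lambda)|n,k\rangle\|$ by the dressed one, $\mathcal{I}^{(1)}$, through two triangle-inequality substitutions: replace the true low-energy projector $\projector_\Lambda$ of $\ham$ by the projector $\projector_\Lambda^{(1)}$ of $\ham^{(1)}$, and replace the undressed product state $|n,k\rangle$ by its dressed partner $|\tilde n,k\rangle$. Using $\|I-\projector_\Lambda\|\le 1$,
\begin{equation}
\begin{split}
\mathcal{I}&\le\|(I-\projector_\Lambda)|\tilde n,k\rangle\|+\big\||n,k\rangle-|\tilde n,k\rangle\big\|\\
&\le\|(I-\projector_\Lambda^{(1)})|\tilde n,k\rangle\|+\|\projector_\Lambda-\projector_\Lambda^{(1)}\|+\big\||n,k\rangle-|\tilde n,k\rangle\big\|,
\end{split}
\end{equation}
and the first term on the last line is $<\mathcal{I}^{(1)}$ by hypothesis. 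It then remains to show the other two terms are $O(Nt/U)$.

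For the projector term I would lean on the perturbative equivalence asserted earlier in this appendix: since $\ham^{(1)}$ commutes with $\tilde\ham_U$ and $\|\ham-\ham^{(1)}\|=O(t^2/U^2)$, a resolvent contour integral around the eigenvalues $\le\Lambda$ gives $\|\projector_\Lambda-\projector_\Lambda^{(1)}\|=O(t^2/U^2)$ whenever $\Lambda$ is separated from the spectra of both operators by a gap of $\Omega(1)$ in the $1/U$-rescaled units. Since $O(t^2/U^2)\subseteq O(Nt/U)$, this contribution is subleading.

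For the state term I would use Eq.~\ref{eq:dressed_undressed}, $|n,k\rangle=(1+S)|\tilde n,k\rangle+O(t^2/U^2)$ with $S=\tilde\ham_{t,d+}-\tilde\ham_{t,d-}$, so that $\big\||n,k\rangle-|\tilde n,k\rangle\big\|\le\|S|\tilde n,k\rangle\|+O(t^2/U^2)$. Acting with $S$ on the product state $|\tilde n,k\rangle$ performs single hops changing the dressed double occupancy by $\pm1$; as already noted this generates a superposition of at most $O(N)$ product states, each entering with a coefficient of size $t/U$ (the prefactor in $\tilde\ham_{t,d\pm}$), so the triangle inequality gives $\|S|\tilde n,k\rangle\|\le O(N)\cdot(t/U)=O(Nt/U)$ --- the term that sets the order of the bound. (Since those product states are mutually orthogonal one could sharpen this to $O(\sqrt N\,t/U)$, but $O(Nt/U)$ is what is claimed.) The hypothesis is imposed for all $\tilde n\le\tilde m$, not just $\tilde n=n$, so that when $\tilde m\ge1$ the lower-double-occupancy components generated by $\tilde\ham_{t,d-}$ are also recognized as lying mostly in $\hilbert_\Lambda$; for the physically relevant $\tilde m=0$ only the single hypothesis at $\tilde n=0$ is used. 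Assembling the three pieces, $\mathcal{I}<\mathcal{I}^{(1)}+O(Nt/U)+O(t^2/U^2)=\mathcal{I}^{(1)}+O(Nt/U)$.

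The main obstacle, and the reason the bound carries an explicit $N$, is controlling the Schrieffer--Wolff remainders uniformly in system size: the generator $S$ has operator norm growing like $Nt/U$, so both the $O(t^2/U^2)$ remainder in Eq.~\ref{eq:dressed_undressed} and the $\|\ham-\ham^{(1)}\|$ estimate --- being bounded by powers of $\|S\|$ --- are genuinely small, and the resulting quantity genuinely an infidelity ($<1$), only when $Nt/U\lesssim1$. A secondary subtlety is the projector step: the quoted $O(t^2/U^2)$ cost of swapping $\hilbert_\Lambda$ for $\hilbert_\Lambda^{(1)}$ needs care at the cutoff, since if $\Lambda$ lies inside the dense lower Hubbard band rather than in a gap, an $O(t^2/U^2)$ energy shift can move appreciable spectral weight across $\Lambda$; a fully rigorous proof would either restrict $\Lambda$ to a genuine gap or bound the weight of the relevant product states near $\Lambda$ separately.
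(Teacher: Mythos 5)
Your proposal is correct and follows essentially the same route as the paper's proof: both rest on the $O(t^2/U^2)$ interchangeability of $\projector_\Lambda$ and $\projector_\Lambda^{(1)}$, the expansion $|n,k\rangle=(1+S)|\tilde n,k\rangle+O(t^2/U^2)$ with at most $O(N)$ generated product states carrying $O(t/U)$ coefficients, and triangle inequalities to isolate an $O(Nt/U)$ correction on top of $\mathcal{I}^{(1)}$. The only differences are cosmetic --- you split off the state difference before swapping projectors (and so invoke the hypothesis only at $\tilde n$, bounding the whole $\|S|\tilde n,k\rangle\|$ term crudely), and you supply a resolvent sketch plus an honest gap caveat for the projector swap that the paper simply asserts.
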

\begin{proof}
    We first note that $\mathcal{I} \equiv ||(I-\projector_\Lambda )|n, k\rangle||$ can be computed, to $O(t^2/U^2)$, using $\projector_\Lambda^{(1)}$ instead of $\projector_\Lambda$.
    Since we are only interested in an expression accurate to $O(t/U)$:
    $\mathcal{I} = ||(I-\projector_\Lambda^{(1)})|n, k\rangle || + O(t^2/U^2).$
    Substituting in the expression Equation~\ref{eq:dressed_undressed},
    and recalling that only $O(N)$ terms can be generated by application of $\tilde{H}_{t, d\pm}$, we find:
    \begin{equation}
        \mathcal{I} = ||(1 - \projector_\Lambda^{(1)}) \Big(|\tilde{n}, k\rangle + \sum_{k_+^\prime \in O(N)} a_{k^\prime_+}|\tilde{n}+1, k^\prime_{+}\rangle + \sum_{k_-^\prime \in O(N)} a_{k^\prime_i}|\tilde{n}-1, k^\prime_{-}\rangle\Big)|| + O(t^2/U^2),
    \end{equation}
    where $a_{k^\prime, \pm}$ are constants which are $O(t/U)$.

     We are guaranteed that all states with double occupancy $\tilde{n} \leq \tilde{m}$ for some $\tilde{m}$ have infidelity upper bounded by $\mathcal{I}^{(1)}$ over $\hilbert_\Lambda^{(1)}$, and additionally that $n \leq \tilde{m}$.
     Additionally employing a triangle inequality to the previous expression we get:
     \begin{equation}
     \begin{split}
         \mathcal{I} &\leq ||(I-\projector_\Lambda^{(1)})\Big(|\tilde{n}, k\rangle + \sum_{k_-^\prime \in O(N)} a_{k^\prime_i}|\tilde{n}-1, k^\prime_{-}\rangle\Big)|| + ||(I - \projector_\Lambda^{(1)}) \sum_{k_+^\prime \in O(N)}  a_{k^\prime_+}|\tilde{n}+1, k^\prime_{+}\rangle|| + O(t^2/U^2) \\
         &< \mathcal{I}^{(1)} + O(Nt/U) + O(t^2/U^2).
     \end{split}
    \end{equation}
\end{proof}

Before moving on we prove a corollary of Theorem~\ref{thm:app1} regarding the relationship between the fit constant in front of the $O(Nt/U)$ term between doped and undoped states.
\begin{corollary}
    Suppose one can guarantee an upper bound on the infidelity $\mathcal{I}^{(1)}$ between any $N-$particle dressed product states $|\tilde{n}, k\rangle$ for $\tilde{n} \leq \tilde{m}$ over $\hilbert_{\Lambda, N}^{(1)}$ (the $N-$particle low-energy Hilbert space), and one determines a concrete bound as in Theorem~\ref{thm:app1}
    \begin{equation}
        \mathcal{I} - \mathcal{I}^{(1)} < \mathcal{C} Nt/U
    \end{equation}
    where $\mathcal{C} \in \mathbb{R}$ is a constant for the case of $p = 0$ doping.

    If you can guarantee the \textit{same} upper bound on the infidelity $\mathcal{I}^{(1)}$ between any $N-1$ particle dressed product states $|\tilde{n}, k\rangle$ for $\tilde{n} \leq \tilde{m}$ for the \textit{same} $\tilde{m}$ over $\hilbert_{\Lambda^\prime N-1}^{(1)}$ for some $\Lambda^\prime$, then the constant $\mathcal{C}$ applies for the $N-1$ particle space as well.
    \label{thm:app1_corr}
\end{corollary}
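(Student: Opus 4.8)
The plan is to revisit the proof of Theorem~\ref{thm:app1} and pin down exactly which features of the lattice and the perturbative expansion produce the constant $\mathcal{C}$ in the bound $\mathcal{I} - \mathcal{I}^{(1)} < \mathcal{C} N t/U$, and then to verify that none of those features changes when one particle is removed from the lattice. Recall that in that proof one substitutes the first-order relation of Equation~\ref{eq:dressed_undressed}, $|n, k\rangle = \bigl(1 + \tilde{\ham}_{t, d+} - \tilde{\ham}_{t, d-}\bigr)|\tilde{n}, k\rangle + O(t^2/U^2)$ (so the undressed $|n, k\rangle$ sits at dressed double occupancy $\tilde{n} = n \leq \tilde{m}$), and bounds $\mathcal{I} = \| (I - \projector_\Lambda^{(1)})|n, k\rangle \|$ by a triangle inequality. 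The term $|\tilde{n}, k\rangle$ itself and every state $|\tilde{n}-1, k'_-\rangle$ produced by $\tilde{\ham}_{t, d-}$ have dressed double occupancy $\leq \tilde{m}$ and hence infidelity $< \mathcal{I}^{(1)}$ over $\hilbert^{(1)}_\Lambda$ by hypothesis; only the states $|\tilde{n}+1, k'_+\rangle$ produced by $\tilde{\ham}_{t, d+}$ can land at double occupancy $\tilde{m}+1$ and so are controlled only through $\|I - \projector_\Lambda^{(1)}\| \leq 1$. Collecting these contributions, the entire $O(Nt/U)$ excess is bounded by the $\ell^{1}$-weight (in the product-state basis) of the first-order correction $(\tilde{\ham}_{t, d+} - \tilde{\ham}_{t, d-})|\tilde{n}, k\rangle$, and calling the best such bound $\mathcal{C} N t/U$ is precisely what defines $\mathcal{C}$.

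The key step is then to make this $\ell^{1}$-weight bound manifestly geometric. Each of $\tilde{\ham}_{t, d+}$ and $\tilde{\ham}_{t, d-}$ is $(t/U)$ times a sum of unit-magnitude hopping operators indexed by (bond, spin, Hermitian-conjugate) labels of the lattice, of which there are $O(N)$ on the 2-D lattice. Applied to a fixed product state $|\tilde{n}, k\rangle$, each such operator either annihilates the state or maps it to a single product state with an amplitude of modulus $t/U$, so the $\ell^{1}$-weight of the correction is at most $(t/U)$ times the number of hopping labels that do not annihilate $|\tilde{n}, k\rangle$, and in particular at most $(t/U)$ times the total number of labels, i.e.\ a universal multiple of the coordination number times $N t/U$. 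This upper bound on $\mathcal{C}$ refers only to the lattice graph and the ratio $t/U$; it is blind to the number of particles, the filling $p$, the remaining quantum numbers $q$, and the cutoff. On the same lattice, moving from $N$ particles (half-filling, $p=0$) to $N-1$ particles (one hole) leaves the lattice graph---and therefore the set of hopping labels and the $(t/U)$ prefactor---unchanged, and can only \emph{reduce} the number of labels that fail to annihilate a given $(N-1)$-particle product state. Hence the same $\mathcal{C} N t/U$ bounds the $\ell^{1}$-weight of the first-order correction for $(N-1)$-particle dressed product states; combining this with the assumed uniform bound $\mathcal{I}^{(1)}$ over $\hilbert^{(1)}_{\Lambda', N-1}$ for all such states with $\tilde{n} \leq \tilde{m}$, the triangle-inequality argument of Theorem~\ref{thm:app1} reproduces $\mathcal{I} - \mathcal{I}^{(1)} < \mathcal{C} N t/U + O(t^2/U^2)$ with the very same $\mathcal{C}$.

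The main obstacle is to confirm that the $p=0$ derivation of $\mathcal{C}$ really was of this purely geometric, upper-bound character rather than a sharper estimate that secretly exploited half-filling. As long as the $p=0$ constant was recorded as a coordination-number factor times $N t/U$, no retuning is needed: an upper bound valid when the maximal number of hopping labels survive is automatically valid when fewer survive. Two minor points round this out and should be stated explicitly: the distinction between the dressed operators $\tilde{c}_{i\sigma}, \tilde{n}_{i\sigma}$ and their bare counterparts enters only at $O(t^2/U^2)$, so it does not affect $\mathcal{C}$; and the cutoff $\Lambda'$ enters the argument only through the assumed quantity $\mathcal{I}^{(1)}$, so allowing $\Lambda' \neq \Lambda$ costs nothing here.
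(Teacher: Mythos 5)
Your proposal is correct and follows essentially the same route as the paper: the paper's proof simply notes that the first-order dressing relation holds verbatim for $(N-1)$-particle product states with a vacancy and then reruns the triangle-inequality/counting argument of Theorem~\ref{thm:app1}, which is exactly what you do. Your added observation that the $O(Nt/U)$ excess is controlled by the lattice-determined number of hopping labels (each contributing amplitude $t/U$ or zero, hence insensitive to removing a particle) makes explicit a point the paper's one-line proof leaves implicit, and your caveat about whether the fitted constant $\mathcal{C}$ is of upper-bound character matches the paper's empirical check (Figure~\ref{fig:doping}) that doping only lowers the infidelity.
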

\begin{proof}
    Suppose the conditions in Theorem~\ref{thm:app1} is satisfied for all $N$ particle states with double occupancy $\leq \tilde{m}$.
    Suppose have a product state $|n, k\rangle_{N-1}$, a $N-1$ particle state with a single vacancy.
    We can relate $|n, k\rangle_{N-1}$ to the dressed $N-1$ particle product states as
    \begin{equation}
        |n, k\rangle_{N-1} = (1 + (\tilde{\ham}_{t, d-} - \tilde{\ham}_{t, d+}))|\tilde{n}, k\rangle_{N-1} + O(t^2/U^2).
    \end{equation}
    We now just follow the same proof as in Theorem~\ref{thm:app1}, returning the exact same result.
\end{proof}

The next step to answering the problem posited here would then be to determine $\mathcal{I}^{(1)}$ for some values of $\tilde{m}$ and $\Lambda$.
For this we posit the following theorem:
We summarize this is an theorem:
\begin{theorem}
    For the (scaled) Fermi-Hubbard Hamiltonian with hole-doping fraction $\Pi$, a choice of $\Lambda = 3pNt/U$ guarantees that $\mathcal{I}^{(1)} = 0$ for all product states with $\tilde{m} = 0$.
    \label{thm:app2}
\end{theorem}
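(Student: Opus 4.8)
The plan is to show the stronger fact that \emph{every} state in the dressed $\tilde m = 0$ sector at the relevant hole number lies entirely inside $\hilbert_\Lambda^{(1)}$ when $\Lambda = 3pNt/U$, so that in particular $||(I-\projector_\Lambda^{(1)})|\tilde m = 0,k\rangle|| = 0$ for the product states appearing in Theorem~\ref{thm:app1}. The first step exploits the identity $[\ham^{(1)}, \tilde\ham_U] = 0$ established above: since $\ham^{(1)}$ commutes with the dressed double-occupancy operator (and conserves particle number), each fixed-$(\tilde m, N_h)$ sector is $\ham^{(1)}$-invariant and the spectral projector $\projector_\Lambda^{(1)}$ decomposes accordingly. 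Because $|m = 0, k\rangle$ is an eigenstate of the bare double-occupancy operator with eigenvalue $0$, the dressed state $|\tilde m = 0, k\rangle = e^{-S}|m = 0, k\rangle$ is an eigenstate of $\tilde\ham_U$ with eigenvalue $0$, hence lives in the $\tilde m = 0$ sector at hole number $N_h = pN$. Therefore $\mathcal{I}^{(1)} = 0$ will follow once every eigenstate of $\ham^{(1)}$ in that sector has energy $\le \Lambda$.

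The second step pins down $\ham^{(1)}$ on the $\tilde m = 0$ sector. There $\tilde\ham_U$ acts as $0$, and $\tilde\ham_{t,2}$ — built from dressed double-occupancy operators on both the source and target sites — annihilates the sector, so $\ham^{(1)}$ restricted to $\tilde m = 0$ equals $\tilde\ham_{t,0}$ restricted to $\tilde m = 0$, and $\tilde\ham_{t,0}$ maps this sector into itself. Since every tilde operator is $e^{-S}(\cdot)e^{S}$ of its bare counterpart and $S$ is anti-Hermitian, $\tilde\ham_{t,0}$ is unitarily equivalent to the bare Gutzwiller-projected hopping $\hat\ham_{t,0} = -\tfrac{t}{U}\sum_{\langle ij\rangle,\sigma}(1-n_{i,-\sigma})c^\dagger_{i\sigma}c_{j\sigma}(1-n_{j,-\sigma}) + \mathrm{h.c.}$ on the bare no-double-occupancy subspace with $N_h = pN$ holes, with the conjugating unitary $e^{S}$ carrying the dressed sector onto the bare one. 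It thus suffices to upper bound $||\,\hat\ham_{t,0}\big|_{m=0,\,N_h = pN}\,||$.

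The third step, which I expect to be the delicate one, is this operator-norm estimate. In the bare occupation-number basis $\hat\ham_{t,0}$ connects a configuration only to those obtained by moving a single electron into an adjacent empty site; each such transition is labelled uniquely by its destination hole together with one of that hole's occupied neighbours, so any fixed configuration has at most $z N_h$ nonzero transitions (none double-counted, since an edge joining two holes carries none), each of magnitude $t/U$, with $z$ the coordination number. Because $\hat\ham_{t,0}$ is Hermitian, its operator norm is at most the maximal absolute row sum, giving $||\,\hat\ham_{t,0}\big|_{m=0,\,N_h=pN}\,|| \le z N_h t/U = z p N t/U$. For the $2\times N/2$ ladder used throughout the resource estimates every site has exactly $z = 3$ neighbours, so all $\tilde m = 0$ eigenstates of $\ham^{(1)}$ lie in the band $[-3pNt/U, 3pNt/U]$, in particular have energy $\le \Lambda = 3pNt/U$; by the first step this forces $\projector_\Lambda^{(1)}|\tilde m = 0, k\rangle = |\tilde m = 0, k\rangle$, i.e.\ $\mathcal{I}^{(1)} = 0$. (An overall constant may be added to respect the convention $E_0 > 0$; it shifts $\Lambda$ equally and leaves the argument intact.) The points requiring the most care are the combinatorial bookkeeping behind the clean $z N_h$ count — crucially that Gutzwiller projection makes the number of available hops scale with the number of \emph{holes} rather than sites — and the use of the Hermitian $\ell^\infty \to \ell^\infty$ norm bound to turn that count into a spectral bound.
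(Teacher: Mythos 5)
Your proposal is correct and lands on the same numerical threshold $\Lambda = 3pNt/U$, but the key spectral step is genuinely different from the paper's. The paper argues on the $\tilde m = 0$ band by noting that only $\tilde\ham_{t,0}$ acts there (the $pN$ holes hop among singly occupied sites) and then \emph{overestimates} the resulting bandwidth by comparison with non-interacting particles that are allowed to pass through each other, each contributing at most the single-particle bandwidth $3t/U$ of the $2\times N/2$ ladder; this comparison is physically transparent and also yields the general-$\tilde m$ band ranges in Eq.~\ref{eq:doped_bands}, but as stated it is a heuristic "larger spread of energies" argument rather than a controlled bound. You instead make the sector structure explicit — $[\ham^{(1)},\tilde\ham_U]=0$ plus particle-number conservation gives invariance of the dressed $\tilde m=0$, $N_h=pN$ sector, and conjugation by $e^{S}$ reduces everything to the bare Gutzwiller-projected hopping on the $m=0$ subspace — and then bound its spectrum by a Gershgorin/max-row-sum estimate: at most $zN_h$ off-diagonal entries of magnitude $t/U$ per configuration, with $z=3$ on the ladder, so $\|\hat\ham_{t,0}|_{m=0}\|\le 3pNt/U$ and every $\tilde m=0$ eigenstate of $\ham^{(1)}$ lies below $\Lambda$, forcing $\mathcal{I}^{(1)}=0$. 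Your route is more elementary and rigorous where the paper is hand-waving, and it cleanly exposes why the bound scales with the number of \emph{holes}; what it gives up is the paper's uniform treatment of all $\tilde m$ sectors, which the free-particle comparison delivers in one stroke but which is not needed for this theorem. Your boundary-condition caveat (open ends of the ladder only reduce $z$) and the remark about the $E_0>0$ normalization are both consistent with the paper's conventions.
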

\begin{proof}
This can be done by looking at the energy bands of $\ham^{(1)}$, as these correspond to the eigenstates $|\tilde{m}, k\rangle$.
We will first begin discussing the case with no doping, and then include doping afterwards.

Starting with the $\tilde{m} = 0$ subspace of states, we will find that $\tilde{\ham}_U = 0$ in this subspace.
As for $\tilde{\ham}_{t, 0}$ , this corresponds to hopping processes where a vacancy is next to singly occupied site.
In the case where there is no doping, this term has no effect on the $\tilde{m} = 0$ subspace, as there are no vacancies.
Similarly for $\tilde{\ham}_{t, 2}$, corresponding to hopping processes where a double occupancy is next to a singly occupied site, there are no such states in the subspace.
Thereby, we see that to $O(t/U)$ the energy bands of the $\tilde{m} = 0$ subspace is flat.

For the $\tilde{m}= 1$ subspace, the operator $\tilde{\ham}_U = 1$.
For $\tilde{\ham}_{t, 0}$, we now have a single vacancy which can now hop around, and similarly $\tilde{\ham}_{t, 2}$, a single double-occupancy can hop around.
The difficulty is that at order $t/U$ the vacancy and double-occupancy cannot hop \textit{through} each other, as there is no two-electron hopping process.
However, we can \textit{bound} the width of this band by noting that if we considered an artificial hopping Hamiltonian that \textit{did} allow them to hop through each other, this Hamiltonian would yield a larger spread of energies.
This Hamiltonian also corresponds to two non-interacting bosons hopping with energy $-t$, and yields states with total energies in the range $[-6t, 6t]$ for the $2 \times N/2$ geometry in consideration in this work.

One can easily generalize this analysis to find that the eigenvalues of $\ham^{(1)}$ are guaranteed to be in the following ranges:
\begin{equation}
    E^{(1)}_{\tilde{m}, k} \in [\tilde{m} - 6\tilde{m}\frac{t}{U}, U\tilde{m} + 6\tilde{m}\frac{t}{U}] + O(\frac{t^2}{U^2}).
\end{equation}
If one includes doping, the story is identical with a small adjustment: instead of having just $\tilde{m}$ double-occupancies and $\tilde{m}$ vacancies, you have an additional $pN$ vacancies.
Again using the trick of overestimating the eigenvalue ranges, we will find that
\begin{equation}
    E^{(1)}_{\tilde{m}, k} \in [\tilde{m} - 3(2\tilde{m} + pN)\frac{t}{U}, \tilde{m} + 3(2\tilde{m} + pN)\tilde{m}\frac{t}{U}] + O(\frac{t^2}{U^2}).
    \label{eq:doped_bands}
\end{equation}

Thereby, if we select for example $\Lambda = 3pNt/U$, all $\tilde{m} = 0$ eigenstates will be contained within $\hilbert_\Lambda$, ensuring that $\mathcal{I}^{(1)} = 0$ for all $\tilde{m} = 0$ states.
\end{proof}

We can now answer Problem 5 as posited above, once again as a theorem.
\begin{theorem}
    For the Fermi-Hubbard model with $N$ sites and $p > 0$ hole-doping fraction, by selecting $\Lambda = 3pNt$ we can guarantee that for all product states $|m, k\rangle$ with double occupancy $m = 0$ will have an infidelity with respect to $\hilbert_\Lambda$ upper bounded by
    \begin{equation}
        \mathcal{I} \equiv \sqrt{1 - \gamma^2} < \mathcal{C} Nt/U + O(t^2/U^2)
    \end{equation}
    where $\mathcal{C} \in \mathbb{R}$ is a constant.
\end{theorem}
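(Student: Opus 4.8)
The plan is to chain together the two appendix results, Theorem~\ref{thm:app2} and Theorem~\ref{thm:app1}, which were set up precisely for this purpose, using the first-order dressed-to-undressed relation in Equation~\ref{eq:dressed_undressed}. First I would pin down the scaling convention: in the $1/U$-scaled form of the Hamiltonian used in this appendix, the unscaled cutoff $\Lambda = 3pNt$ is the scaled cutoff $3pNt/U$, which is exactly the value appearing in Theorem~\ref{thm:app2}. Theorem~\ref{thm:app2} then tells us that every dressed product state $|\tilde m, k\rangle$ with $\tilde m = 0$ lies entirely inside $\hilbert^{(1)}_\Lambda$, i.e. the dressed-basis infidelity obeys $\mathcal{I}^{(1)} = 0$ for all such states. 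Since the only $\tilde n$ with $\tilde n \le \tilde m = 0$ is $\tilde n = 0$ itself, the hypothesis of Theorem~\ref{thm:app1} — a uniform bound $\mathcal{I}^{(1)}$ on the dressed-basis infidelity of all $|\tilde n, k\rangle$ with $\tilde n \le \tilde m$ — is met with $\tilde m = 0$ and $\mathcal{I}^{(1)} = 0$.

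Next I would invoke Theorem~\ref{thm:app1} as a black box: it yields $\mathcal{I} - \mathcal{I}^{(1)} < O(Nt/U)$ for the undressed product states $|m, k\rangle$ with $m \le \tilde m = 0$, hence $\mathcal{I} < O(Nt/U) + O(t^2/U^2)$ for all $m = 0$ product states, which is already the claimed form once one writes $O(Nt/U) = \mathcal{C} Nt/U$. To exhibit $\mathcal{C}$ explicitly rather than hidden in the $O(\cdot)$, I would re-examine the single estimate inside the proof of Theorem~\ref{thm:app1} at $\tilde m = 0$: in Equation~\ref{eq:dressed_undressed} the double-occupancy-lowering piece $\tilde\ham_{t, d-}$ annihilates $|\tilde 0, k\rangle$ (nothing sits below zero double occupancy), so $|0, k\rangle = |\tilde 0, k\rangle + \sum_{k'_+} a_{k'_+} |\tilde 1, k'_+\rangle + O(t^2/U^2)$, where the sum runs over at most $O(N)$ states produced by the nearest-neighbor hops in $\tilde\ham_{t, d+}$ and each $|a_{k'_+}| = O(t/U)$. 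Applying $(I - \projector^{(1)}_\Lambda)$, using $\mathcal{I}^{(1)} = 0$ on the $|\tilde 0, k\rangle$ piece and $||I - \projector^{(1)}_\Lambda|| \le 1$ with the triangle inequality on the remainder, gives $\mathcal{I} \le \sum_{k'_+} |a_{k'_+}| + O(t^2/U^2)$, and the prefactor — the number of lattice edges (of order $3N/2$ for the $2 \times N/2$ geometry) times the $O(1)$ ratio of each amplitude to $t/U$ — is the constant $\mathcal{C}$. Then $\mathcal{I} \equiv \sqrt{1 - \gamma^2} < \mathcal{C} Nt/U + O(t^2/U^2)$, as claimed.

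The only genuinely non-mechanical point — and the step I would be most careful about — is the bookkeeping between the dressed and undressed pictures: confirming that $e^{S}$ acting on $|\tilde 0, k\rangle$ generates, to first order, only $\tilde m = 1$ components (so that no spurious $\tilde m = -1$ term appears), that their number is truly $O(N)$ rather than larger, and that every dropped remainder is uniformly $O(t^2/U^2)$ across all $k$ so that the infidelities measured against $\hilbert_\Lambda$ and against $\hilbert^{(1)}_\Lambda$ may be interchanged at this order. I would also flag, though it is not needed for the bound itself, that the estimate is only informative when $U/t \gtrsim \mathcal{C} N$, consistent with the regime $U/t = 12$, $N \le 22$ used in the resource estimates. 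Everything beyond this is just the composition of the two appendix theorems and requires no new inequality.
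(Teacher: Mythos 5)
Your proposal is correct and is essentially the paper's own argument: the paper proves this theorem implicitly by composing Theorem~\ref{thm:app2} (which, with the scaled cutoff $3pNt/U$ equivalent to the unscaled $\Lambda = 3pNt$, gives $\mathcal{I}^{(1)}=0$ for all $\tilde m = 0$ dressed product states, doping included) with Theorem~\ref{thm:app1} (which converts this to $\mathcal{I} < \mathcal{C}Nt/U + O(t^2/U^2)$ for the undressed $m=0$ states). Your extra step of extracting $\mathcal{C}$ explicitly from the $\tilde m = 0$ specialization of Equation~\ref{eq:dressed_undressed} (the lowering term annihilating, at most $O(N)$ raised states with $O(t/U)$ amplitudes) is a refinement the paper does not carry out analytically — it instead determines $\mathcal{C}$ numerically via the regression in Figure~\ref{fig:extrapolation} and transfers it to the doped case through Corollary~\ref{thm:app1_corr} — but it does not change the structure of the proof.
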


\begin{figure}
    \centering
    \includegraphics[width = \textwidth]{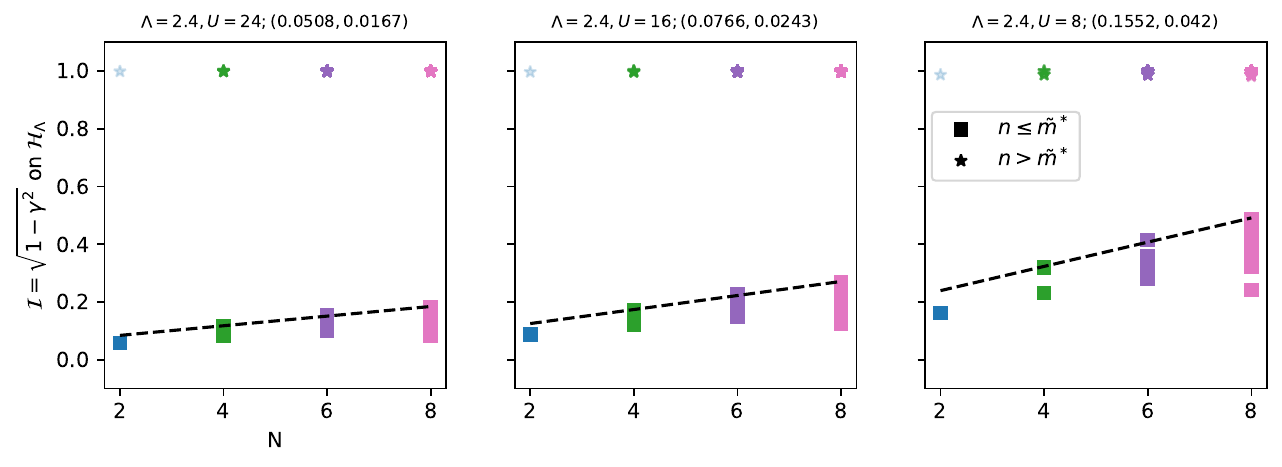}
    \caption{Figure demonstrating Theorem~\ref{thm:app1} for a half-filled 2-D Fermi-Hubbard model.
    Explicitly calculated infidelities of product states on $\hilbert_\Lambda$, $\Lambda = 0$ (lower Fermi-Hubbard band) with $U = 24, 16, 8$ and for system sizes $N = 2$ to 8 are shown.
    Full panels show both the product states with $n \leq \tilde{m} = 0$ and $n > \tilde{m} = 0$, with a linear curve fit to the upper edge of the prior.
    The coefficients for the regression with respect to $N$ are posted above the inset figures, with format (intercept, slope).}
    \label{fig:extrapolation}
\end{figure}
\subsubsection{Numerical estimates}
Our final step is then to determine the prefactor $\mathcal{C}$ for our problem so that we can determine what the infidelity $\mathcal{I}$ should be for the \textit{undressed} product states.
We do this via numerical simulation and regression.
Recalling Corollary~\ref{thm:app1_corr} in conjuction with Theorem~\ref{thm:app2}, by taking $\Lambda = 3pNt/U$, we can use the fit constants $\mathcal{C}$ from the $p=0$ calculations to get information about the $p > 0$ system.

\begin{figure}
    \centering
    \includegraphics[width = 0.45\textwidth]{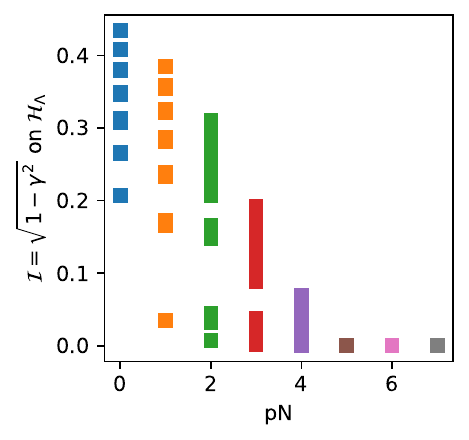}
    \caption{Figure demonstrating Corollary~\ref{thm:app1_corr}, with computed infidelities for $N = 8$, $U/t = 8$, 1-D Fermi-Hubbard model for various dopings.
    A value of $\Lambda = 3pNt$ was used.
    The demonstration illustrates that including doping reduces the infidelity over $\hilbert_\Lambda$, meaning that the bound from Theorem~\ref{thm:app1} can be used to extrapolate doped infidelities as well.}
    \label{fig:doping}
\end{figure}

In Figure~\ref{fig:extrapolation}, we demonstrate results for simulation on a 2-D, half-filled ($p = 0$) Fermi-Hubbard model as a function of the system size, $N$.
We carry out exact diagonalization of the Fermi-Hubbard Hamiltonian for various system sizes, and compute the support of product states with $\tilde{m} = 0$, which has linearly decreasing support with $\hilbert_\Lambda$, and $\tilde{m} > 0$, which has exponentially decreasing support with $\hilbert_\Lambda$, as proven in Theorem~\ref{thm:app1}.
We fit the linear increase in infidelity, and use it extrapolate what the infidelity should be for larger $N$.

Before presenting the extrapolation, we also present results on the bound comparison between doped and undoped infidelities, in Figure ~\ref{fig:doping}.
Here we show the infidelity on $\hilbert_\Lambda$ for the $\tilde{m} = 0$ states as a function of doping, and we see that with doping the infidelity of the states is lower than without doping.
This confirms Corollary~\ref{thm:app1_corr}.

Finally, in Figure~\ref{fig:niter_extrapolation} we present the extrapolated results using the simulation data in Figure~\ref{fig:extrapolation} and Theorem~\ref{thm:app1}.
We present the case $U/t = 12$, $p = 0$, for different system sizes.
Linear extrapolation is used for the infidelities and then converted to $N_{iter}$ for state preparation via the equality:
\begin{equation}
    N_{iter} = \lceil \frac{1}{2}(\frac{\pi}{\arcsin{\sqrt{1 - \mathcal{I}^2}}} - 1) \rceil
\end{equation}
We find that for system sizes up to $N \leq 22$, we can carry out
state preparation using only 1 iteration of amplitude amplification for the choice of $\Lambda$ above, and only requiring two iterations for $N = 24$.

\begin{figure}
    \centering
    \includegraphics[width = 0.45\textwidth]{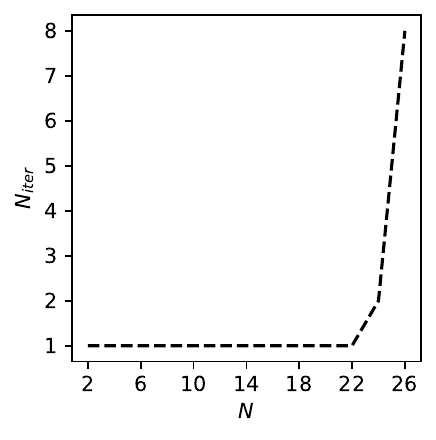}
    \caption{Number of iterations required for state preparation on a $U/t = 12$ 2-D Fermi-Hubbard model.
    $N_{iter}$ is computed by extrapolating simulated infidelities seen in Figure~\ref{fig:extrapolation}, and then computing the number of iterations from the extrapolated infidelity.
    Note that by Theorem~\ref{thm:app2}, this is an upper bound for the number of iterations required for the doped case as well.
    }
    \label{fig:niter_extrapolation}
\end{figure}

\subsection{Ground state}
For our ground state resource estimates, we require an estimate for $\delta, \gamma$ for $N = 22$.
We accomplish this, again, by exact diagonalization of a $2 \times N/2$ Fermi-Hubbard model for small $N$ and extrapolation using theoretical arguments to large $N$.
We begin with the Neel product state, the state which has largest overlap with the true ground state.
The extrapolation form for $\gamma$ is taken to be exponentially vanishing in $N$, as is generally expected for the ground state of interacting systems.
The extrapolation form for $\delta$ is taken to be $1/N$, as is generally expected for a finite-size gap to vanish \textit{at least} as fast as $1/N$ as $N \rightarrow \infty$.

\begin{figure}[H]
    \centering
    \includegraphics{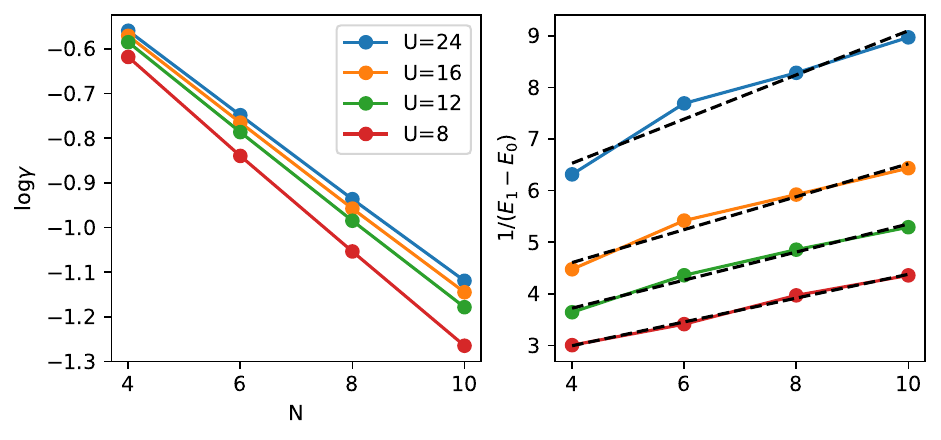}
    \caption{Computed $\gamma$ for a ground state approximation problem and spectral gap for the 2-D Fermi-Hubbard model, shown in dots.
    The initial ground state is taken to be the Neel spin state, namely the product state with largest overlap with the true ground state.
    $\gamma$ follows an exponentially decreasing trend and the spectral gap vanishes at least as fast as $1/N$.
    The dotted lines correspond to extrapolation fits on the data, and are used to determine bounds on $\gamma, \delta$ for $N = 22$.}
    \label{fig:app_ground}
\end{figure}

The results of our simulations are shown in Figure~\ref{fig:app_ground}.
It is clear that the two extrapolation forms apply for $\gamma$ and $\delta$ respectively.
After taking the fit forms for $U/t = 12$, we can extrapolate to $N=22$, the value required in the manuscript, to find $\gamma = 0.093$ and $\delta = 0.12/\lambda^\ham$.

\section{Physical Implementation\label{app:physical}}
This section provides a very brief overview of the approaches in \cite{Litinski2019magicstate} and \cite{litinski2019}, as well as the search algorithm used to find efficient asymmetric code distances for the magic state factories reported.

\subsection{Layouts}
The largest layout, ``Fast,'' exposes both the logical $X$ and $Z$ operator to an auxilliary region that makes contact with all the logical data qubits.
It is therefore able to route entanglement without any intermediate rotation of the code patches needed to access either, or perhaps both, logical operators at the same time.
The smallest layout, ``Compact','  strives to minimize the size of the routing region, but at the expense of requiring surface patch rotations (and therefore more time) to align the appropriate logical $X$ or $Z$ operator with the auxillary routing patch.
The ``Intermediate'' layout provides a mixture of the performances of the prior two approaches.
As noted in \cite{litinski2019}, the fast layout has the best total space-time cost (i.e., the extra physical qubits used trade out for far faster state consumption).
\subsection{Magic state distillation}
In \cite{Litinski2019magicstate}, it was recognized that it is wasteful to protect all qubits from $X$ and $Z$ errors at the same level.
In particular, some errors are far more harmful to the overall production of magic states than others.
For example, many $Z$ errors are detected by the final projection step of the 15-to-1 protocol.
While reducing the length of the $Z$ logical operator will cause these failures to occur more frequently, the overall spacetime cost of the process may be reduced.
Similar logic is applied to argue that level two distillation (i.e., where the input $T$ states are the output of a prior round of magic distillation) ought to be protected to a greater degree than the level one distillation qubits.
In all, \cite{Litinski2019magicstate} identifies three parameters: $dx$, a symmetric level of protection for the logical qubit containing the output $T$ state as well as the $X$ code distance for all other qubits; $dz$, the $Z$ code distance for all other qubits; and $dm$, the number of rounds of error correction used for each logical clock cycle.
\subsection{Parameter search}
Finding the ideal values for these parameters is, however, a nontrivial task.
In principle, a global search over all $dx$, $dz$, $dm$, (and higher-level parameters, in the mutli-stage case), should be performed, returning the values that minimize the total space-time cost (or perhaps footprint, if that is the objective), constrained by the requirement of sufficiently high output state fidelity.
Unfortunately, conventional constrained integer optimization approaches would require encoding the result of a density matrix simulation of the $T$ state factory into a constraint parameter.
This may be cumbersome or impossible, depending on the specific integer optimization approach used.
Instead, we implement a heuristic direct search which finds locally optimal integer parameters.
The approach is naive, but effective:
\begin{enumerate}
    \item Start with an initial guess.
    Check whether the constraint is satisfied or not.
    \item Double all integer terms of the guess until the constraint is satisfied. Alternatively, halve the terms, rounding down, until the constraint is not satisfied.
    \item Bisect by repeating the above until a fixed factor of the initial guess cannot be decreased without failing to satisfy the fidelity constraint.
    \item Fixing all but one integer parameter, repeat the above process.
    This finds locally optimal values for each parameter.
\end{enumerate}
This is the approach used to generate the surface code patch shapes used to determine the estimates in Table~\ref{tab:resource_big}.

\end{document}